\def\submitteddate{May 6, 2011}
\begin{document}

\newcommand{\creationtime}{\today \ \ @ \theampmtime}

\pagestyle{fancy}
\renewcommand{\headrulewidth}{0cm}
\chead{\footnotesize{Appuswamy-Franceschetti-Karamchandani-Zeger}}
%\rhead{\footnotesize{\reviseddate}}
%\rhead{\footnotesize{\creationtime}}
\rhead{\footnotesize{\submitteddate}}
%\lhead{Version \version}
\lhead{}
\cfoot{Page \arabic{page} of \pageref{LastPage}} % Need to include package ``lastpage''

\renewcommand{\qedsymbol}{$\blacksquare$} % Used to control qed generated by \end{proof}

%\numberwithin{equation}{section}    % Make sure this line comes before the next line redefining \theequation
%\renewcommand{\theequation}{\arabic{section}.\arabic{equation}}

\newtheorem{theorem}              {Theorem}     [section]
\newtheorem{lemma}      [theorem] {Lemma}
\newtheorem{corollary}  [theorem] {Corollary}
\newtheorem{proposition}[theorem] {Proposition}

\theoremstyle{remark}
\newtheorem{remark}     [theorem] {Remark}
\newtheorem{algorithm}  [theorem] {Algorithm}
\newtheorem{conjecture} [theorem] {Conjecture}

\theoremstyle{definition}         
\newtheorem{definition} [theorem] {Definition}
\newtheorem{example}    [theorem] {Example}
\newtheorem*{claim}  {Claim}
\newtheorem*{notation}  {Notation}
%%%%%%%%%%%%%%%%%%%%%%%%%%%%%%%%%%%%%%%%%%%%%%%%%%
% --------------------------------------------
\newcommand{\Comment}[1]{& [\mbox{from  #1}]}
\newcommand{\mc}{\mathcal}
\newcommand{\mb}{\mathbf}
\newcommand{\abs}[1]{\left\lvert #1 \right\rvert}
\newcommand{\card}[1]{\abs{#1}}
\newcommand{\Network}{\mathcal{N}}
\newcommand{\node}{v}
\newcommand{\edge}{e}
\newcommand{\nodes}{\mathcal{V}}
\newcommand{\vertices}[1]{\nodes}
\newcommand{\inNodes}[1]{V_{#1}}
\newcommand{\Capacity}[1]{w(#1)}
\newcommand{\edges}{\mathcal{E}}
\newcommand{\inEdges}[1]{\mathcal{E}_{in}(#1)}
\newcommand{\outEdges}[1]{\mathcal{E}_{out}(#1)}
\newcommand{\encodingFunction}[2]{h_{#1,#2}}
\newcommand{\Dimension}[1]{\text{Dim}(#1)}
\newcommand{\TransVar}[1]{\hat{#1}}
\newcommand{\RecVar}[1]{\tilde{#1}}
\newcommand{\Code}{\mathcal{C}}
\newcommand{\Directed}[1]{\vec{#1}}
\newcommand{\alphabet}{\mathcal{A}}
\newcommand{\sources}{S}
\newcommand{\decodeAlphabet}{\mathcal{B}}
\newcommand{\delay}{\delta}
\newcommand{\integer}{\mathbb{Z}}
\newcommand{\remove}[1]{}
\newcommand{\removeTrue}[1]{}
\newcommand{\sourceSymbol}{\alpha}
\newcommand{\sourceVec}[1]{\sourceSymbol\!\left(#1\right)}
\newcommand{\sumVec}{p}
\newcommand{\edgeVar}[1]{z_{#1}}
\newcommand{\edgeSet}{C}
\newcommand{\cut}[1]{{min-cut{$\left(#1\right)$}}}
\newcommand{\cuts}[1]{\Lambda({#1})}
\newcommand{\minCut}[1]{{min-cut{$\left(#1\right)$}}}
\newcommand{\maxRange}[1]{\hat{R}_{#1}}
\newcommand{\maxRangeB}[1]{R_{#1}}
\newcommand{\receiver}{\rho}
\newcommand{\source}{\sigma}
\newcommand{\sourceVecPrime}[1]{\sourceSymbol'\!\left(#1\right)}
\newcommand{\decodFunct}{\psi}
\newcommand{\setOfMessages}{x}
\newcommand{\vecInst}{\mathbf{\alpha}}
\newcommand{\messageVecInst}{w}
\newcommand{\encodeMatrix}[1]{M_{#1}}
\newcommand{\VecComp}[2]{#1_{#2}}
\newcommand{\Tree}{\mathcal{T}}
\newcommand{\tree}{T}
\newcommand{\treeIndex}[1]{J_{#1}}
\newcommand{\gap}[1]{\mathcal{C}_{\mbox{{\scriptsize gap}}}\!\left(#1\right)}
\newcommand{\codCap}[1]{\mathcal{C}_{\mbox{{\scriptsize cod}}}\!\left(#1\right)}
\newcommand{\linCodCap}[1]{\mathcal{C}_{\mbox{{\scriptsize lin}}}\!\left(#1\right)}
\newcommand{\scalLinCodCap}[1]{\mathcal{C}_{\mbox{{\scriptsize s-lin}}}\!\left(#1\right)}
\newcommand{\routCap}[1]{\mathcal{C}_{\mbox{{\scriptsize rout}}}\!\left(#1\right)}
\newcommand{\edgeFunct}[1]{h^{(#1)}}

\newcommand{\NbinaryBlocks}[1]{{#1}^{(M)}}
\newcommand{\sumset}[1]{\mbox{{\it sum}}\!\left(#1\right)}
\newcommand{\zeroAt}[1]{h^{(#1)}}
\newcommand{\invZeroAt}[1]{h_{#1}^{-1}}
\newcommand{\maxWeight}[1]{\mbox{HW}_{\mbox{\tiny max}}\left(#1\right)}
\newcommand{\cardConst}[1]{\gamma(#1)}
\newcommand{\hammingWeight}[1]{\mbox{HW$\left(#1\right)$}}
\newcommand{\cardsources}{s}
\newcommand{\NOPROCESS}[1]{}
\newcommand{\FunIdentity}{f_{id}}
\newcommand{\FunMajority}{f_{maj}}
\newcommand{\FunSum}{f_{sum}}
\newcommand{\FunParity}{f_{parity}}
\newcommand{\FunMaximum}{f_{max}}
\newcommand{\FunMinimum}{f_{min}}
\newcommand{\footprintsize}{footprint size }
\newcommand{\footprintsizes}{footprint sizes }
\newcommand{\IndicesToIndex}{h}
\newcommand{\steinerNumber}[1]{\Pi\!\left(#1\right)}
\newcommand{\lb}[1]{l\!\left(#1\right)}
\newcommand{\sourceVecList}[2]{\sourceSymbol\!\left(#1\right)_{#2}}
\newcommand{\encodSymbol}{\beta}
\newcommand{\encodCoeff}[2]{\encodSymbol_{#1}^{(#2)}}
\newcommand{\edgeCodSymbol}{\gamma}
\newcommand{\edgeCoeff}[2]{\edgeCodSymbol_{#1}^{(#2)}}
\newcommand{\decodSymbol}{\delta}
\newcommand{\decodCoeff}[2]{\decodSymbol_{#2}^{(#1)}}

\newcommand{\encodCoeffVar}[2]{x_{#1}^{(#2)}}
\newcommand{\edgeCoeffVar}[2]{y_{#1}^{(#2)}}
\newcommand{\decodCoeffVar}[2]{w_{#2}^{(#1)}}

\newcommand{\head}[1]{\textit{head($#1$)}}
\newcommand{\tail}[1]{\textit{tail($#1$)}}
\newcommand{\field}[1]{\mathbb{F}_{#1}}
\newcommand{\ring}[2]{\mathbb{F}_{#1}[#2]}
\newcommand{\receivedVec}{\decodFunct}
\newcommand{\ceil}[1]{\left\lceil #1 \right\rceil}
\newcommand{\mVec}{w}
\newcommand{\code}{\mathcal{C}}
\newcommand{\intAdd}{+}
\newcommand{\modqAdd}{\oplus}
\newcommand{\fieldAdd}{\boxplus}
\newcommand{\linearCode}[1]{$#1$-linear}
\newcommand{\ringOnA}{}
\newcommand{\polyRing}[2]{#1\!\left[ #2 \right]}
\newcommand{\zeroVecOver}[1]{0}
\newcommand{\zeroOver}[1]{0_{#1}}
\newcommand{\sourceSet}[1]{\indexSet_{#1}}
\newcommand{\rank}[1]{\mbox{{\it rank}}\!\left(#1\right)}
\newcommand{\ideal}[1]{\left\langle #1\right\rangle}
\newcommand{\grobner}{Grob$\ddot{\mbox{o}}$ner }
\newcommand{\gbasis}[1]{\mathcal{G}\!\left(#1\right)}
\newcommand{\GBASISFIG}{3}
\newcommand{\GBASISFIGEQ}{4}
\newcommand{\variety}{\mathcal{V}}
\newcommand{\idealSet}{\mathcal{I}}
\newcommand{\rad}[1]{#1^{\mbox{{\footnotesize rad}}}}
\newcommand{\alg}{\mbox{{\footnotesize alg}}}
\newcommand{\indexSet}{I}
\newcommand{\submodule}{\mathcal{M}}
\newcommand{\aperiodic}{{not reducible}}
\newcommand{\periodic}{{reducible}}
\newcommand{\change}[1]{{\color{blue} #1}}
\newcommand{\MatEdge}[2]{M(#1)_{#2}}
\newcommand{\defn}[1]{\textit{ #1}}
%
%%%%%%%%%%%%%%%%%%%%%%%%%%%%%%%%%%%%%
\begin{titlepage}

\setcounter{page}{0}

\title{Linear Codes, Target Function Classes, \\and Network Computing Capacity%
\thanks{This work was supported by the National Science Foundation 
        and the UCSD Center for Wireless Communications.\newline
\indent The authors are with the Department of Electrical and Computer Engineering, 
        University of California, San Diego, La Jolla, CA 92093-0407. \ \ 
 (rathnam@ucsd.edu,\ massimo@ece.ucsd.edu, \ nikhil@ucsd.edu, \ zeger@ucsd.edu)
}}

\author{Rathinakumar Appuswamy, Massimo Franceschetti,\\
Nikhil Karamchandani, and Kenneth Zeger}

\date{
%Submitted to:
\textit{IEEE Transactions on Information Theory\\
%Version: \version\\
%Created: \creationtime \\
Submitted: \submitteddate\\
%Revised:  \reviseddate 
%Revised Draft:   \creationtime 
% \today \\
%\Huge{Draft *** Do Not Distribute ***} 
}}

\maketitle

\begin{abstract}
We study the use of linear codes for 
network computing in single-receiver networks
with various classes of target functions of the source messages.
Such classes include 
reducible, injective, semi-injective, and linear target functions over finite fields.
Computing capacity bounds and achievability are given with respect to these
target function classes for
network codes that use routing, linear coding, or nonlinear coding.

%Using an algebraic framework similar to the one developed 
%by Koetter and M\'{e}dard for conventional network coding problems,
%
%we show that linear codes are sufficient to compute a linear
%sum over a finite field at the maximum possible rate in any given network.
%More specifically, it is shown that {\it min-cut} result similar
%to the {\it multicast problem} holds for computing a linear sum (over a finite field)
%of a set of independent sources in a single receiver network.
%This result is then used to obtain a lower bound on the computing rate
%of the {\it arithmetic sum} function in such networks. 
\end{abstract}

\thispagestyle{empty}
\end{titlepage}

\clearpage

\section{Introduction}
\textit{Network coding} concerns networks where each receiver demands a subset of 
messages generated by the source nodes and the objective is to satisfy the receiver demands
at the maximum possible throughput rate.
Accordingly, research efforts have studied coding gains over routing
~\cite{Ahlswede-Cai-Li-Yeung-IT-Jul00,Harvey1,Harvey2}, 
whether linear codes are sufficient to achieve the capacity 
~\cite{linearnetwork,ken1,ken2,Koetter-Medard-IT-Oct03}, 
and cut-set upper bounds on the capacity 
and the tightness of such bounds 
\cite{Harvey1,Harvey2, AprilLehman-EricLehman-04}.

\textit{Network computing}, 
on the other hand,
considers a more general problem in which 
each receiver node  demands a target function of the source messages
~\cite{Kumar1,computing1,Kumar2,RaiDey2009,ramam,Nazer}.
Most problems in network coding 
are applicable to network computing as well.
Network computing problems arise in various networks including sensor networks and vehicular networks.

In~\cite{computing1}, a network computing model was proposed where the network
is modeled by a directed, acyclic graph with independent, noiseless links.
The sources generate independent messages  and a single receiver node computes a 
target function $f$ of these messages.
The objective is to characterize the
maximum rate of computation, that is, the maximum number of times
$f$ can be computed per network usage.
Each node in the network sends out symbols on 
its out-edges which are arbitrary, but fixed, functions of the symbols received
on its in-edges and any messages generated at the node.
In linear network computing, this encoding is restricted to be linear operations.
Existing techniques for computing in networks 
use routing, where the codeword sent out by a node consists 
of symbols either received by that node, or generated  by the node if it is a source
(e.g.~\cite{Govindan}).

In network coding, 
it is  known that linear codes are sufficient to achieve 
the coding capacity for multicast networks~\cite{Ahlswede-Cai-Li-Yeung-IT-Jul00},
but  they are not sufficient in general to achieve the coding capacity for 
non-multicast networks~\cite{ken1}.
In network computing, 
it is known that 
when multiple receiver nodes demand a scalar linear target function of the source
messages, linear network codes may not be sufficient in general for solvability~\cite{RaiDey2009B}.
However, it has been shown that for single-receiver networks,
linear coding is sufficient for solvability when computing a scalar linear target function
~\cite{isit,RaiDey2009}.
Analogous to the coding capacity for network coding,
the notion of computing capacity was defined for network computing in~\cite{Kumar1} 
and is the supremum of achievable rates of computing the network's target function.

One fundamental objective in the present paper 
is to understand the performance of linear network codes for 
computing different types of target functions.
Specifically, we compare the linear computing capacity with that of the 
(nonlinear) computing capacity and the routing computing capacity
for various different classes of target functions in single-receiver networks.
Such classes include 
reducible, injective, semi-injective, and linear target functions over finite fields.
Informally,
a target function is semi-injective if it uniquely maps at least one of its inputs,
and
a target function is reducible if it can be computed using a linear transformation 
followed by a function whose domain 
has a reduced dimension.
Computing capacity bounds and achievability are given with respect to the
target function classes studied for
network codes that use routing, linear coding, or nonlinear coding.

Our specific contributions will be summarized next.

\subsection{Contributions} \label{Sec:contributions}
Section~\ref{Sec:model}
gives many of the formal definitions used in the paper 
(e.g. target function classes and computing capacity types).
We show
that routing messages through the intermediate nodes 
in a network forces the receiver to obtain all the messages even though only a function of the messages 
is required 
(Theorem~\ref{Th:routingCapacity}),
and we bound the computing capacity gain of using nonlinear versus routing codes
(Theorem~\ref{Th:boundOnCodingGain}).

In Section~\ref{Sec:linear},
we demonstrate
that the performance of optimal linear codes may depend on how
`linearity' is defined
(Theorem~\ref{line}).
Specifically, we show that the linear computing capacity of a network varies depending
on which ring linearity is defined over on the source alphabet.

In Sections~\ref{Sec:LinearCodes} and \ref{Sec:LinearTargetFunctions},
we study the computing capacity gain of using 
linear coding over routing, 
and nonlinear coding over linear coding.
In particular,
we study various classes of target functions,
including
injective, semi-injective, reducible, and linear.
The relationships between these classes is illustrated in 
Figure~\ref{Fig:TargetFunctions}.

Section~\ref{Sec:LinearCodes} studies linear coding for network computing.
We show
that if a target function is not reducible, 
then the linear computing capacity and routing computing capacity are equal whenever 
the source alphabet is a finite field
(Theorem~\ref{Th:linearCodingCapacity});
the same result also holds for semi-injective target functions over rings.
We also show
that whenever a target function is injective,
routing obtains the full computing capacity of a network
(Theorem~\ref{Th:injectTheorem}),
although whenever a target function is neither reducible nor injective,
there exists a network such that the computing capacity is larger than 
the linear computing capacity
(Theorem~\ref{Th:linearCodsWeak}).
Thus for non-injective target functions that are not reducible, any computing capacity 
gain of using coding over routing must be obtained through nonlinear coding.
This result is tight in the sense that if a target function is 
reducible, then  there always exists a network where the linear computing capacity is larger 
than the routing capacity (Theorem~\ref{Th:constantDirAd}).
We also show
that there exists a reducible target function and a network
whose
computing capacity is strictly greater than its linear computing capacity,
which in turn is strictly greater than its routing computing capacity.
(Theorem~\ref{Th:3ineq}).

Section~\ref{Sec:LinearTargetFunctions} focuses on computing linear target functions over finite fields.
We characterize the linear computing capacity for linear target functions over finite fields
in arbitrary networks 
(Theorem~\ref{Th:ModuloSumCodCap}).
We show that linear codes are sufficient for linear target functions
and we upper bound the computing capacity gain of coding (linear or nonlinear)
over routing
(Theorem~\ref{Th:optimalityOfLinearCodes}).
This upper bound is shown to be achievable
for every linear target function and
an associated network,
in which case the computing capacity is equal to the routing computing capacity
times the number of network sources
(Theorem~\ref{Th:linearTargetRoute}).

Finally, Section~\ref{Sec:butterfly}
studies an illustrative example for the computing problem,
namely the reverse butterfly network -- 
obtained by reversing the direction of all the edges in the multicast butterfly network
(the  butterfly network studied in~\cite{Ahlswede-Cai-Li-Yeung-IT-Jul00} illustrated
the capacity gain of network coding over routing).
For this network and the arithmetic sum target function, 
we evaluate the routing and linear computing capacity
(Theorem~\ref{Th:linearCoding})
and the computing capacity
(Theorem~\ref{Th:arithmeticSum}).
We show that the latter is strictly larger than the first two, which are equal to each other. 
No network with such properties is presently known for network coding.
Among other things, the reverse butterfly network also illustrates that 
the computing capacity 
can be a function of 
the coding alphabet 
(i.e. the domain of the target function $f$).
In contrast,
for network coding,
the coding capacity and routing capacity are known to be independent of the coding alphabet used
~\cite{Cannons-Dougherty-Freiling-Zeger05}.

Our main results are summarized in Table~\ref{Tab:Summary}.
\bigskip\bigskip\bigskip\bigskip
\begin{figure}[ht]
\centering
\includegraphics[width=13.5cm]{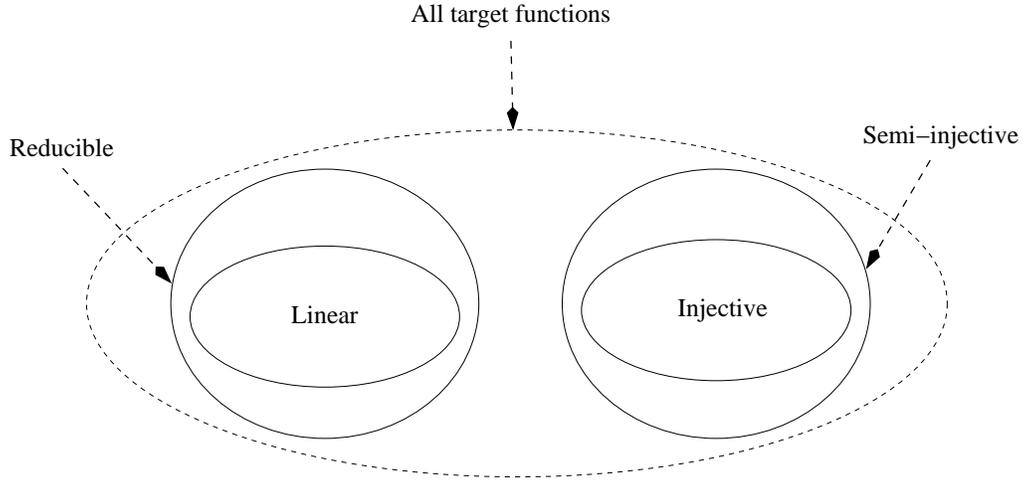}
\caption{
Decomposition of the space of all target functions 
into various classes.
}
\label{Fig:TargetFunctions}
\end{figure}
\bigskip\bigskip\bigskip\bigskip
% ---------------------------------------------------------------------------
%
\begin{table}[hht] 
\small
\begin{center}
\renewcommand{\arraystretch}{1.2} % Add more vertical space between rows 
\begin{tabular}{|c|c|c|c|}
\hline
Result & $f$ & $\alphabet$ & Location\\
\hline 
\hline
\multirow{2}{*}{
\; \ $\forall f\; \forall \Network \  \linCodCap{\Network,f} = \routCap{\Network,f}$} 
& non-reducible & field &
\multirow{2}{*}{Theorem~\ref{Th:linearCodingCapacity}} \\ \cline{2-3}
&
semi-injective & ring & \\
\hline 

\; $\forall f\; \forall \Network \  \codCap{\Network,f} = \routCap{\Network,f}$ &
injective & &
Theorem~\ref{Th:injectTheorem}  \\
\hline 
$\forall f\; \exists \Network \ \codCap{\Network,f} > \linCodCap{\Network,f}$ &
non-injective \& non-reducible & field &
Theorem~\ref{Th:linearCodsWeak} \\
\hline 
\ \ $\forall f\; \exists \Network \ \linCodCap{\Network,f} > \routCap{\Network,f}$ &
reducible & ring &
Theorem~\ref{Th:constantDirAd} \\
\hline 
$\exists f\; \exists \Network \ \codCap{\Network,f} > \linCodCap{\Network,f} > \routCap{\Network,f}$ &
reducible & &
Theorem~\ref{Th:3ineq} \\
\hline 
\ \ $\forall f\; \forall \Network \  \codCap{\Network,f} = \linCodCap{\Network,f} \le s \ \routCap{\Network,f}$ & 
linear & field &
Theorem~\ref{Th:optimalityOfLinearCodes} \\
\hline
$\forall f\; \exists \Network \ \linCodCap{\Network,f} = s \ \routCap{\Network,f}$ &
linear & field &
Theorem~\ref{Th:linearTargetRoute} \\
\hline
$ \exists f\; \exists \Network \ \codCap{\Network,f}$ \ is irrational & 
arithmetic sum  & &
Theorem~\ref{Th:arithmeticSum} \\
\hline
\end{tabular}
\end{center}
\caption{\small Summary of our main results for certain classes of target functions.
The quantities
$\codCap{\Network,f}$,
$\linCodCap{\Network,f}$,
and
$\routCap{\Network,f}$
denote the
computing capacity,
linear computing capacity, 
and
routing computing capacity,
respectively,
for a network $\Network$ with $s$ sources and target function $f$.
The columns labeled $f$ and $\alphabet$ indicate contraints on the target function $f$
and the source alphabet $\alphabet$, respectively.
}
\label{Tab:Summary}
\end{table} 
%
% ---------------------------------------------------------------------------

\clearpage
\section{Network model and definitions} \label{Sec:model}
In this paper, a \textit{network} $\Network = (G,\sources,\receiver)$ consists of a finite,
directed acyclic multigraph $G= (\nodes,\edges)$, 
a set 
$\sources = \{\source_1, \dots, \source_\cardsources \} \subseteq \ \nodes$
of $\cardsources$ distinct \textit{source nodes} 
and a single \textit{receiver} $\receiver \in \nodes$. 
We  assume that $\receiver \notin \sources$, 
and that the graph%
\footnote{Throughout the remainder of the paper, 
we use ``graph'' to mean a multigraph, 
and in the context of network computing we use ``network'' to mean a single-receiver network. 
} 
$G$ contains a directed path from every node in $\nodes$ to the receiver $\receiver$. 
For each node $u \in \nodes$, 
let $\inEdges{u}$ and $\outEdges{u}$ denote the in-edges and out-edges of $u$ respectively. 
We assume
(without loss of generality)
that if a network node has no in-edges,
then it is a source node.
If $e=(u,v) \in \edges$, we will use the notation $\head{e}=u$ and $\tail{e}=v$.

An \textit{alphabet}  is a finite set of size at least two. 
Throughout this paper, $\alphabet$ will denote a {\it source alphabet}
and $\decodeAlphabet$ will denote a {\it receiver alphabet}.
For any positive integer $m$,
any vector $x \in \alphabet^{m}$, 
and any  $i \in \{1,2,\ldots,m\}$,
let $\VecComp{x}{i}$ denote the $i$-th component of $x$.
For any index set $\indexSet = \{i_1,i_2,\ldots,i_q\} \subseteq \{1,2,\ldots,m\}$ with $i_1 < i_2 < \ldots < i_q$, let $\VecComp{x}{I}$ denote the vector 
$(\VecComp{x}{i_1},\VecComp{x}{i_2},\ldots,\VecComp{x}{i_q}) \in \alphabet^{\card{I}}$. 
Sometimes we  view $\alphabet$ as an algebraic structure 
such as a ring, i.e., with multiplication and addition.
Throughout this paper, vectors will always be taken to be row vectors.
Let $\field{q}$ denote a finite field of order $q$.
A superscript $t$ will denote the transpose for vectors and matrices.

\subsection{Target functions}
For a given network $\Network = (G, S, \receiver)$, we use $s$ throughout the paper to denote 
the number $\card{S}$ of receivers in $\Network$.
For  given network $\Network$, 
a {\it target function} is a mapping
$$
f : \alphabet^{\cardsources} \longrightarrow \decodeAlphabet.
$$
The goal in network computing is to compute $f$ at the receiver $\receiver$,
as a function of the source messages.
We will assume that all target functions depend on all the network sources
(i.e. a target function cannot be a constant function of any one of its arguments).
Some example target functions that will be referenced are listed in Table~\ref{Tab:exampleFunctions}.
\begin{table}[hht] 
\begin{center}
\renewcommand{\arraystretch}{1.2} % Add more vertical space between rows 
\begin{tabular}{|c||c|c|c|}
\hline 
Target function  $f$			& 			Alphabet $\mathcal{A}$					&			 $f\left(x_{1}, \ldots , x_{s} \right)$	& 		Comments 	 \\
\hline
\hline 
\defn{identity}				& 		arbitrary   				& $\left(x_{1},  \ldots , x_{s} \right)$	 & $\decodeAlphabet = \alphabet^s$ \\
\hline 
\defn{arithmetic sum}	& 	$\{0,1,\ldots,q-1\}$	& $x_1 + x_2 + \cdots + x_s$	& `$+$' is ordinary integer addition,   \\
\vspace{-0.66cm} \\
& & & $\decodeAlphabet = \{0,1,\cdots,s(q-1)\}$	  \\
\hline
\defn{mod $r$ sum}     &   $\{0,1,\ldots,q-1\}$	& $x_{1} \oplus x_{2} \oplus \ldots \oplus x_{s}$	& $\oplus$ is $\bmod$ $r$ addition, \  $\decodeAlphabet = \alphabet$ \\
%\hline 
%\defn{histogram}				& 	$\{0,1,\ldots,q-1\}$	& $\left(c_0,c_1,\ldots,c_{q - 1}\right)$	& $c_i = \left|\left\{j : x_j = i\right\}\right|$ for each $i \in \mathcal{A}$	  \\
\hline
\defn{linear}					&  any ring 								& $a_{1} x_{1} + a_{2} x_2 + \ldots + a_{s} x_{s}$	& 	arithmetic in the ring, \ $\decodeAlphabet = \alphabet$ \\
\hline 
\defn{maximum}					&  any ordered set 				& $\max \left\{x_{1}, \ldots , x_{s} \right\}$	& $\decodeAlphabet = \alphabet$ \\
\hline
\end{tabular}
\end{center}
\caption{Definitions of some target functions.}
\label{Tab:exampleFunctions}
\end{table} 

\begin{definition} \label{Def:linearReducible}
Let alphabet $\alphabet$ be a ring.
A target function $f: \alphabet^s \longrightarrow \decodeAlphabet$ 
is said to be {\it reducible} 
if there exists an integer $\lambda$ satisfying $\lambda < s$,
an  $s \times \lambda$ matrix $T$ with elements in $\alphabet$, and
a map $g: \alphabet^{\lambda} \longrightarrow \decodeAlphabet$ 
such that 
for all $x \in \alphabet^s$,
\begin{align} \label{Eq:linearReducibility}
g(xT)  =  f(x).
\end{align} 
\end{definition}
Reducible target functions are not injective, 
since, for example, if $x$ and $y$ are distinct elements of the 
null-space of $T$, then 
$$
f(x) = g(xT)=g(0) = g(yT) = f(y).
$$ 
\begin{example} \label{Ex:non-Lin1}
Suppose the alphabet is $\alphabet=\field{2}$ and the target function is
$$
f : \field{2}^3 \longrightarrow \{0,1\},
$$
where
$$
f(x) = (x_1+x_2) x_3.
$$
Then, by choosing $\lambda = 2$, 
$$
T = \begin{pmatrix}
1 & 0 \\
1 & 0 \\ 
0 & 1
\end{pmatrix},
$$
and $g(y_1,y_2) = y_1 y_2$, we get
\begin{align*}
g(xT) & = g(x_1+x_2, x_3) \\
		  & = (x_1+x_2)x_3 \\
		  & = f(x).
\end{align*}
Thus the target function $f$ is reducible.
\end{example}
\begin{example} \label{Ex:nonPeriodic}
The notion of reducibility requires that 
for a target function
$f : \alphabet^s \longrightarrow \decodeAlphabet$,
the set $\alphabet$ 
must be a ring.
If we impose any ring structure to the domains of the
identity, arithmetic sum, maximum, and minimum target functions,
then these can be shown
(via our Example~\ref{Ex:semiInj} and Lemma~\ref{Lemma:semiInj})
to be non-reducible.
%However, target functions like median or mode may not be semi-injective in general.
\end{example}
%

%
%\begin{remark}
%Let $\ringOnA$ and $\ringOnA'$ be two distinct rings 
%on the same alphabet $\alphabet$.
%Then for a given network with alphabet $\alphabet$ and a target function 
%$f$, the \linearCode{\ringOnA} coding capacity 
%$\linCodCap{\Network,f}{\ringOnA}$, in general, 
%can be different from \linearCode{'} coding capacity
%$\linCodCap{\Network,f}{\ringOnA'}$.
%We, however, do not pursue this aspect of the coding problem any further
%in this paper.
%\end{remark}
%

%If the alphabet is a ring $R$, we let $0_R$ denote either the
%zero element or a zero vector of appropriate dimension in that ring.
%
\subsection{Network computing and capacity}
Let $k$ and $n$ be positive integers. 
Given a network $\Network$ with source set $\sources$ and 
alphabet $\alphabet$, a \textit{message generator}
is any mapping
$$\sourceSymbol \ : \ \sources \longrightarrow \alphabet^k.$$
For each source $\source_i \in \sources$, 
$\sourceVec{\source_i}$ is called a \textit{message vector}
and its components  
$$\sourceVec{\source_i}_1, \dots, \sourceVec{\source_i}_k$$
are called \textit{messages}%
\footnote{
% Is this really true? - KZ I don't see it. Split nodes somehow?
% Yes, if the source generates two messages, then we split the source node into two nodes and connect each of resulting nodes to the network without altering the min-cut - Kumar
For simplicity we assume each source has associated with it exactly one message vector,
but all of the results in this paper can readily be extended to the more general case.}.
\begin{definition}
A $(k,n)$ \textit{network code in a network $\Network$} consists of the following: 
\begin{itemize}
\item[(i)] {\it Encoding functions} $h^{(e)}$,
for every out-edge $\edge \in \outEdges{\node}$
of every node $\node \in \nodes - \receiver$,
of the form: 
\begin{align*}
h^{(e)}: 
&\displaystyle \left(\prod_{\hat{\edge} \in \inEdges{\node}} \alphabet^{n} \right) \times 
 \alphabet^{k} \longrightarrow \alphabet^{n}  \quad \mbox{if $\node$ is a source node}\\
h^{(e)}: 
&\displaystyle \prod_{\hat{\edge} \in \inEdges{\node}} \alphabet^{n} \longrightarrow \alphabet^{n}  
 \hspace{.9in} \mbox{otherwise.}
\end{align*}
\item[(ii)] 
A {\it decoding function} $\decodFunct$ of the form:
$$
\decodFunct: 
\prod_{\hat{\edge} \in \inEdges{\node}} \alphabet^{n} \longrightarrow \decodeAlphabet^{k}.
$$
\end{itemize}
\end{definition}
Furthermore, given a $(k,n)$ network code,
every edge $\edge \in \edges$ carries a vector $z_{\edge}$ of at most $n$ alphabet symbols%
\footnote{By default, we   assume that edges carry exactly $n$ symbols.}, 
which is obtained by evaluating the encoding function 
$h^{(e)}$ on the set of vectors carried by the in-edges to the node and the node's message vector if 
the node is a source. 
The objective of the
receiver is to compute the target function $f$ of the source messages, 
for any arbitrary message generator $\sourceSymbol$.
More precisely, the receiver constructs a vector of $k$ alphabet symbols, 
such that for each $i \in \{1, 2,\ldots, k\}$,
the $i$-th component of the receiver's computed vector equals the value of the desired target function $f$, 
applied to the $i$-th components of the source message vectors,
for any choice of message generator $\sourceSymbol$.
\begin {definition}
Suppose in a network $\Network$,
the in-edges of the receiver are
$e_1, e_2, \ldots, e_{\card{\inEdges{\receiver}}}$.
A $(k,n)$ network code is said to  \textit{ compute $f$ in $\Network$} 
%(or simply a {\it $(k, n)$ solution})
if for each $j \in \{1, 2,\ldots, k\}$, 
and for each message generator $\sourceSymbol$,
the decoding function satisfies 
\begin{align}
\decodFunct\left(\edgeVar{e_1},\cdots,\edgeVar{e_{\card{\inEdges{\receiver}}}}\right)_j 
&= f\!\left((\sourceVec{\source_1}_j,\cdots,\sourceVec{\source_{\cardsources}}_j)\right). \label{Eq:decodingFunction}
\end{align}
If there exists a $(k, n)$  code that computes $f$ in $\Network$, 
then the rational number $k/n$ is said to be an {\it achievable computing rate}.
\end{definition}
%
%In the following definitions, 
%we view an edge vector $z_{\edge}$ as a row vector.
%\footnote{
%More general definitions of achievable computing rates are possible.
%For example, 
% Must have same k_i's in order to properly compute target function component-wise.
%each network edge $e$ might carry vectors of alphabet symbols of its
%own dimension $n_e$.
%In such case, the collection of all quantities $k$ and $n_e$ for which a solution exists
%could be viewed as an achievable parameter set, rather than a single scalar quantity,
%but we do not study such quantities here.
%}.

%
In the network coding literature, 
one definition of the \textit{coding capacity} of a network is the supremum of all achievable coding rates 
~\cite{Cannons-Dougherty-Freiling-Zeger05}.
We use an analogous definition for the computing capacity. 
\begin{definition}
The \textit{computing capacity} of a network $\Network$ with respect to a target function $f$ is
$$
\codCap{\Network,f} \; = \; \sup  \Big\{ \frac{k}{n} \ : \ 
    \mbox{$\exists$  $(k,n)$ network code that computes $f$ in $\Network$}\Big\}. \label{def:cap}
$$
\end{definition}
%
%The computing capacity is thus the supremum of all achievable computing rates for a given network $\Network$ and a target function $f$.
The notion of linear codes in networks is most often studied with respect to finite fields.
Here we will sometimes use more general ring structures.
\begin{definition}
\label{def:linear-code}
Let alphabet $\alphabet$ be a ring.
A $(k,n)$ network code in a network $\Network$ is said to be a \textit{linear network code (over $\alphabet$)}
if the encoding functions
are linear over $\alphabet$.
\end{definition}
\begin{definition}
The \textit{linear computing capacity} of a network $\Network$ with respect to target function $f$ is
$$
\linCodCap{\Network,f} \; = \; \sup  \Big\{ \frac{k}{n} \ : \ 
    \mbox{$\exists$  $(k,n)$ linear network code that computes $f$ in $\Network$}\Big\}. \label{def:linCap}
$$
\end{definition}
The \textit{routing computing capacity $\routCap{\Network,f}$}
is defined similarly by
restricting the encoding functions to routing.
We call the quantity $\codCap{\Network,f}-\linCodCap{\Network,f}$ the {\it computing capacity gain} of using
nonlinear coding  over linear coding.
Similar ``gains'', such as, $\codCap{\Network,f}-\routCap{\Network,f}$ 
and $\linCodCap{\Network,f}-\routCap{\Network,f}$ are defined.

Note that
Definition~\ref{def:linear-code}
allows linear codes to have nonlinear decoding functions.
In fact, 
since the receiver alphabet $\decodeAlphabet$ need not have any algebraic
structure to it, linear decoding functions would not make sense in general.
We do, however, examine a special case where
$\decodeAlphabet = \alphabet$
and the target function is linear,
in which case we show that linear codes with linear decoders can be just
as good as linear codes with nonlinear decoders
(Theorem~\ref{Th:optimalityOfLinearCodes}).

\begin{definition}
A set of edges $C \subseteq \edges$ in network $\Network$
 is said to \textit{separate} 
sources $\source_{m_1}, \ldots, \source_{m_d}$
from the receiver $\receiver$, 
if for each $i \in \{1, 2,\ldots, d\}$,
every directed path from
$\source_{m_i}$ to $\receiver$ contains at least one edge in $C$.
%Let $\sources_C$ denote the set of sources separated
%in $\Network$
%from the receiver by $C$ and define
%
Define
\begin{align} 
I_C &= \left\{i : \mbox{$C$ separates $\source_i$ from the receiver} \right\}. \notag
%\\
%\maxRangeB{C,f} &= \maxRangeB{I_C, f} \label{Eq:maxRangeDef}.
\end{align}
The set $C$ is said to be a \textit{cut} 
in $\Network$ if it separates at least one source from the receiver 
(i.e. $ \card{I_C} \ge 1$).
%it separates at least one
%source from the receiver.
We denote by $\cuts{\Network}$ the collection of all cuts in $\Network$.
\end{definition}

Since $I_C$ is the number of sources disconnected by $C$ and there are $s$ sources,
we have
\begin{align}\label{eq:IC_ub}
|I_C| \le s.
\end{align}

For network coding with a single receiver node and multiple sources (where the receiver demands
all the source messages), routing is known to be optimal~\cite{AprilLehman-EricLehman-04}.
Let $\routCap{\Network}$ denote the routing capacity of the network
$\Network$, or equivalently the routing computing capacity for computing
the identity target function.
%Note that $\routCap{\Network}$ is equal to the routing capacity
%for computing the identity function over $\Network$.
It was observed in~\cite[Theorem~4.2]{AprilLehman-EricLehman-04} that for any single-receiver network $\Network$, 
\begin{align} \label{Eq:routingCap}
\routCap{\Network} = \underset{C \in \cuts{\Network}}{\min} 
 \ \frac{\card{C}}{\card{I_C}}.
%= \underset{C \in \cuts}{\min} \frac{\card{C}}{\card{\sources_{C}}}.
\end{align}
The following theorem shows that if the intermediate nodes in a network 
are restricted to perform routing, 
then in order to compute a target function
the receiver is forced to obtain all the 
source messages.
This fact motivates the use of 
coding for computing functions in networks.
\begin{theorem} \label{Th:routingCapacity}
If $\Network$ is a network with target function $f$, then 
%we have
%
$$
\routCap{\Network,f} = \routCap{\Network}.
$$
\end{theorem}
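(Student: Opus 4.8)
The plan is to establish the two inequalities $\routCap{\Network,f} \ge \routCap{\Network}$ and $\routCap{\Network,f} \le \routCap{\Network}$ separately. The first is immediate: any routing code that achieves rate $k/n$ for the identity target function delivers all $k$ components of every source message to the receiver, which can then apply $f$ componentwise to compute the desired function at the same rate. Hence every rate achievable for the identity function is achievable for $f$, giving $\routCap{\Network,f} \ge \routCap{\Network}$.

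For the reverse inequality I would exploit the defining feature of routing: on each edge, every one of the $n$ symbol slots carries a verbatim copy of a particular source message component, and --- because the encoding functions are fixed and merely forward symbols --- the identity of that component (which source, which of the $k$ coordinates) is determined by the code alone and does not vary with the message generator $\sourceSymbol$. Tracing symbols back along paths to their sources, the entire content arriving at $\receiver$ is thus a fixed ``selection'' of message components. Now fix any cut $C \in \cuts{\Network}$. Every directed path from a source $\source_i$ with $i \in I_C$ to $\receiver$ meets $C$, so any copy of a component of such a source that reaches $\receiver$ must occupy a symbol slot on some edge of $C$; there are at most $n\card{C}$ such slots in total.

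The crux is to show that, for the code to compute $f$, a copy of every message component $\sourceVec{\source_i}_j$ with $i \in I_C$ and $j \in \{1,\dots,k\}$ must in fact reach $\receiver$. I would argue by contradiction using the standing assumption that $f$ is non-constant in each argument together with the requirement that \eqref{Eq:decodingFunction} hold for \emph{every} message generator. If some $\sourceVec{\source_i}_j$ never reaches $\receiver$, pick values of the remaining arguments and two values $a \ne b$ for the $i$-th argument with $f$ differing, and build two message generators that agree everywhere except in $\sourceVec{\source_i}_j$ (set to $a$ versus $b$). These produce identical symbols on the receiver's in-edges --- since $\sourceVec{\source_i}_j$ is not among the copies arriving and all other components agree --- yet the $j$-th decoded value is required to change, a contradiction. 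Consequently all $k\card{I_C}$ of these distinct components must cross $C$, forcing $k\card{I_C} \le n\card{C}$, i.e. $k/n \le \card{C}/\card{I_C}$. Taking the supremum over routing codes and the minimum over cuts, and invoking the characterization \eqref{Eq:routingCap} of $\routCap{\Network}$, yields $\routCap{\Network,f} \le \routCap{\Network}$.

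I expect the main obstacle to be making the ``fixed selection'' structure of routing and the distinguishability argument fully rigorous --- in particular, justifying that the failure of any single component to reach $\receiver$ really does leave the receiver's entire input invariant under changing that component, which is exactly where the routing restriction (as opposed to general coding) is essential.
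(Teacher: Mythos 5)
Your proposal is correct and takes essentially the same approach as the paper: the easy direction is identical, and your converse is a rigorous elaboration of the paper's one-line claim that under routing every component of every source message must reach $\receiver$, which you justify via the standing assumption that $f$ is non-constant in each argument. The only cosmetic difference is that you finish by counting symbol slots across each cut and invoking \eqref{Eq:routingCap}, whereas the paper simply observes that such a routing code also delivers all messages (i.e., solves the identity problem); both steps yield the same bound.
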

\begin{proof}
Since any routing code that computes the identity target function can be used to compute any target function $f$,
we have
$$
\routCap{\Network,f} \ge \routCap{\Network}.
$$
Conversely,
it is easy to see that every component of every source message must be received by $\receiver$
in order to compute $f$, so
$$
\routCap{\Network,f} \le \routCap{\Network}.
$$
\end{proof}

Theorem~\ref{Th:boundOnCodingGain} below gives a general upper bound on how much larger
the computing capacity can be relative to the routing computing capacity.
It will be shown later, in Theorem~\ref{Th:optimalityOfLinearCodes},
that for linear target functions over finite fields,
the bound in Theorem~\ref{Th:boundOnCodingGain}
can be tightened by removing the logarithm term.

\begin{lemma}\label{lem:1}
If $\Network$ is network 
with a target function $f : \alphabet^s \longrightarrow \decodeAlphabet$,
then
\begin{align*}
\codCap{\Network, f} & \le (\log_2 \card{\alphabet})\; \underset{ C \in \cuts{\Network} } \min\ \card{C}.
\end{align*}

\begin{proof}
Using~\cite[Theorem~II.1]{computing1}, 
one finds the term min-cut$(\Network,f)$ defined in~\cite[Equation~(3)]{computing1} in terms of a quantity $R_{I_C,f}$,
which in turn is defined in~\cite[Definition~1.5]{computing1}.
Since target functions are restricted to not being constant functions of any of their arguments,
we have $R_{I_C,f} \ge 2$,
from which the result follows.
\end{proof}

\end{lemma}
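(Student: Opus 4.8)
The plan is to bound the computing capacity by relating any $(k,n)$ network code that computes $f$ to the cut structure of the network, via the entropy/cardinality argument developed in~\cite{computing1}. First I would invoke the general min-cut upper bound \cite[Theorem~II.1]{computing1}, which states that the computing capacity is at most the min-cut value $\text{min-cut}(\Network,f)$ as defined in \cite[Equation~(3)]{computing1}. The goal is then purely to massage that expression into the clean form $(\log_2\card{\alphabet})\,\min_{C\in\cuts{\Network}}\card{C}$.

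The min-cut quantity from \cite{computing1} is of the form $\min_{C}\; \card{C}/\log_{\card\alphabet} R_{I_C,f}$ (or its reciprocal, depending on the normalization used there), where $R_{I_C,f}$ is the \emph{footprint size} from \cite[Definition~1.5]{computing1} — essentially the number of distinct values the target function can be forced to take as the sources in $I_C$ vary while the remaining sources are held fixed. The key step is to produce a universal lower bound $R_{I_C,f}\ge 2$ for every cut $C$. This is where the standing assumption that $f$ depends on all of its arguments is used: since $C$ separates at least one source $\source_i$ (because $\card{I_C}\ge 1$ by definition of a cut), and $f$ is not constant in the $i$-th coordinate, there exist two inputs differing only in coordinates indexed by $I_C$ that yield different values of $f$. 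Hence the footprint has size at least two.

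Plugging $R_{I_C,f}\ge 2$ into the bound replaces the denominator $\log_{\card\alphabet} R_{I_C,f}$ by its smallest possible value $\log_{\card\alphabet} 2 = 1/\log_2\card\alphabet$, which is the step that both removes the dependence on $f$ and introduces the $\log_2\card\alphabet$ factor. After the change of base, the bound collapses to $\card{C}\cdot\log_2\card\alphabet$ for each cut, and taking the minimum over all $C\in\cuts{\Network}$ yields the stated inequality. I would be careful to track exactly which direction the bound in \cite{computing1} is stated in and whether their min-cut is the capacity bound directly or its reciprocal, so that the monotonicity is applied in the correct direction.

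The main obstacle is not mathematical depth but faithful bookkeeping across the cited definitions: the result is essentially a one-line corollary of \cite[Theorem~II.1]{computing1} once $R_{I_C,f}\ge 2$ is established, so the real work is (a) correctly importing the definition of $R_{I_C,f}$ and verifying that the ``depends on all sources'' hypothesis indeed forces $R_{I_C,f}\ge 2$ for \emph{every} cut rather than just some, and (b) confirming that the logarithm bases and the min/max conventions in \cite{computing1} line up so that substituting the crude bound $R_{I_C,f}\ge 2$ weakens the inequality in the intended direction. I expect the argument to be short, with the only subtlety lying in matching notation and conventions with the reference.
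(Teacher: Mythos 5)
Your proposal is correct and follows essentially the same route as the paper's own proof: both invoke Theorem~II.1 of the cited reference, identify the footprint-size quantity $R_{I_C,f}$ appearing in the min-cut expression there, and use the standing assumption that $f$ is non-constant in every argument to conclude $R_{I_C,f}\ge 2$ for every cut, after which the change of base $\log_{\card{\alphabet}}2 = 1/\log_2\card{\alphabet}$ produces the stated bound. The paper's proof is merely a terser statement of the same argument; the bookkeeping you spell out (the form $\min_C \card{C}/\log_{\card{\alphabet}}R_{I_C,f}$ and the direction in which substituting $R_{I_C,f}\ge 2$ weakens the bound) is exactly what it leaves implicit.
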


\begin{theorem}\label{Th:boundOnCodingGain}
If $\Network$ is network 
with a target function $f : \alphabet^s \longrightarrow \decodeAlphabet$,
then
$$
\codCap{\Network, f} \le  s\;(\log_2\card{\alphabet})\; \routCap{\Network,f}
$$
\end{theorem}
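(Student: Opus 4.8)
The plan is to combine the cut-based upper bound on the computing capacity supplied by Lemma~\ref{lem:1} with the min-cut expression for the routing capacity in Equation~(\ref{Eq:routingCap}). First I would invoke Lemma~\ref{lem:1} to get $\codCap{\Network,f}\le(\log_2\card{\alphabet})\,\min_{C\in\cuts{\Network}}\card{C}$, so that it suffices to bound the smallest cut size $\min_{C\in\cuts{\Network}}\card{C}$ from above by $s\,\routCap{\Network,f}$. This reduces the theorem to a purely combinatorial statement about cuts.

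The key step is to exhibit a single cut whose size is at most $s\,\routCap{\Network}$. Let $C^\ast$ be a cut attaining the minimum in Equation~(\ref{Eq:routingCap}), so that $\routCap{\Network}=\card{C^\ast}/\card{I_{C^\ast}}$. Rearranging gives $\card{C^\ast}=\card{I_{C^\ast}}\,\routCap{\Network}$, and since $\card{I_{C^\ast}}\le s$ by Equation~(\ref{eq:IC_ub}), we obtain $\card{C^\ast}\le s\,\routCap{\Network}$. Because $C^\ast$ is one particular element of $\cuts{\Network}$, it follows that $\min_{C\in\cuts{\Network}}\card{C}\le\card{C^\ast}\le s\,\routCap{\Network}$.

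Finally I would substitute this bound into the inequality from Lemma~\ref{lem:1} and apply Theorem~\ref{Th:routingCapacity}, which identifies $\routCap{\Network}$ with $\routCap{\Network,f}$, to conclude $\codCap{\Network,f}\le s\,(\log_2\card{\alphabet})\,\routCap{\Network,f}$, as required.

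I do not expect a serious obstacle here, since the proof is a short chain of inequalities. The one point needing care is the direction of the inequality: Equation~(\ref{Eq:routingCap}) is a minimum over cuts, so for a \emph{generic} cut one only has the lower bound $\card{C}\ge\card{I_C}\,\routCap{\Network}$, which runs the wrong way for our purposes. The trick is that we need not control every cut; it suffices to select the specific minimizing cut $C^\ast$, where the ratio holds with equality, and then bound $\min_{C\in\cuts{\Network}}\card{C}$ from above by that one cut's size. Bounding $\card{I_{C^\ast}}$ by $s$ through Equation~(\ref{eq:IC_ub}) then supplies exactly the factor $s$ in the statement.
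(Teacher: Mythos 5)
Your proposal is correct and takes essentially the same approach as the paper: Lemma~\ref{lem:1} first, then bounding $\min_{C\in\cuts{\Network}}\card{C}$ by $s\,\routCap{\Network,f}$ using \eqref{eq:IC_ub}, \eqref{Eq:routingCap}, and Theorem~\ref{Th:routingCapacity}. The paper compresses that second step into a single cited inequality, and your explicit argument via the minimizing cut $C^\ast$ is precisely the justification it leaves implicit.
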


\begin{proof}
\begin{align*}
\codCap{\Network, f} & \le (\log_2 \card{\alphabet}) \; \underset{ C \in \cuts{\Network} } \min\ \card{C} 
                     & \Comment{Lemma~\ref{lem:1}} \\ % \label{Eq:ubCodCap} 
% & = (\log_2 \card{\alphabet}) \ \underset{ C \in \cuts{\Network} }{\min} \card{C} \nonumber \\
% & = (\log_2 \card{\alphabet}) \ s \ \underset{ C \in \cuts{\Network} }{\min} \frac{\card{C}}{s} \nonumber \\
 & \le s\; (\log_2 \card{\alphabet})\; \routCap{\Network,f}. 
       & \Comment{\eqref{eq:IC_ub}, \eqref{Eq:routingCap}, and Theorem~\ref{Th:routingCapacity} }
\end{align*}
\end{proof}

\clearpage

\section{Linear coding over different ring alphabets}
\label{Sec:linear}

Whereas the size of a finite field characterizes the field,
there are, in general, different rings of 
the same size, 
so one must address whether
the linear computing capacity of a network
might depend on which
ring is chosen for the alphabet.
In this section, we illustrate this possibility  with 
a specific computing problem. 

Let $\alphabet = \{a_0,a_1,a_2,a_3\}$
and let  $f: \alphabet^2 \longrightarrow \{0,1,2\}$ be as defined in Table~\ref{Tab:checkFunction}.
\begin{table}[hht] 
\begin{center}
\renewcommand{\arraystretch}{1.2} % Add more vertical space between rows 
\begin{tabular}{|c||c|c|c|c|}
\hline 
$f$	& $a_0$	& $a_1$	& $a_2$	& $a_3$	 \\
\hline
\hline 
$a_0$	& $0$ & $1$	& $1$	& $2$	 \\
\hline 
$a_1$	& $1$	& $0$	& $2$	& $1$  \\
\hline 
$a_2$	& $1$	& $2$	& $0$	& $1$  \\
\hline 
$a_3$	& $2$	& $1$	& $1$	& $0$ \\
\hline
\end{tabular}
\end{center}
\caption{Definition of the $4$-ary map $f$.}
\label{Tab:checkFunction}
\end{table} 
We consider different rings $R$ of size $4$ for $\alphabet$ and evaluate 
the linear computing capacity of the network
$\Network_1$ shown in Figure~\ref{Fig:2pointNW}
with respect to the target function $f$.
%Let $R$ be a ring of size $4$.
Specifically, we let $R$ be either the ring $\integer_4$ of integers modulo $4$
or the product ring $\integer_2 \times \integer_2$ of $2$-dimensional binary vectors.
Denote the linear 
computing capacity here by
$$
\linCodCap{\Network_1}^{R} = \; \sup  \Big\{ \frac{k}{n} \ : \ 
    \mbox{$\exists$  $(k,n)$ $R$-linear code that computes $f$ in $\Network$}\Big\}.
$$
%
%where the matrix multiplication and vector addition are carried out over the ring $R$ under consideration. 
%When we refer to an $\integer_4$-linear code when the source alphabet is $\alphabet = \{a,b,c,d\}$, we implicitly assume that there exist bijective map
%from $\alphabet$ to  $\integer_4$ and every node in the network has access to this map.
%Some parts of the following lemmas refer directly to a particular ring alphabet so that the proof is valid for any map from $\alphabet$ to 
%that ring alphabet.
%
%
\begin{figure}[ht]
\begin{center}
\psfrag{X}{$\source_1$}
\psfrag{Y}{$\source_2$}
\psfrag{Z}{$\receiver$}
\scalebox{.9}{\includegraphics{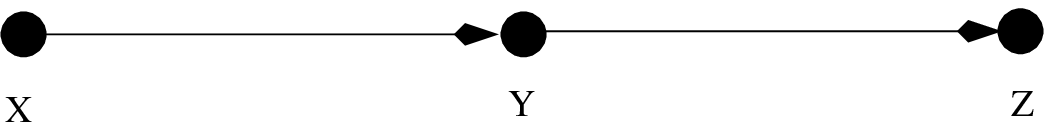}}
\end{center}
\caption{Network $\Network_1$ has two sources $\source_1$ and $\source_2$ and a receiver $\receiver$.} 
\label{Fig:2pointNW}
\end{figure}
The received vector $z$ at $\receiver$ can be viewed as a function of the source vectors generated at 
$\source_1$ and $\source_2$. 
For any $(k,n)$ $R$-linear code, 
there exist $k \times n$ matrices $M_1$ and $M_2$ 
such that $z$ %the vector received 
%by $\receiver$ in the network $\Network_1$ 
can be written as
\begin{align} \label{Eq:linearCodeMatrix_ln}
\edgeVar{}(\sourceVec{\source_1},\sourceVec{\source_2}) =  \sourceVec{\source_1} M_1+ \sourceVec{\source_2} M_2.
\end{align}
Let $m_{i,1},\cdots,m_{i,k}$ 
denote the row vectors of $M_i$,
for $i\in\{1,2\}$.

\begin{lemma} \label{lemma:zerosOfLinearCode_lnw}
Let $\alphabet$ be the ring $\integer_4$
and let $f: \alphabet^2 \longrightarrow \{0,1,2\}$ be the target function shown in 
Table \ref{Tab:checkFunction},
where $a_i = i$, for each $i$.
If a $(k,n)$ linear code over $\alphabet$ \ computes $f$ in $\Network_1$
and $\receiver$ receives a zero vector,
then
$\sourceVec{\source_1} = \sourceVec{\source_2} \in \{0,2\}^k$.
\end{lemma}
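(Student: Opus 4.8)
The plan is to exploit two features of the target function $f$ visible in Table~\ref{Tab:checkFunction}: its diagonal is identically zero (so $f(x,y)=0$ if and only if $x=y$), and the row indexed by $a_0=0$ takes the values $f(0,0)=0$, $f(0,1)=f(0,2)=1$, $f(0,3)=2$. I would combine these with the defining property of a code that computes $f$: whenever two message assignments produce the same received vector $z$, they must agree with $f$ in every coordinate.

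First I would pin down the decoder's output on the zero vector. Feeding the all-zero messages $\sourceVec{\source_1}=\sourceVec{\source_2}=0$ gives $z=0$ via~\eqref{Eq:linearCodeMatrix_ln}, and the computing property forces $\decodFunct(0)_j = f(0,0)=0$ for every $j$. Hence any message assignment yielding $z=0$ makes the decoder output $0$ in every coordinate, so the computing property gives $f(\sourceVec{\source_1}_j,\sourceVec{\source_2}_j)=0$ for all $j$; since $f$ vanishes only on the diagonal, this yields $\sourceVec{\source_1}_j=\sourceVec{\source_2}_j$ for all $j$, i.e.\ $\sourceVec{\source_1}=\sourceVec{\source_2}$. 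Writing $w$ for this common vector, the hypothesis $z=0$ now reads $w(M_1+M_2)=0$.

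The heart of the argument is a second, cleverly chosen collision. Because $wM_1 + wM_2 = 0$, i.e.\ $wM_1 = -wM_2$ over $\integer_4$, the two message assignments $(\sourceVec{\source_1},\sourceVec{\source_2})=(w,0)$ and $(\sourceVec{\source_1},\sourceVec{\source_2})=(0,-w)$ produce the same received vector, where $-w$ is the componentwise additive inverse in $\integer_4$. Applying the computing property coordinatewise gives $f(w_j,0)=f(0,-w_j)$ for every $j$. Reading these values off Table~\ref{Tab:checkFunction} shows $w_j\in\{0,2\}$ is consistent (since $-0=0$, $-2=2$, and $f(0,0)=f(0,0)$, $f(2,0)=f(0,2)=1$), whereas $w_j=1$ would force $f(1,0)=1$ to equal $f(0,3)=2$ and $w_j=3$ would force $f(3,0)=2$ to equal $f(0,1)=1$ — both impossible. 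Hence every $w_j\in\{0,2\}$, which gives $\sourceVec{\source_1}=\sourceVec{\source_2}\in\{0,2\}^k$.

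The only nonroutine step is identifying the colliding pair $(w,0)$ and $(0,-w)$; once it is in hand, the constraint $f(w_j,0)=f(0,-w_j)$ together with the $0$-row entries of the table finishes the argument immediately. The first reduction (to $\sourceVec{\source_1}=\sourceVec{\source_2}$) and the final table lookup are both mechanical, so I expect the search for the right second assignment to be the main conceptual obstacle. It is worth noting that this step uses the additive structure of $\integer_4$ in an essential way (the inverse $-1=3$ being distinct from $1$), foreshadowing why the conclusion — and ultimately the linear computing capacity — can differ for a different ring of the same size.
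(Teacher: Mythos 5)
Your proof is correct, and it follows the same two-stage skeleton as the paper's: first force $\sourceVec{\source_1}=\sourceVec{\source_2}$ from the fact that the decoder must output $0$ on the zero vector and $f$ vanishes exactly on the diagonal, then construct a second collision at the receiver to rule out odd components. Where you differ is in the execution of the second stage. The paper argues by contradiction: it picks a single coordinate $j$ with $\sourceVec{\source_1}_j\in\{1,3\}$, uses the fact that such an element is a unit in $\integer_4$ to solve \eqref{Eq:linearCodeMatrix_ln} for the row $m_{1,j}$, and thereby exhibits a collision between the indicator-vector assignment $(x,0)$ and a pair whose $j$-th components are $(0,-1)$, giving $f(1,0)=1\neq 2=f(0,3)$. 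Your collision $(w,0)$ versus $(0,-w)$ is global and inversion-free: it needs only the additive identity $wM_1=-wM_2$, handles all $k$ coordinates at once, and replaces the normalization-by-inverse with a direct four-case table lookup per coordinate ($w_j=1$ and $w_j=3$ each fail, $w_j\in\{0,2\}$ survive). Both arguments ultimately hinge on the same arithmetic fact — that negation acts nontrivially on the odd elements of $\integer_4$, i.e.\ $-1=3\neq 1$ — which is exactly what breaks when the ring is replaced by $\integer_2\times\integer_2$, as you note. Your variant is arguably cleaner since it avoids both the contradiction setup and the unit structure of the ring, at no cost in generality for this lemma.
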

\begin{proof}
If $\sourceVec{\source_1}=\sourceVec{\source_2}= 0$,
then $\receiver$ receives a $0$ by \eqref{Eq:linearCodeMatrix_ln}
and must decode a $0$ since
$f((0,0)) = 0$ (from Table~\ref{Tab:checkFunction}).
Thus, $\receiver$ always decodes a $0$ upon receiving a $0$.
But 
$f((x_1,x_2)) = 0 $ if and only if $x_1 = x_2$
(from Table~\ref{Tab:checkFunction}),
so whenever $\receiver$ receives a $0$,
the source messages satisfy
$\sourceVec{\source_1} = \sourceVec{\source_2}$.

Now suppose,
contrary to the lemma's assertion, 
that there exist messages 
$\sourceVec{\source_1}$ and $\sourceVec{\source_2}$
such that
$\edgeVar{}(\sourceVec{\source_1},\sourceVec{\source_2}) = 0$
and
$\sourceVec{\source_1}_j \not\in \{0,2\}$ 
for some $j \in \{1,2,\cdots,k\}$.
Since $\sourceVec{\source_1}_j$
is invertible in $\integer_4$ (it is either $1$ or $3$),
we have from \eqref{Eq:linearCodeMatrix_ln} that
\begin{align}
m_{1,j} & = 
\sum_{\stackrel{i=1}{i \neq j}}^{k} -\sourceVec{\source_1}_j^{-1}\sourceVec{\source_1}_i m_{1,i}  + 
\sum_{i=1}^{k} -\sourceVec{\source_1}_j^{-1} \sourceVec{\source_2}_i m_{2,i}\\
& = y^{(1)} M_1 + y^{(2)} M_2
\label{Eq:123}
\end{align}
where $y^{(1)}$ and $y^{(2)}$ are $k$-dimensional vectors 
defined by
\begin{align}
y^{(1)}_i & = \begin{cases}
		-\sourceVec{\source_1}_j^{-1}\sourceVec{\source_1}_i & \mbox{if $i \neq j$} \\
		0 & \mbox{if $i = j$}
		\end{cases} \notag\\
y^{(2)}_i & = -\sourceVec{\source_1}_j^{-1}\sourceVec{\source_2}_i.	
\label{Eq:defnOfY2}	
\end{align}
Also, define the $k$-dimensional vector $x$ by
\begin{align}
x_i & = \begin{cases}
		0 & \mbox{if $i \neq j$} \\
		1 & \mbox{if $i = j$}.
		\end{cases} \label{Eq:defnOfX} 
\end{align}
We have from \eqref{Eq:linearCodeMatrix_ln} that
$z(x,0) = m_{1,j}$ 
and from \eqref{Eq:linearCodeMatrix_ln} and \eqref{Eq:123} that
$z(y^{(1)},y^{(2)}) = m_{1,j}$.
Thus, in order for the code to compute $f$,
we must have
$f(x_j,0) = f(y^{(1)}_j,y^{(2)}_j)$.
But 
$f(x_j,0) = f(1,0) = 1$ 
and
\begin{align*}
f(y^{(1)}_j,y^{(2)}_j) 
&= f(0,-\sourceVec{\source_1}_j^{-1} \sourceVec{\source_2}_j) \\
&= f(0,-\sourceVec{\source_1}_j^{-1} \sourceVec{\source_1}_j) 
   &\Comment{$\sourceVec{\source_1} = \sourceVec{\source_2}$}\\
&= f(0,-1) \\
&= f(0,3) & \Comment{$3 = -1$ in $\integer_4$} \\
&= 2 & \Comment{Table~\ref{Tab:checkFunction}},
\end{align*}
a contradiction.
% $f(0,3)=f(a_0,a_3) = 2$ 
Thus, 
$\sourceVec{\source_1} \in \{0,2\}^k$.
\end{proof}
\begin{theorem} \label{line}
The network $\Network_1$ in Figure \ref{Fig:2pointNW} 
with alphabet $\alphabet = \{a_0,a_1,a_2,a_3\}$
and target function $f: \alphabet^2 \longrightarrow \{0,1,2\}$ shown in 
Table~\ref{Tab:checkFunction}, satisfies
\begin{align*}
\linCodCap{\Network_1,f}^{\integer_{4}} &\le \frac{2}{3}\\
\linCodCap{\Network_1,f}^{\integer_2 \times \integer_2 }  &= 1.
\end{align*}
(For $\alphabet=\integer_4$,
we identify $a_i = i$, for each $i$,
and 
for $\alphabet=\integer_2 \times \integer_2$,
we identify each $a_i$ with the $2$-bit binary representation of $i$.)
\end{theorem}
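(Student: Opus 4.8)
The plan is to treat both capacities through a single rank--nullity (counting) argument applied to the $\alphabet$-module homomorphism
$$\phi(\sourceVec{\source_1},\sourceVec{\source_2}) = \sourceVec{\source_1}M_1 + \sourceVec{\source_2}M_2$$
from \eqref{Eq:linearCodeMatrix_ln}, whose domain $\alphabet^{2k}$ has $4^{2k}$ elements and whose image lies in $\alphabet^n$, so that $|\mathrm{im}\,\phi| \le 4^n$. Since $\phi$ is a homomorphism of finite abelian groups, $|\mathrm{im}\,\phi| = 4^{2k}/|\ker\phi|$, and thus any upper bound on $|\ker\phi|$ translates directly into an upper bound on $k/n$. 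The whole problem therefore reduces to pinning down, for each ring, how large the kernel of a \emph{computing} code can be, and this is exactly where the two rings differ.

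For $\alphabet = \integer_4$ the kernel of $\phi$ is precisely the set of message pairs on which $\receiver$ receives the zero vector, so Lemma~\ref{lemma:zerosOfLinearCode_lnw} gives $\ker\phi \subseteq \{(a,a) : a \in \{0,2\}^k\}$, whence $|\ker\phi| \le 2^k$. Substituting into $|\mathrm{im}\,\phi| = 4^{2k}/|\ker\phi| \ge 8^k$ and combining with $|\mathrm{im}\,\phi| \le 4^n$ yields $2^{3k} \le 2^{2n}$, i.e.\ $k/n \le 2/3$.

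For $\alphabet = \integer_2 \times \integer_2$ I need both directions. For the upper bound I would use only that $f(x,y)=0$ iff $x=y$ (the diagonal of Table~\ref{Tab:checkFunction}): comparing any kernel element $(\Delta_1,\Delta_2)$ against the input $(0,0)$, both of which map to $z=0$, forces $f(\Delta_{1,j},\Delta_{2,j})=0$ and hence $\Delta_1=\Delta_2$, so $\ker\phi \subseteq \{(\Delta,\Delta)\}$ and $|\ker\phi| \le 4^k$; the same count then gives $4^k \le 4^n$, i.e.\ $k/n \le 1$. For achievability at rate $1$ I would first record the key identity that, under the stated binary-vector identification, $f(x,y) = \hammingWeight{x \oplus y}$ with $\oplus$ the componentwise ring addition. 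A $(1,1)$ code with $M_1 = M_2$ equal to the $1\times 1$ identity matrix then delivers $z = \sourceVec{\source_1} \oplus \sourceVec{\source_2}$ to the receiver, which decodes $\hammingWeight{z} = f$ componentwise.

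The main obstacle is conceptual rather than computational: recognizing why the admissible kernel is strictly larger over $\integer_2 \times \integer_2$ than over $\integer_4$. Over $\integer_2 \times \integer_2$ every element is its own additive inverse, so adding a common $\Delta$ to both arguments cancels inside the XOR and leaves $f(x,y)=\hammingWeight{x\oplus y}$ unchanged; this is precisely what lets the full diagonal $\{(\Delta,\Delta)\}$ sit in the kernel and lifts the rate to $1$. Over $\integer_4$ the nonzero, non-$\{0,2\}$ message entries are units, which is the leverage Lemma~\ref{lemma:zerosOfLinearCode_lnw} exploits to shrink the kernel to the $\{0,2\}$-diagonal and cap the rate at $2/3$. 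Verifying the tabulated identity $f = \hammingWeight{\cdot \oplus \cdot}$ and confirming the first-isomorphism-theorem cardinality count over a (non-field) ring are the only routine checks.
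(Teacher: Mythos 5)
Your proposal is correct and follows essentially the same route as the paper: the $\integer_4$ bound via Lemma~\ref{lemma:zerosOfLinearCode_lnw} combined with the coset-counting step $\card{\mathrm{im}\,\phi} = 4^{2k}/\card{\ker\phi} \le 4^n$ is exactly the paper's argument, and your rate-$1$ code over $\integer_2\times\integer_2$ (identity matrices $M_1=M_2$, decoding by the Hamming weight of the received sum) is the paper's achievability construction. The only cosmetic difference is in the $\integer_2\times\integer_2$ upper bound, where the paper deduces $n\ge k$ from injectivity of $x\mapsto xM_1$ while you bound the full kernel by the diagonal $\{(\Delta,\Delta)\}$; both are the same cardinality argument.
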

\begin{proof}
Consider a $(k,n)$ \  $\integer_2 \times \integer_2$-linear code that computes $f$.
From \eqref{Eq:linearCodeMatrix_ln},
we have
$\edgeVar{}(x,\zeroVecOver{R}) = \zeroVecOver{R}$
whenever
$x M_1  = \zeroVecOver{R}$.
Since $f((\zeroVecOver{R},\zeroVecOver{R})) \neq f((x_i,\zeroVecOver{R}))$ (whenever $x_i \neq 0$),
it must therefore be the case that
$x M_1  = \zeroVecOver{R}$
only when $x = 0$,
or in other words,
the rows of $M_1$ must be independent,
so $n \ge k$.
Thus,
\begin{equation} \label{Eq:eq1}
\linCodCap{\Network,f}^{\integer_2 \times \integer_2 } \le 1.
\end{equation}
Now suppose that $\alphabet$ is the ring 
$\integer_2 \times \integer_2$ where,
$a_0=(0,0)$, $a_1=(0,1)$, $a_2=(1,0)$, and $a_3=(1,1)$
and let $\oplus$ denote the addition over $\alphabet$. %$\integer_2 \times \integer_2$.
For any $x \in \alphabet^2$, %$\integer_2 \times \integer_2$,
the value $f(x)$, 
as defined in Table~\ref{Tab:checkFunction},
is seen to be the Hamming distance
between $x_1$ and $x_2$. 
If $k=n=1$
and  $M_1 = M_2 = [a_3]$ (i.e., the $1 \times 1$ identity matrix), 
then $\receiver$ receives $x_1 \oplus x_2$ from which 
$f$ can be computed by summing its components.
Thus, a computing rate of $k/n=1$ is achievable.
From \eqref{Eq:eq1}, it then follows that
$$\linCodCap{\Network,f}^{\integer_2 \times \integer_2 } = 1.$$

We now  prove that 
$\linCodCap{\Network,f}^{\integer_{4}} \le 2/3$.
Let $\alphabet$ denote the ring $\integer_{4}$
where $a_i = i$ for $0\le i \le 3$.
For a given $(k,n)$ linear code over $\alphabet$ that computes $f$, 
the $n$-dimensional vector received by $\receiver$ can be
written as in \eqref{Eq:linearCodeMatrix_ln}.
%Let $M$ denote  the $n \times 2k$ matrix $[M_1 M_2]$
%and let $\mathcal{K}$ denote the nullspace of $M$.
Let $\mathcal{K}$ denote the collection of all message vector
pairs $(\sourceVec{\source_1}, \sourceVec{\source_2})$ such that
$z(\sourceVec{\source_1},\sourceVec{\source_2}) = 0$.
Define the $2k \times n$ matrix
$$M = \begin{bmatrix} M_1 \\ M_2 \end{bmatrix}$$
and notice that $\mathcal{K} = \{ y \in \alphabet^{2k} : y M = 0\}.$
Then,
\begin{align*}
4^n   
 & = \card{\alphabet}^n \\ %& \Comment{$\card{R} = 4$} \\
 & \ge \card{ \{ y M : y \in \alphabet^{2k} \} } 
  & \Comment{$y\in \alphabet^{2k} \Longrightarrow yM\in \alphabet^n$} \\
 &  \ge \frac{\card{\alphabet}^{2k}}{\card{\mathcal{K}}}  
  & \Comment{$y^{(1)},y^{(2)} \in \alphabet^{2k}$ and  $y^{(1)}M =y^{(2)}M$ $\Longrightarrow$ $y^{(1)}-y^{(2)} \in \mathcal{K}$}\\
 & \ge \frac{\card{\alphabet}^{2k}}{2^k} 
  & \Comment{Lemma~\ref{lemma:zerosOfLinearCode_lnw}} \\
%& = \frac{4^{2k}}{2^k} \\
 & = 4^{3k/2}. & \Comment{$\card{\alphabet} = 4$}
\end{align*}
Thus, $k/n \le 2/3$,
so 
$\linCodCap{\Network_1,f}^{\integer_{4}} \le \frac{2}{3}$.
\end{proof}

%
%\section{Summary of the results} \label{Sec:results}
%

%The coding capacity and linear coding capacity
%for computing the arithmetic 
%sum target function in the reverse butterfly network 
%is derived in Theorem \ref{Th:arithmeticSum}
%and Theorem \ref{Th:linearCoding} respectively.
%
\clearpage
%}
%
\section{Linear network codes for computing target functions} \label{Sec:LinearCodes}
Theorem~\ref{Th:routingCapacity} showed that if intermediate network nodes use routing, 
then a network's receiver learns all the source  messages irrespective 
of the target function it demands. 
In Section~\ref{Sec:nonPeriodic}, 
we prove a similar result when the intermediate nodes use linear network coding.
It is shown that whenever a target function is not reducible the linear computing capacity coincides with the routing capacity
and the receiver must learn all the source messages.
We also show that there exists a network such that the computing capacity is larger than 
the routing capacity whenever the target function is non-injective. 
Hence, if the target function is not reducible, 
such capacity gain must be obtained from nonlinear coding.
Section~\ref{Sec:periodic} shows that linear codes may provide a computing capacity gain over routing 
for reducible target functions and that linear codes may not suffice to obtain 
the full computing capacity gain over routing.
\subsection{Non-reducible target functions} \label{Sec:nonPeriodic}
Verifying whether or not a given target function is reducible may not be easy.
We now define a class of 
target functions that are easily shown to not be reducible.
\begin{definition}
A target function $f : \alphabet^s \longrightarrow \decodeAlphabet$ is said to be \textit{semi-injective} if 
there exists $x \in \alphabet^s$
such that
$f^{-1}(\{f(x)\}) = \{x\}$.
\end{definition}

Note that injective functions are semi-injective.

\begin{example} \label{Ex:semiInj}
If $f$ is the  arithmetic sum target function,
then $f$ is semi-injective 
(since $f(x)=0$ implies $x=0$)
but not injective
(since $f(0,1) = f(1,0) = 1$).
Other examples of semi-injective target functions include the identity, maximum, and minimum functions.
%However, target functions like median or mode may not be semi-injective in general.
\end{example}

% Comment: If matrix T with dimensions s x lambda has elements from ring R and s > lambda,
% then its nullspace is nontrivial, i.e. there exists s-dimensional vector d over R s.t. dT=0.
% Proof: If nullspace(T) = {0}, then T is injective since d != d' ==> dT != d'T.
% But then, |R^s| = |co-domain(T)| > |R^\lambda|, a contradiction.
% Note that this proof does not assume R is commutative or has an identity.

\begin{lemma} \label{Lemma:semiInj}
If alphabet $\alphabet$ is a ring,
then semi-injective target functions are not reducible.
\end{lemma}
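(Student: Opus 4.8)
The plan is to prove the contrapositive: I will assume $f$ is reducible and show it cannot be semi-injective. By Definition~\ref{Def:linearReducible}, reducibility gives an integer $\lambda < s$, an $s \times \lambda$ matrix $T$ over the ring $\alphabet$, and a map $g : \alphabet^\lambda \longrightarrow \decodeAlphabet$ with $g(xT) = f(x)$ for all $x \in \alphabet^s$. The key observation is that the linear map $x \mapsto xT$ sends the full domain $\alphabet^s$ into $\alphabet^\lambda$, and since $\lambda < s$, this map cannot be injective. The remarks immediately following Definition~\ref{Def:linearReducible} already record exactly this: if $x$ and $y$ are distinct elements with $xT = yT$, then $f(x) = g(xT) = g(yT) = f(y)$, so every value of $f$ is attained at least twice.

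The first step I would carry out is to exhibit two distinct inputs with the same image under $x \mapsto xT$. The cleanest way is a counting argument: the domain $\alphabet^s$ has $\card{\alphabet}^s$ elements while the codomain $\alphabet^\lambda$ has $\card{\alphabet}^\lambda$ elements, and since $\card{\alphabet} \ge 2$ (an alphabet has size at least two) and $\lambda < s$, we have $\card{\alphabet}^\lambda < \card{\alphabet}^s$. By the pigeonhole principle the map $x \mapsto xT$ cannot be injective, so there exist distinct $x, y \in \alphabet^s$ with $xT = yT$. I prefer the pigeonhole phrasing over the null-space phrasing used in the remark, because the ring $\alphabet$ need not be a field and the difference $x - y$ need not lie in a well-behaved null-space, but the counting argument works over any finite ring.

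The second step is to use this collision to kill semi-injectivity directly. For any $x \in \alphabet^s$, I want to show $f^{-1}(\{f(x)\})$ is never a singleton. Fix an arbitrary $x$; I need a distinct $y$ with $f(y) = f(x)$. From the pigeonhole step applied to the whole domain I only get \emph{some} colliding pair, so to handle an arbitrary $x$ I would instead argue as follows: the map $x \mapsto xT$ has image of size at most $\card{\alphabet}^\lambda < \card{\alphabet}^s$, so by pigeonhole at least one fiber contains two or more points, but more usefully, since $f = g \circ (\cdot\, T)$ factors through a set of size at most $\card{\alphabet}^\lambda$, the number of distinct values of $f$ is at most $\card{\alphabet}^\lambda < \card{\alphabet}^s$. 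Hence $f$ takes strictly fewer than $\card{\alphabet}^s$ values on a domain of $\card{\alphabet}^s$ points, so by pigeonhole every fiber $f^{-1}(\{f(x)\})$ cannot be a singleton for all $x$ simultaneously — in fact the average fiber size exceeds one, guaranteeing some value is hit at least twice. To get that \emph{no} $x$ is uniquely mapped, I note that any uniquely-mapped $x$ would have $xT$ in a fiber of $g\circ(\cdot T)$ of size one; counting shows the total domain mass $\card{\alphabet}^s$ distributed over at most $\card{\alphabet}^\lambda$ values forces every value to be attained, and semi-injectivity requires only one singleton fiber, so I must rule that out carefully.

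The main obstacle, and the step I expect to require the most care, is precisely this last point: semi-injectivity asks only for the existence of a \emph{single} uniquely-mapped input, so a crude global counting argument that merely shows $f$ is non-injective is not enough — I must show every input fails to be uniquely mapped. The clean fix is to return to the collision structure of $T$ rather than $f$: because $x \mapsto xT$ is an $\alphabet$-linear map between free modules with $\lambda < s$, its kernel (as a subgroup under addition) is nontrivial, containing some nonzero $z$ with $zT = 0$. Then for \emph{every} $x$, the distinct element $x + z$ satisfies $(x+z)T = xT$, whence $f(x+z) = g((x+z)T) = g(xT) = f(x)$, so $x + z \in f^{-1}(\{f(x)\})$ and $x+z \ne x$. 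This exhibits, for each individual $x$, a genuine distinct preimage of $f(x)$, so no fiber is a singleton and $f$ is not semi-injective. Establishing that the additive kernel of $T$ is nontrivial — which follows since $x \mapsto xT$ is a group homomorphism from a group of order $\card{\alphabet}^s$ to one of order $\card{\alphabet}^\lambda$ with $\card{\alphabet}^\lambda < \card{\alphabet}^s$, hence non-injective with nontrivial kernel — is the crux, and the uniform shift by $z$ is what upgrades non-injectivity to the full failure of semi-injectivity.
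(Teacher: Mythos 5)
Your proof is correct and, in its final form, is essentially the paper's own argument: both hinge on producing a nonzero $d$ (your $z$) with $dT = 0$ and then observing that $f(x+d) = g((x+d)T) = g(xT) = f(x)$ for \emph{every} $x \in \alphabet^s$, so no fiber of $f$ is a singleton. The only difference is that you explicitly justify the existence of such a $d$ by the pigeonhole count on the finite additive groups $\alphabet^s$ and $\alphabet^\lambda$ (a sound precaution over a general ring, where null-space language is less immediate), whereas the paper simply asserts it from $\lambda < s$; your intermediate detour through fiber-counting is unnecessary, but you correctly identify and discard it yourself.
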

\begin{proof}
Suppose that a target function $f$ is reducible.
Then there exists
an integer $\lambda$ satisfying $\lambda < s$,  
matrix $T \in \alphabet^{s \times \lambda}$,
and map $g:\alphabet^\lambda \longrightarrow \decodeAlphabet$ 
such that
\begin{align} \label{Eq:red21}
g(xT) & = f(x) \ \mbox{ for each }  x \in \alphabet^s.
%& \Comment{Definition~\ref{Def:linearReducible}}.
\end{align}
Since $\lambda < s$, there exists a non-zero $d \in \alphabet^s$ such that $d T = 0$. 
Then for each $x \in \alphabet^s$, %we have
\begin{align}
%f(a d + x) & = g((ad + x)T)  & \Comment{\eqref{Eq:red1}} \notag \\
% & = g(xT)                      & \Comment{$dT = 0$} \\
% & = f(x).                      & \Comment{\eqref{Eq:red1}}
%
f(d + x) & = g((d + x)T) = g(xT) = f(x)
\label{Eq:AltCondn2}                                     
\end{align}
so $f$ is not semi-injective.
\end{proof}

\begin{definition} \label{Def:perp}
Let $\alphabet$ be a finite field
and let $\submodule$ be a subspace of the vector space $\alphabet^s$ over the scalar field $\alphabet$.
%The orthogonal complement of the subspace $\submodule$ is denoted by
Let
$$
\submodule^{\perp} = \left\{ y \in \alphabet^s : xy^t = 0 \; \mbox{for all $x \in \submodule$} \right\}
$$
and let $\dim(\submodule)$ denote the dimension
of $\submodule$ over $\alphabet$.
\end{definition}
\begin{lemma}%
\footnote{
This lemma is a standard result in coding theory regarding dual codes over finite fields,
even though the operation $xy^t$ is not an inner product
(e.g. \cite[Theorem 7.5]{Hill-book} % pg. 69
or 
\cite[Corollary 3.2.3]{Nebe-Rains-Sloane}). % pg. 88
An analogous result for orthogonal complements over 
inner product spaces is well known in linear algebra
(e.g. \cite[Theorem 5 on pg. 286]{Hoffman-Kunze}). % Page 286 (Theorem 4, page 231 in the 1961 version)
}
\label{Lemma:perp}
If  $\alphabet$ is a finite field and  $\submodule$ is a subspace of vector space  $\alphabet^s$,
then $(\submodule^{\perp})^{\perp}$ = $\submodule$.
\end{lemma}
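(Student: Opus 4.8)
The plan is to prove the two statements that together yield the claim: first, the dimension formula $\dim(\submodule) + \dim(\submodule^{\perp}) = s$, and second, the containment $\submodule \subseteq (\submodule^{\perp})^{\perp}$. Once both are in hand, applying the dimension formula a second time (to $\submodule^{\perp}$ in place of $\submodule$) gives $\dim\bigl((\submodule^{\perp})^{\perp}\bigr) = s - \dim(\submodule^{\perp}) = \dim(\submodule)$, and a containment between two finite-dimensional subspaces of equal dimension forces equality.

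The easy containment comes first. If $x \in \submodule$, then by the definition of $\submodule^{\perp}$ we have $xy^t = 0$ for every $y \in \submodule^{\perp}$; reading this symmetrically, $x$ annihilates every element of $\submodule^{\perp}$, so $x \in (\submodule^{\perp})^{\perp}$. Hence $\submodule \subseteq (\submodule^{\perp})^{\perp}$. I would note here that the bilinear form $(x,y) \mapsto xy^t$ on $\field{q}^s$ is symmetric, which is what makes this step clean.

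For the dimension formula I would work with a matrix whose rows span $\submodule$. Pick a basis of $\submodule$, say of size $r = \dim(\submodule)$, and let $H$ be the $r \times s$ matrix over $\alphabet$ whose rows are these basis vectors; then $H$ has rank $r$. By definition, $\submodule^{\perp} = \{ y \in \alphabet^s : H y^t = 0 \}$ is exactly the (right) null space of $H$, so by the rank–nullity theorem over the field $\alphabet$ we get $\dim(\submodule^{\perp}) = s - \rank(H) = s - r$. This is where finiteness of the field and, more importantly, its being a \emph{field} rather than a general ring, is essential: rank–nullity and the fact that equal-dimensional subspaces coincide under containment can fail over rings.

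The one place demanding genuine care — and what I expect to be the main obstacle — is that the form $xy^t$ is not positive definite: over a finite field one can have a nonzero vector $x$ with $x x^t = 0$, so $\submodule$ and $\submodule^{\perp}$ need not be complementary and the usual inner-product-space proof (orthogonal projection, direct sum decomposition) is unavailable. The argument must therefore avoid $\submodule \oplus \submodule^{\perp} = \alphabet^s$ entirely and lean only on the rank–nullity dimension count together with the containment, as above; this is precisely why the lemma is phrased and proved via dual-code techniques rather than via orthogonal complements. Assembling the two ingredients then closes the proof by the equal-dimension-plus-containment principle.
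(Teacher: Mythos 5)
Your proposal is correct, but there is nothing in the paper to compare it against line by line: the paper does not prove Lemma~\ref{Lemma:perp} at all. It states the lemma with a footnote deferring to standard references (Hill; Nebe--Rains--Sloane; Hoffman--Kunze) and then uses it as a known fact inside the proof of Lemma~\ref{Lemma:equivalence}. What you have written is a correct, self-contained version of exactly the standard dual-code argument those references contain: the containment $\submodule \subseteq (\submodule^{\perp})^{\perp}$ (which, as you note, uses the symmetry of the form $(x,y) \mapsto xy^t$), plus the dimension formula $\dim(\submodule^{\perp}) = s - \dim(\submodule)$ obtained by identifying $\submodule^{\perp}$ with the null space of a full-row-rank generator matrix $H$ of $\submodule$ and invoking rank--nullity, applied twice and combined with the equal-dimension-plus-containment principle. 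You also correctly flag the one genuine subtlety --- the form $xy^t$ is isotropic over finite fields (nonzero $x$ can satisfy $xx^t = 0$), so the familiar orthogonal-projection proof via $\submodule \oplus \submodule^{\perp} = \alphabet^s$ is unavailable and must be replaced by the dimension count --- which is precisely the point the paper's footnote is gesturing at when it distinguishes this lemma from the inner-product-space analogue. One minor imprecision: finiteness of $\alphabet$ is never actually used in your argument; the proof goes through verbatim over any field, and only the field axioms (not finiteness, and not merely a ring structure) are what rank--nullity requires. This costs you nothing here, since the lemma is only claimed for finite fields.
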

%
% Note that over finite fields we do not necessarily have $A^s = M + M^\perp$, as we would over an
% inner product space. E.g. Let A={0,1} and s=2 and let M={(0,0), (1,1)}. Then M^\perp = M.
% THe problem is that the operation xy^t is not an inner product since it does not satisfy the
% nonnegativeness axiom that xx^t \ge 0 for finite fields. In fact xy^t is not a real or complex
% number, but rather a finite field element itself, so the nonnegativity doesn't make sense.
% However, a theorem in Hoffman & Kunze (Theorem 2 on page 71) gives dim(M) + dim(M^\perp) = s via
% dim(Range(T)) + dim(Nullity(T)) = n. This is the only difficult line in the commented out proof below.
% This can also be found in Hungerford but the argument is more involved.
%
%
%\begin{proof}[{\it Proof} 
%We have
%
%\begin{align} \label{Eq:inclusion1}
%\end{align}
%
%Furthermore, from basic linear algebra, we also have
%
%\begin{align}
%\submodule & \subset (\submodule^{\perp})^{\perp}. & \Comment{Definition~\ref{Def:perp}} \label{Eq:inclusion1} \\
%\dim(\submodule) + \dim(\submodule^{\perp}) & = s  \label{Eq:equality1} \\
%\dim(\submodule^{\perp}) + \dim((\submodule^{\perp})^{\perp}) & = s \label{Eq:equality2} \\
%(s - \dim(\submodule)) + \dim((\submodule^{\perp})^{\perp}) & = s \nonumber & \Comment{ \eqref{Eq:equality1} and \eqref{Eq:equality2}}\\
%\dim((\submodule^{\perp})^{\perp}) & = \dim(\submodule). \nonumber 
%\end{align}
%
%The claim follows from the last equality and \eqref{Eq:inclusion1}.
%\end{proof}
%

Lemma~\ref{Lemma:equivalence} 
will be used in Theorem~\ref{Th:linearCodingCapacity}.
The lemma states an alternative 
characterization of reducible target functions when the source alphabet is a finite field
and of semi-injective target functions when the source alphabet is a group.

\begin{lemma} \label{Lemma:equivalence}
Let $\Network$ be a network with 
target function 
$f : \alphabet^{\cardsources} \longrightarrow \decodeAlphabet$
and alphabet $\alphabet$.
\begin{itemize}
\item[(i)]
Let $\alphabet$ be a finite field.
$f$ is reducible if and only if 
there exists a non-zero $d \in \alphabet^s$ 
such that for each $a \in \alphabet$ and each $x \in \alphabet^s$,
\begin{align*} %\label{Eq:Fclass}
f(ad + x)  =  f(x).
\end{align*} 

\item[(ii)]
Let $\alphabet$ be a group.
$f$ is semi-injective if and only if
there exists $x \in \alphabet^s$
such that for every non-zero $d \in \alphabet^s$,
\begin{align*} %\label{Eq:Fclass}
f(d + x) \ne  f(x).
\end{align*} 
\end{itemize}
(The arithmetic in $ad+x$ and $d+x$ is performed component-wise over the corresponding $\alphabet$.)
% i.e. ad+x over the field or d+x over the group.
\end{lemma}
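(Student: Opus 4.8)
The plan is to prove each of the two equivalences separately, in each case establishing the two directions of the biconditional and leaning on the characterizations already developed in the excerpt.

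For part (i), where $\alphabet$ is a finite field, I would first handle the ``only if'' direction. Suppose $f$ is reducible, so there is an $s \times \lambda$ matrix $T$ with $\lambda < s$ and a map $g$ with $g(xT) = f(x)$. Since $\lambda < s$, the null-space $\submodule = \{d \in \alphabet^s : dT = 0\}$ is a nonzero subspace, so pick any nonzero $d \in \submodule$. Then for every scalar $a \in \alphabet$ we have $(ad)T = a(dT) = 0$, so for each $x$,
\begin{align*}
f(ad + x) = g((ad+x)T) = g(adT + xT) = g(xT) = f(x),
\end{align*}
which is exactly the claimed condition. This is essentially the same computation as in Lemma~\ref{Lemma:semiInj}, now strengthened to use the whole line $\{ad : a \in \alphabet\}$ rather than just the single vector $d$, which is available precisely because we are over a field and $T$ is linear.

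For the ``if'' direction of (i), I would use the displayed condition to produce a reducing matrix $T$. Let $\submodule$ be the set of all $d \in \alphabet^s$ satisfying $f(ad+x)=f(x)$ for all $a,x$; the hypothesis gives a nonzero element, and I would verify $\submodule$ is a subspace (closure under scalar multiplication is immediate from the condition, and closure under addition follows by applying the invariance twice). Then I would take $T$ to be a matrix whose columns span $\submodule^{\perp}$, so that $\lambda = \dim(\submodule^{\perp}) = s - \dim(\submodule) < s$. The key point is that $xT$ determines $f(x)$: if $x T = x' T$ then $x - x' \in (\submodule^{\perp})^{\perp} = \submodule$ by Lemma~\ref{Lemma:perp}, and membership of $x - x'$ in $\submodule$ forces $f(x) = f(x' + (x-x')) = f(x')$ via the invariance (taking $a=1$). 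Hence $f$ factors through $x \mapsto xT$, giving a well-defined $g$ with $g(xT) = f(x)$, so $f$ is reducible. I expect this direction to be the main obstacle, since it requires the dual-space machinery of Lemma~\ref{Lemma:perp} and a careful check that $\submodule$ is genuinely a subspace rather than merely a union of lines.

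Part (ii) is more elementary, as it is essentially unwinding definitions over a group. For the ``only if'' direction, suppose $f$ is semi-injective, so there exists $x$ with $f^{-1}(\{f(x)\}) = \{x\}$. Then for any nonzero $d$, the element $d + x$ is distinct from $x$ (using that $\alphabet$ is a group so $d + x = x$ implies $d = 0$), hence $d + x \notin f^{-1}(\{f(x)\})$, which means $f(d+x) \ne f(x)$. Conversely, if there is an $x$ with $f(d+x) \ne f(x)$ for every nonzero $d$, then every $y \ne x$ can be written $y = d + x$ for the nonzero $d = y - x$, giving $f(y) \ne f(x)$; thus $x$ is the unique preimage of $f(x)$ and $f$ is semi-injective. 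The only subtlety is the bijective correspondence $d \leftrightarrow d + x$ on the nonzero group elements, which is exactly what the group structure provides.
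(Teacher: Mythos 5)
Your proof is correct and takes essentially the same approach as the paper: the forward directions are the same null-space computations, and your converse in (i) uses the orthogonal complement together with Lemma~\ref{Lemma:perp} to show that $f$ factors through $x \mapsto xT$, exactly as the paper does. The only (harmless) difference is that you take $\mathcal{M}$ to be the full subspace of invariant directions, which obliges you to check closure under addition, whereas the paper simply takes $\mathcal{M}$ to be the one-dimensional span of the given $d$ and sets $\lambda = s-1$.
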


\begin{proof}
(i)
If $f$ is reducible, then there exists
an integer $\lambda$ satisfying $\lambda < s$,  
matrix $T \in \alphabet^{s \times \lambda}$,
and map $g:\alphabet^\lambda \longrightarrow \decodeAlphabet$ 
such that
\begin{align} \label{Eq:red1}
g(xT) & = f(x) \ \mbox{ for each }  x \in \alphabet^s.
%& \Comment{Definition~\ref{Def:linearReducible}}.
\end{align}
Since $\lambda < s$, there exists a non-zero $d \in \alphabet^s$ such that $d T = 0$. 
Then for each $a \in \alphabet$ and each $x \in \alphabet^s$, %we have
\begin{align}
%f(a d + x) & = g((ad + x)T)  & \Comment{\eqref{Eq:red1}} \notag \\
% & = g(xT) 			 & \Comment{$dT = 0$} \\
% & = f(x).			 & \Comment{\eqref{Eq:red1}}
%
f(a d + x) & = g((ad + x)T) = g(xT) = f(x).
\label{Eq:AltCondn}					 
\end{align}
%
%This completes the  proof in one direction.
%It follows that $f$ is constant in the direction of $d$.
Conversely, suppose that there exists a non-zero $d$ such that 
\eqref{Eq:AltCondn} holds for every $a \in \alphabet$ and every $x \in \alphabet^s$
and let $\submodule$ be the one-dimensional subspace of $\alphabet^{\cardsources}$ spanned by $d$.
Then 
\begin{equation}
\label{Eq:FuncSubspace}
f(t + x) = f(x) \ \mbox{ for every } t \in \submodule, x \in \alphabet^s.
\end{equation}
Note that $\dim(\submodule^{\perp}) = \cardsources -1$. 
Let $\lambda = \cardsources - 1$, 
let $T \in \alphabet^{s \times \lambda}$ be a matrix such that its columns 
form a basis for $\submodule^{\perp}$, 
and let $\mathcal{R}_T$ denote the row space of $T$.
Define the map 
$$
g : \mathcal{R}_T \longrightarrow f(\alphabet^s) 
$$
as follows. 
For any $y \in \mathcal{R}_T$ such that $y =  xT$ for $x \in \alphabet^s$, 
let
\begin{align} \label{Eq:mapConstruct}
g(y) = g(xT) = f(x).
\end{align}
%
%We need to show that $g$ is well defined, 
%i.e., 
Note that if $y = x^{(1)}T = x^{(2)}T$ for $x^{(1)} \neq x^{(2)}$,
then 
\begin{align}
  (x^{(1)} - x^{(2)})  T & = 0 \nonumber \\
x^{(1)}-x^{(2)} & \in (\submodule^{\perp})^{\perp} & \Comment{construction of $T$} \nonumber \\
x^{(1)}-x^{(2)} & \in \submodule & \Comment{Lemma~\ref{Lemma:perp}} \nonumber \\
f(x^{(1)})  & = f((x^{(1)}-x^{(2)}) + x^{(2)}) \nonumber \\
 				& = f(x^{(2)}). & \Comment{\eqref{Eq:FuncSubspace}} \nonumber
\end{align}
Thus $g$ is well defined. 
Then from \eqref{Eq:mapConstruct} and Definition~\ref{Def:linearReducible}, 
$f$ is reducible.
\remove{
Also, 
\begin{align*}
g(xT) & = f(x) \quad \forall \ x \in \alphabet^s & \Comment{\eqref{Eq:mapConstruct}} \\
	 T  & \in \alphabet^{s \times (s-1)}. & \Comment{$\dim(\mathcal{R}_T)=s-1$}
\end{align*}
}

\medskip

(ii)
Since $f$ is semi-injective, 
there exists a  $x \in \alphabet^s$ such that $\{x\} = f^{-1}(\{f(x)\})$,
which in turn is true if and only if
for each non-zero $d \in \alphabet^s$, 
we have $f(d + x)  \neq  f(x)$.

\end{proof}
The following example shows 
that if the alphabet $\alphabet$ is not a finite field,
then the assertion in Lemma~\ref{Lemma:equivalence}(i)
may not be true.
\begin{example} \label{Ex:counterExample}
Let $\alphabet = \integer_4$,
let $f : \alphabet \longrightarrow \alphabet$ be the target function defined by $f(x) = 2x$,
and let $d = 2$.
Then,
for all $a \in \alphabet$,
\begin{align*}
f(2a+x)  & = 2 (2a+x) \\
		& = 2x & \Comment{$4=0$ in $\integer_4$}\\
		& = f(x)
\end{align*}
but,
$f$ is not reducible,
since $s = 1$.
\end{example}
%
%%%%%%%%%%%%%%
Theorem~\ref{Th:linearCodingCapacity} establishes for a
network with a finite field alphabet, 
whenever the target function is not reducible,
linear computing capacity is equal to the routing computing capacity,
and therefore
if a linear network code is used, 
the receiver ends up learning all the source messages even 
though it only demands a function of these messages.

For network coding 
(i.e. when $f$ is the identity function),
many multi-receiver networks have a larger linear capacity than their routing capacity.
However,
all single-receiver networks are known to achieve their coding
capacity with routing~\cite{AprilLehman-EricLehman-04}.
For network computing,
the next theorem shows that
with non-reducible target functions
there is no advantage to using linear coding over routing.%
\footnote{
As a reminder, ``network'' here refers to single-receiver networks in the context of computing.
}

\begin{theorem} \label{Th:linearCodingCapacity}
Let $\Network$ be a network with 
target function 
$f : \alphabet^{\cardsources} \longrightarrow \decodeAlphabet$
and alphabet $\alphabet$.
If
$\alphabet$ is a finite field and $f$ is not reducible,
or
$\alphabet$ is a ring with identity and $f$ is semi-injective,
then
$$
\linCodCap{\Network,f} =  \routCap{\Network,f}.
$$
\end{theorem}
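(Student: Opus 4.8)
The plan is to prove only the nontrivial inequality $\linCodCap{\Network,f} \le \routCap{\Network,f}$; the reverse inequality holds because routing is a special case of $\alphabet$-linear coding (its encoding functions are coordinate projections, hence linear), so $\linCodCap{\Network,f} \ge \routCap{\Network,f}$. By Theorem~\ref{Th:routingCapacity} and \eqref{Eq:routingCap} we have $\routCap{\Network,f} = \min_{C \in \cuts{\Network}} \card{C}/\card{I_C}$, so it suffices to show that every $(k,n)$ linear code computing $f$ satisfies $k\,\card{I_C} \le n\,\card{C}$ for each cut $C \in \cuts{\Network}$.

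First I would fix a cut $C$ and isolate the linear dependence through it. Writing the symbols carried on the edges of $C$ as an $\alphabet$-linear function of all source messages, let $\phi : \alphabet^{k\card{I_C}} \to \alphabet^{n\card{C}}$ be the induced $\alphabet$-linear map recording how those symbols depend on the messages of the separated sources $\{\source_i : i \in I_C\}$. The key structural input is the standard cut-set property: since every directed path from a separated source to $\receiver$ meets $C$, the receiver's inputs — hence all $k$ of its computed values — are determined by the symbols on $C$ together with the messages of the non-separated sources. Consequently, if $d \in \alphabet^{k\card{I_C}}$ lies in $\ker \phi$, then for every choice of the remaining messages the two message generators differing by $d$ on the separated sources produce identical symbols on $C$, and therefore identical receiver outputs in all $k$ components.

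The heart of the proof is to show $\phi$ is injective. Suppose not; then by additivity of $\alphabet$-linear maps there is a nonzero $d \in \ker\phi$. Decomposing $d$ into message vectors $d^{(i)} \in \alphabet^k$ ($i \in I_C$) and setting, for each component $j$, a vector $\delta^{(j)} \in \alphabet^s$ by $\delta^{(j)}_i = d^{(i)}_j$ for $i \in I_C$ and $\delta^{(j)}_i = 0$ otherwise, the forced equality of receiver outputs yields $f(b + \delta^{(j)}) = f(b)$ for every $j$, where $b \in \alphabet^s$ is the tuple of $j$-th components of the source messages and may be chosen freely. Pick an index $j_0$ with $\delta^{(j_0)} \ne 0$ (possible since $d \ne 0$) and split into cases. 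In the semi-injective case over a ring with identity, I would invoke Lemma~\ref{Lemma:equivalence}(ii): choosing $b$ to realize the special point $x$ guaranteed there gives $f(x + \delta^{(j_0)}) = f(x)$ with $\delta^{(j_0)} \ne 0$, a contradiction. In the finite-field, non-reducible case, letting $b$ range over all of $\alphabet^s$ gives $f(x + \delta^{(j_0)}) = f(x)$ for every $x$, and replacing $d$ by its scalar multiples $ad$ (legitimate because $\ker\phi$ is an $\alphabet$-subspace) upgrades this to $f(x + a\delta^{(j_0)}) = f(x)$ for all $a \in \alphabet$ and all $x$; Lemma~\ref{Lemma:equivalence}(i) then forces $f$ to be reducible, a contradiction.

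Once $\phi$ is injective, a cardinality count gives $\card{\alphabet}^{k\card{I_C}} \le \card{\alphabet}^{n\card{C}}$, and since $\card{\alphabet} \ge 2$ this is exactly $k\,\card{I_C} \le n\,\card{C}$, establishing the bound for every cut and hence the theorem. I expect the main obstacle to be the bookkeeping in the injectivity step: proving the cut-set property cleanly for block codes, and, in the field case, correctly exploiting both the freedom in the auxiliary messages (to obtain the conclusion for all $x$) and the subspace structure of the kernel (to obtain all scalar multiples $a$). The latter is essential because the non-reducibility hypothesis, unlike semi-injectivity, only produces a contradiction once every scalar multiple of a single direction $\delta^{(j_0)}$ is available.
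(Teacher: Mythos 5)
Your proposal is correct and follows essentially the same route as the paper's proof: fix a cut $C$, use the cut-set property that the receiver's inputs are determined by the symbols on $C$ together with the non-separated messages, exploit linearity to show the map from separated-source messages to cut symbols must be injective (via Lemma~\ref{Lemma:equivalence}, with scalar multiples of the kernel element in the field case and the single shift in the semi-injective case), and conclude with the counting bound $k\,\card{I_C} \le n\,\card{C}$ and \eqref{Eq:routingCap}. The only difference is cosmetic: you phrase the argument through the kernel of a linear map $\phi$ and apply Lemma~\ref{Lemma:equivalence}(i) in its forward direction to derive reducibility, whereas the paper works with matrix differences $w-y$ and invokes the negation of the lemma to produce witnesses $a,x$ yielding a contradiction.
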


\begin{proof}
Since any routing code is in particular a linear code,
% Note: We need the identity in the ring case here in order to get that routing is a special case of linear coding.
%
$$
\linCodCap{\Network,f} \ge \routCap{\Network,f}.
$$
Now consider
a $(k,n)$ linear code that computes the target function $f$ in $\Network$
and let $C$ be a cut.
We will show that for any two collections of source messages,
if the messages agree at sources not separated from $\receiver$ by $C$
and the vectors agree on edges in $C$,
then there exist two other source message collections with different target function values,
such that the receiver $\receiver$ cannot distinguish this difference.
In other words, the receiver cannot properly compute the target function in the network.

For each $e \in C$, there exist $k \times n$ matrices 
$\MatEdge{e}{1}, \ldots, \MatEdge{e}{\cardsources}$ such that the vector carried on $e$ is
$$
\sum_{i=1}^\cardsources \sourceVec{\source_i} \MatEdge{e}{i}.
$$
For any matrix $M$, denote its $j$-th column by $M^{(j)}$.
Let $w$ and $y$ be different $k\times s$ matrices over $\alphabet$,
whose $j$-th columns agree for all $j \notin I_C$.

Let us suppose that the vectors carried on the edges of $C$,
when the the column vectors of $w$ are the source messages,
are the same as
when the the column vectors of $y$ are the source messages.
Then,
for all $e \in C$,
\begin{align} \label{Eq:matrixLnr}
\sum_{i=1}^\cardsources w^{(i)} \MatEdge{e}{i} 
= \sum_{i=1}^\cardsources y^{(i)} \MatEdge{e}{i}.
\end{align}
We will show that this 
leads to a contradiction,
namely that $\receiver$ cannot compute $f$.
Let $m$ be an integer such that if
$d$ denotes the $m$-th row of $w-y$, 
then $d \neq 0$.
For the case where $\alphabet$ is a field and $f$ is not reducible,
by Lemma~\ref{Lemma:equivalence}(i), there exist
$a \in \alphabet$ 
and $x \in \alphabet^s$
such that $ad \ne 0$ and
\begin{align} \label{Eq:contradiction}
f(ad + x) \neq f(x).
\end{align}
In the case where $\alphabet$ is a ring with identity and $f$ is semi-injective,
we obtain \eqref{Eq:contradiction} from Lemma~\ref{Lemma:equivalence}(ii) in the special case of $a=1$.

Let $u$ be any $k\times s$ matrix over $\alphabet$ 
whose $m$-th row is $x$  and let $v = u+a(w-y)$.
From \eqref{Eq:contradiction}, 
the target function $f$ differs on the $m$-th rows of $u$ and $v$.
Thus, 
the vectors on the in-edges of the receiver $\receiver$ 
must differ between two cases:
(1) when the sources messages are the columns of $u$,
and
(2) when the sources messages are the columns of $v$.
The vector carried by any in-edge of the receiver is a function of
each of the message vectors $\sourceVec{\source_j}$,
for  $j \notin I_C$,
and the vectors carried by the edges in the cut $C$. 
Furthermore, the $j$-th columns of $u$ and $v$ agree if $j \notin I_C$.
Thus, at least one of the vectors on an edge in $C$ 
must change when the set of source message vectors 
changes from $u$ to $v$. 
However this is contradicted by the fact that
for all $e \in C$,
the vector carried on $e$ when the columns of $u$ are the source messages is
\begin{align} 
\nonumber
\sum_{i=1}^{\cardsources} u^{(i)} \MatEdge{e}{i}  
& = \sum_{i=1}^{\cardsources} u^{(i)} \MatEdge{e}{i}  
+ a \sum_{i=1}^{\cardsources}   (\messageVecInst^{(i)}-y^{(i)})) 
\MatEdge{e}{i}   &\Comment{\eqref{Eq:matrixLnr}} \\
% & = \sum_{i=1}^{\cardsources} (u^{(i)} + a(\messageVecInst^{(i)}-y^{(i)})) \MatEdge{e}{i}  \\
	& = \sum_{i=1}^{\cardsources} v^{(i)} \MatEdge{e}{i} 
\label{Eq:edgeVecEq}
\end{align}
which is also
the vector carried on $e$ when the columns of $v$ are the source messages.

Hence, for any two different matrices $w$ and $y$ whose $j$-th columns agree for all
$j \notin I_C$, 
at least one vector carried by an edge in the cut $C$ has to differ in value
in the case where the source messages are the columns of $w$
from
the case where the source messages are the columns of $y$.
This fact implies that
$$
(\card{\alphabet}^n)^{\card{C}} \ge (\card{\alphabet}^k)^{\card{I_C}}
$$
and thus
\begin{align*}
\frac{k}{n} & \le \frac{\card{C}}{\card{I_C}}. %\label{Eq:uBoundLinCod}
			%& = \frac{\card{C}}{\card{S_C}}. 
\end{align*}
Since the cut $C$ is arbitrary, 
we conclude (using \eqref{Eq:routingCap}) that 
\begin{align*} %\label{Eq:linCodCap}
\frac{k}{n} & \le \underset{C \in \cuts{\Network}}{\min} 
 \ \frac{\card{C}}{\card{I_C}} = \routCap{\Network,f}.
\end{align*}
Taking the supremum over all $(k,n)$ linear network codes that compute $f$ in $\Network$,
we get
\begin{align*} %\label{Eq:linCodCap}
\linCodCap{\Network,f} & \le \routCap{\Network,f}.
\end{align*}
\remove{
%\begin{itemize}
%\item[(1)] $w \neq y$.
%\item[(2)] $y^{(j)} = \messageVecInst^{(j)}$ for every $j \notin I_C$.
%\end{itemize}
We claim that the edge vector on at least one of the edges in $C$ changes 
when the message generator changes from 
$\alpha(\source_i) = w^{(i)}$ (for $i \in \{1,2,\cdots,s\}$)
to 
$\alpha(\source_i) = y^{(i)}$ (for $i \in \{1,2,\cdots,s\}$).

We prove this claim by contradiction.
Suppose that the linear code assigns the same vectors on
all edges in $C$ for both cases. Then %we have
\begin{align}
\sum_{i=1}^{\cardsources} \messageVecInst^{(i)} \encodeMatrix{j,i}  & = \sum_{i=1}^{\cardsources} y^{(i)} \encodeMatrix{j,i} & \mbox{for all $j \in \{1,2,\ldots,\card{C}\}$} \nonumber \\
\sum_{i=1}^{s} (\messageVecInst^{(i)} - y^{(i)}) \encodeMatrix{j,i} & = 0 & \mbox{for all $j \in \{1,2,\ldots,\card{C}\}$}. \label{Eq:zeroCutVec}
\end{align}
Since $w$ and $y$ are distinct,
for some $m \in \{1,2,\ldots,k\}$, %the length-$s$ vector

%
%$$
%d = \left( 
%(\messageVecInst^{(1)} - y^{(1)})_m,
%(\messageVecInst^{(2)} - y^{(2)})_m,
%\ldots,
%(\messageVecInst^{(\cardsources)} - y^{(\cardsources)})_m
%\right). % \; \mbox{and $x \neq (0,0,\ldots,0)$}.
%$$
For each $i \in \{1,2,\ldots,s\}$, choose $v^{(i)},u^{(i)} \in \alphabet^k$  
such that 
\begin{align*}
\left( (v^{(1)})_j,(v^{(2)})_j,\ldots,(v^{(\cardsources}))_j \right) & = x \; \, \mbox{for all} \;  j\ \in \ \{1,2,\ldots,k\} \\
 u^{(i)} & = v^{(i)} + a(\messageVecInst^{(i)}-y^{(i)}) \; \, \mbox{for all} \;  i \ \in \ \{1,2,\ldots,s\}
\end{align*}

From \eqref{Eq:zeroCutVec} we also have
%by $a$ and then adding 
%
\begin{align} 
\sum_{i=1}^{\cardsources}  (v^{(i)} + a(\messageVecInst^{(i)}-y^{(i)})) \encodeMatrix{j,i} =  \sum_{j=1}^{\cardsources} v^{(j)} \encodeMatrix{1,j} \quad \mbox{for all $j \in \{1,2,\ldots,\card{C}\}$}. \label{Eq:edgeVecEq}
\end{align}
This implies that each edge in $C$ carries the same vector when  $\sourceVec{\source_i} = v^{(i)}$ (for all $i \in \{1,2,\ldots,s\}$) 
as well as when $\sourceVec{\source_i} = u^{(i)}, i \in \{1,2,\ldots,s\}$. 
Furthermore, $u^{(i)} = v^{(i)}$ for all $ i \notin I_C$ (this follows  from the fact that $y^{(i)} = \messageVecInst^{(i)}$ for every $i \notin I_C$). 
Since the code is linear, every vector carried by an in-edge of the receiver is a matrix-linear combination of
the messages $\left\{ \sourceVec{\source_j} : j \notin I_C \right\}$ and the vectors carried 
by the edges in $C$.
Consequently the vector carried by each in-edge of the receiver does not change when 
the message generator changes from  $\sourceVec{\source_i} = v^{(i)}$ 
to $\sourceVec{\source_i} = u^{(i)}$.
In order for the code to compute $f$, we must have 
$$
f((u^{(1)})_j,(u^{(2)})_j,\ldots,(u^{(s)})_j) = f((v^{(1)})_j,(v^{(2)})_j,\ldots,(v^{(s)})_j) \quad \mbox{for all $j \in \{1,2,\ldots,k\}$}
$$
which contradicts \eqref{Eq:contradiction} because 
$$
((u^{(1)})_m,(u^{(2)})_m,\ldots,(u^{(s)})_m) = ad+x \; \mbox{and} \; ((v^{(1)})_m,(v^{(2)})_m,\ldots,(v^{(s)})_m)=x.
$$
Thus at least one edge in $C$ carries a different vector when one of the messages in 
$\{\sourceVec{\source_i}, i \in I_C\}$ changes and hence
$$
(\card{\alphabet}^n)^{\card{C}} \ge (\card{\alphabet}^k)^{\card{I_C}}
$$
and thus
\begin{align}
\frac{k}{n} & \le \frac{\card{C}}{\card{I_C}}. \label{Eq:uBoundLinCod}
			%& = \frac{\card{C}}{\card{S_C}}. 
\end{align}
Since the cut $C$ was arbitrary, we conclude that 
\begin{align} \label{Eq:linCodCap}
\frac{k}{n} & \le \underset{C \in \cuts{\Network}}{\min} 
 \ \frac{\card{C}}{\card{I_C}}. & \Comment{\eqref{Eq:uBoundLinCod}}
\end{align}
The statement of the theorem now follows from 
\eqref{Eq:routingCap} and \eqref{Eq:linCodCap}.
}
\end{proof}

\begin{figure}[ht]
\begin{center}
\psfrag{x0}{$\node$}
\psfrag{x1}{$\source_1$}
\psfrag{x2}{$\source_2$}
\psfrag{x3}{$\source_{s-1}$}
\psfrag{x4}{$\source_s$}
\psfrag{T}{$\receiver$}
\psfrag{L}{$s-1$}
\scalebox{.7}{\includegraphics{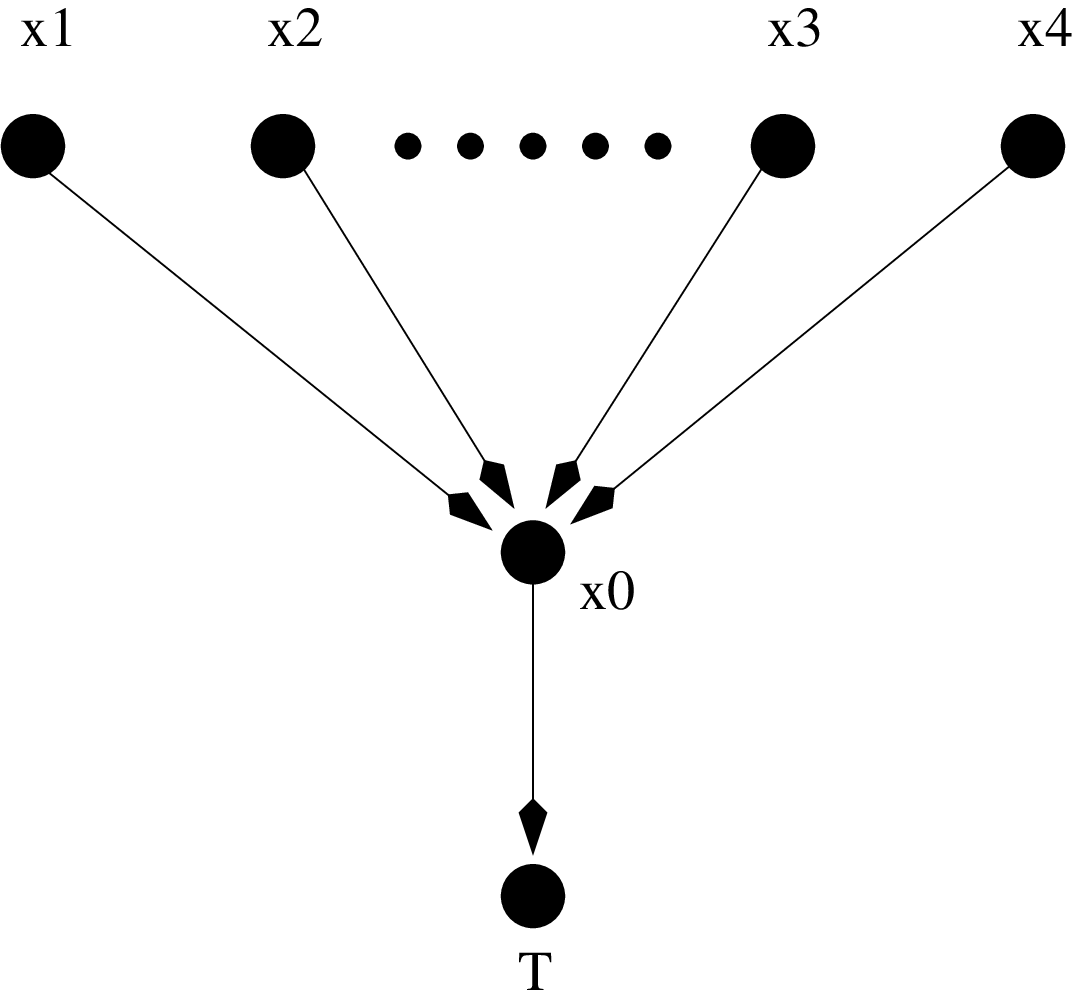}}
\end{center}
\caption{Network $\Network_{2,s}$ has sources
     $\source_{1}, \source_{2},\ldots, \source_{s}$, each connected to the relay 
$\node$ by an edge
and $\node$ is connected to the receiver by an edge.
     %$\receiver$ computes the arithmetic sum of the source messages.
     } \label{Fig:counterExample}
\end{figure}

Theorem~\ref{Th:linearCodingCapacity}
showed that if a network's target function is not reducible
(e.g. semi-injective target functions)
then there can be no 
computing capacity gain of using linear coding over routing.
The following theorem shows that
if the target function is injective,
then there cannot even be any nonlinear computing gain over routing.

Note that if the identity target function is used in Theorem~\ref{Th:injectTheorem},
then the result states that there is no coding gain over routing for ordinary network coding.
This is consistent since our stated assumption in Section~\ref{Sec:model}
is that only single-receiver networks are considered here
(for some networks with two or more receivers,
it is well known that
linear coding may provide network coding gain over network routing).
\begin{theorem} \label{Th:injectTheorem}%[Theorem~3.1,~\cite{computing1}] 
If $\Network$ is a network with an injective target function $f$, then
$$
\codCap{\Network,f} = \routCap{\Network,f}.
$$
\end{theorem}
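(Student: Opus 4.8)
The plan is to establish the two inequalities separately. The lower bound $\codCap{\Network,f} \ge \routCap{\Network,f}$ is immediate, since every routing code is in particular a (nonlinear) network code, so every achievable routing computing rate is an achievable computing rate.

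For the matching upper bound I would exploit the fact that an injective target function carries exactly the same information as the identity target function $\FunIdentity$. Concretely, given any $(k,n)$ code that computes $f$ in $\Network$, its decoder outputs, for each $j$, the value of $f$ applied to the $j$-th components of the source messages; since $f$ is injective, $f^{-1}$ is a well-defined map on the image $f(\alphabet^\cardsources)$, and composing the decoder componentwise with $f^{-1}$ recovers $(\sourceVec{\source_1}_j,\ldots,\sourceVec{\source_\cardsources}_j)$ for every $j$. This converts the given code into a $(k,n)$ code that computes $\FunIdentity$ at the same rate, so $\codCap{\Network,f} \le \codCap{\Network,\FunIdentity}$. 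Next I would invoke the known optimality of routing for single-receiver network coding (the receiver demanding all messages): by~\cite{AprilLehman-EricLehman-04}, as recalled just before~\eqref{Eq:routingCap}, $\codCap{\Network,\FunIdentity} = \routCap{\Network}$. Combining this with Theorem~\ref{Th:routingCapacity}, which gives $\routCap{\Network,f} = \routCap{\Network}$, yields $\codCap{\Network,f} \le \routCap{\Network} = \routCap{\Network,f}$ and completes the theorem.

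Alternatively, one can avoid citing the identity result and argue directly with a cut-counting bound, in the spirit of the proof of Theorem~\ref{Th:linearCodingCapacity} but using injectivity in place of non-reducibility and requiring no linearity. Fix a cut $C$ and two $k \times \cardsources$ message matrices $w \ne y$ that agree in every column indexed outside $I_C$. If $w$ and $y$ induced identical vectors on all edges of $C$, then every in-edge vector of $\receiver$ would coincide in the two cases, since each such vector is a function only of the edges in $C$ and of the messages at sources not separated by $C$; hence $\receiver$ would produce identical outputs. But $w$ and $y$ differ in some row $m$, and injectivity of $f$ forces the $m$-th outputs to differ, a contradiction. Thus distinct assignments of the columns indexed by $I_C$ yield distinct tuples of cut vectors, so $(\card{\alphabet}^n)^{\card{C}} \ge (\card{\alphabet}^k)^{\card{I_C}}$, i.e.\ $k/n \le \card{C}/\card{I_C}$; minimizing over $C \in \cuts{\Network}$ and using~\eqref{Eq:routingCap} gives the same bound.

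The main obstacle is making the identity-reduction airtight: one must verify that the decoder's output always lies in the image of $f$, so that $f^{-1}$ may legitimately be applied componentwise, and that the resulting map is a valid decoding function for $\FunIdentity$ with the same $(k,n)$ parameters, after which the conclusion is merely an assembly of Theorem~\ref{Th:routingCapacity} and the cited single-receiver routing optimality. For the self-contained alternative, the analogous delicate point is the claim that each in-edge vector of $\receiver$ depends only on the cut vectors and on the messages of sources not separated by $C$, which follows from the acyclicity of $G$.
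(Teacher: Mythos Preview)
Your proposal is correct, and your primary argument---reducing to the identity target function by composing the decoder componentwise with $f^{-1}$, then invoking~\cite{AprilLehman-EricLehman-04} together with Theorem~\ref{Th:routingCapacity}---is exactly the ``straightforward extension'' the paper's (very brief) proof alludes to; you have simply spelled out what the paper leaves implicit. Your alternative self-contained cut-counting argument is also valid and has the virtue of not relying on the external citation, at the cost of essentially re-deriving the relevant piece of~\cite{AprilLehman-EricLehman-04}.
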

\begin{proof}
It follows from~\cite[Theorem 4.2]{AprilLehman-EricLehman-04}
that for any single-receiver network $\Network$ and the identity target function $f$, 
we have $\codCap{\Network,f} = \routCap{\Network, f}$.
This can be straightforwardly 
extended to injective target functions for network computing.
% ? Is this really straightforward? - KZ
\end{proof}

Theorem~\ref{Th:linearCodingCapacity}
showed that there cannot be linear computing gain 
for networks whose target functions are not reducible,
and 
Theorem~\ref{Th:injectTheorem}
showed that the same is true
for target functions that are injective.
However, Theorem~\ref{Th:linearCodsWeak} will show via an example network that 
nonlinear codes may provide a capacity gain 
over linear codes if the target function is not injective. 
This reveals a limitation of 
linear codes compared to nonlinear ones for non-injective target functions that are not reducible. 
For simplicity, in Theorem~\ref{Th:linearCodsWeak} we only consider the case when there are two or more sources.
We need the following lemma first.

\begin{lemma} \label{Lemma:counterExample}
The computing capacity of the network
$\Network_{2,s}$ shown in Figure~\ref{Fig:counterExample},
with respect to a target function $f :  \alphabet^s \longrightarrow \decodeAlphabet$,
satisfies
$$
\codCap{\Network_{2,s},f} \ge  \min \left\{ 1, \; \frac{1}{\log_{\abs{\alphabet}} \abs{f\left( \alphabet^s\right)}} \right\}.
$$
\end{lemma}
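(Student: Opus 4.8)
The plan is to prove this lower bound by direct construction of a network code, exploiting the simple two-hop structure of $\Network_{2,s}$: every source $\source_i$ feeds the single relay $\node$, and $\node$ feeds the receiver $\receiver$ through a single edge. Abbreviate $R = \abs{f(\alphabet^\cardsources)}$, the number of distinct values taken by $f$. The idea is that the relay can gather all the source messages, compute the demanded function values itself, and then forward only those values (which lie in a set of size $R$ per coordinate) across the bottleneck edge to $\receiver$, rather than forwarding the raw messages.

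Concretely, for a positive integer $k$ I would set $n = \max\{k,\ \ceil{k \log_{\abs{\alphabet}} R}\}$ and build a $(k,n)$ code as follows. Since $n \ge k$, each source edge $(\source_i,\node)$ has enough capacity to carry the entire message vector $\sourceVec{\source_i} \in \alphabet^k$ (encode by an injection $\alphabet^k \hookrightarrow \alphabet^n$), so $\node$ learns all $\cardsources$ message vectors and can form, for each coordinate $j \in \{1,\dots,k\}$, the value $f(\sourceVec{\source_1}_j,\dots,\sourceVec{\source_\cardsources}_j) \in f(\alphabet^\cardsources)$. The $k$-tuple of these values ranges over a set of size at most $R^k$. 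Because $n \ge k\log_{\abs{\alphabet}} R$ gives $\abs{\alphabet}^n \ge R^k$, there is an injection from this set of tuples into $\alphabet^n$; I let $\node$ apply it and transmit the result on the edge $(\node,\receiver)$, and let $\receiver$ invert it to recover all $k$ function values. This is a valid $(k,n)$ code computing $f$ in $\Network_{2,s}$.

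It then remains to read off the achievable rate. For the chosen $n$, the rate is $k/n = \min\{1,\ k/\ceil{k\log_{\abs{\alphabet}} R}\}$. If $R \le \abs{\alphabet}$ then $\log_{\abs{\alphabet}} R \le 1$, the maximum defining $n$ is $k$, and the rate equals $1 = \min\{1, 1/\log_{\abs{\alphabet}} R\}$ for every $k$. If $R > \abs{\alphabet}$ then $\log_{\abs{\alphabet}} R > 1$, the maximum is $\ceil{k\log_{\abs{\alphabet}} R}$, and letting $k \to \infty$ the rate $k/\ceil{k\log_{\abs{\alphabet}} R}$ approaches $1/\log_{\abs{\alphabet}} R = \min\{1, 1/\log_{\abs{\alphabet}} R\}$ from below. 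Taking the supremum over $k$ in either case yields $\codCap{\Network_{2,s},f} \ge \min\{1, 1/\log_{\abs{\alphabet}} R\}$, as claimed.

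I do not expect a serious obstacle here; the only points requiring care are the cardinality count $\abs{\alphabet}^n \ge R^k$ that legitimizes the relay's compression step, and the integrality bookkeeping needed to confirm that the supremum of the rational rates $k/n$ attains (in the limit) the stated bound in both regimes. One should also confirm that the relay genuinely recovers every source message before computing $f$, which is immediate once $n \ge k$ is imposed.
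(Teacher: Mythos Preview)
Your proposal is correct and follows essentially the same approach as the paper: have the sources forward their raw messages to the relay (possible since $n\ge k$), let the relay compute the $k$ function values and compress them via an injection $(f(\alphabet^\cardsources))^k\hookrightarrow\alphabet^n$, and then read off the rate. The paper splits into the cases $\log_{\abs{\alphabet}}R<1$ (using $k=n=1$) and $\log_{\abs{\alphabet}}R\ge1$ (choosing $k,n$ with $k/n$ within $\epsilon$ of $1/\log_{\abs{\alphabet}}R$), whereas you unify both via $n=\max\{k,\lceil k\log_{\abs{\alphabet}}R\rceil\}$, but the content is the same.
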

\begin{proof}
%The network under consideration is a tree network.
%We obtain the desired as follows:
Suppose
\begin{align}\label{Eq:range1}
\log_{\abs{\alphabet}} \abs{f\left( \alphabet^s\right)} < 1.
\end{align}
Let $k=n=1$ and assume that each source node
sends its message to node $\node$.
Let 
$$
g \ : \ f\left( \alphabet^s\right) \longrightarrow \alphabet 
$$
be any injective map (which exists by \eqref{Eq:range1}).
Then the node $\node$ can compute $g$ and send it to the receiver.
The receiver can compute the value of $f$ from the value of $g$ 
and thus a rate of $1$ is achievable,
so 
$\codCap{\Network_{2,s},f} \ge 1$.

Now suppose
\begin{align} \label{Eq:rangeBound}
\log_{\abs{\alphabet}} \abs{f\left( \alphabet^s\right)} \ge 1.
\end{align}
Choose integers $k$ and $n$ such that 
\begin{align} \label{Eq:rateChoise}
\frac{1}{\log_{\abs{\alphabet}} \abs{f\left( \alphabet^s\right)}} - \epsilon \le \frac{k}{n} \le \frac{1}{\log_{\abs{\alphabet}} \abs{f\left( \alphabet^s\right)}}.
\end{align}
Now choose an arbitrary injective map (which exists by
\eqref{Eq:rateChoise})
$$
g \ : \ (f\left( \alphabet^s\right))^k \longrightarrow \alphabet^n.
$$
%
%be an arbitrary but fixed injective map (which exists by
%\eqref{Eq:rateChoise}).
Since $n \ge k$ (by \eqref{Eq:rangeBound} and \eqref{Eq:rateChoise}), we can still assume that each source sends its $k$-length message vector to node $\node$.
Node $\node$ computes $f$ for each of the $k$ sets of source messages,
 encodes those values into an $n$-length
vector over $\alphabet$ using the injective map $g$ and transmits it to the receiver.
The existence of a decoding function which satisfies \eqref{Eq:decodingFunction} is then obvious 
from the fact that $g$ is injective.
From \eqref{Eq:rateChoise}, the above code  achieves
a computing rate of
$$
\frac{k}{n} \ge \frac{1}{\log_{\abs{\alphabet}} \abs{f\left( \alphabet^s\right)}} - \epsilon.
$$
Since $\epsilon$ was arbitrary, it follows that the computing
capacity $\codCap{\Network_{2,s},f}$ is at least 
$1/\log_{\abs{\alphabet}} \abs{f\left( \alphabet^s\right)}$.
\end{proof}
%

% *****
\begin{theorem} \label{Th:linearCodsWeak}
Let $\alphabet$ be a finite field alphabet.
Let $s \ge 2$ and let $f$ be a target function that is neither injective
nor reducible.
Then there exists a network $\Network$ such that
$$
\codCap{\Network,f} > \linCodCap{\Network,f}.
$$
\end{theorem}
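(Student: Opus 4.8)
The plan is to exhibit the network $\Network_{2,s}$ of Figure~\ref{Fig:counterExample} as the required witness, comparing a linear/routing upper bound against the nonlinear lower bound supplied by Lemma~\ref{Lemma:counterExample}.

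First I would handle the linear side. Because $\alphabet$ is a finite field and $f$ is not reducible, Theorem~\ref{Th:linearCodingCapacity} gives $\linCodCap{\Network_{2,s},f} = \routCap{\Network_{2,s},f}$, and Theorem~\ref{Th:routingCapacity} further reduces this to the plain routing capacity $\routCap{\Network_{2,s}}$. I then evaluate the latter from \eqref{Eq:routingCap}: by \eqref{eq:IC_ub} every cut $C$ has $\card{I_C} \le s$ while trivially $\card{C} \ge 1$, so $\card{C}/\card{I_C} \ge 1/s$; the lone out-edge of the relay $\node$ is itself a cut separating all $s$ sources and attains this ratio. Hence $\routCap{\Network_{2,s}} = 1/s$, and consequently $\linCodCap{\Network_{2,s},f} = 1/s$.

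Next I would lower-bound the computing capacity. Applying Lemma~\ref{Lemma:counterExample} to $\Network_{2,s}$ yields
\begin{align*}
\codCap{\Network_{2,s},f} \ \ge\ \min\left\{ 1, \ \frac{1}{\log_{\abs{\alphabet}}\abs{f\left(\alphabet^s\right)}} \right\}.
\end{align*}
The decisive observation is that non-injectivity of $f$ forces $\abs{f\left(\alphabet^s\right)} < \abs{\alphabet}^s$, hence $\log_{\abs{\alphabet}}\abs{f\left(\alphabet^s\right)} < s$. Splitting into cases: if $\abs{f\left(\alphabet^s\right)} \le \abs{\alphabet}$ then the minimum equals $1$, which exceeds $1/s$ since $s \ge 2$; otherwise $1 < \log_{\abs{\alphabet}}\abs{f\left(\alphabet^s\right)} < s$, so the minimum equals $1/\log_{\abs{\alphabet}}\abs{f\left(\alphabet^s\right)} > 1/s$. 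Either way $\codCap{\Network_{2,s},f} > 1/s = \linCodCap{\Network_{2,s},f}$, which is exactly the claimed strict inequality.

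Given the earlier results the individual steps are routine; the only genuine content is recognizing that non-injectivity is precisely the hypothesis that pushes $\log_{\abs{\alphabet}}\abs{f\left(\alphabet^s\right)}$ strictly below $s$ and thereby opens a gap over the routing bottleneck $1/s$. The main place to take care is verifying that the Lemma's lower bound truly exceeds $1/s$ in both regimes of $\abs{f\left(\alphabet^s\right)}$, with the assumption $s \ge 2$ entering exactly to guarantee $1 > 1/s$ in the small-range case; reducibility plays no role beyond the single invocation of Theorem~\ref{Th:linearCodingCapacity}.
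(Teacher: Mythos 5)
Your proof is correct and takes essentially the same route as the paper's: the same witness network $\Network_{2,s}$, the same evaluation $\linCodCap{\Network_{2,s},f} = 1/s$ via Theorem~\ref{Th:linearCodingCapacity} and \eqref{Eq:routingCap}, and the same lower bound from Lemma~\ref{Lemma:counterExample}, with non-injectivity supplying $\abs{f\left(\alphabet^s\right)} < \abs{\alphabet}^s$. Your two-case verification simply spells out the strict inequality that the paper asserts in a single annotated line.
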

\begin{proof}%[Proof of Theorem~\ref{Th:linearCodsWeak}]
If $\Network$ is the network $\Network_{2,s}$ shown in Figure~\ref{Fig:counterExample} with alphabet $\alphabet$,
then
\begin{align*}
\linCodCap{\Network,f} & = 1/s & \Comment{Theorem~\ref{Th:linearCodingCapacity} and \eqref{Eq:routingCap}} \\
														 & < \min \left\{ 1, \; \frac{1}{\log_{\abs{\alphabet}} \abs{f\left( \alphabet^s\right)}} \right\} & \Comment{$s \ge 2$ and $\abs{f\left( \alphabet^s\right)} < \abs{\alphabet}^s$} \\
														 & \le \codCap{\Network,f}. & \Comment{Lemma~\ref{Lemma:counterExample}}
\end{align*}
\end{proof}

The same proof of Theorem~\ref{Th:linearCodsWeak} 
shows that it also holds if the alphabet $\alphabet$ is
a ring with identity and the target function $f$ is semi-injective but not injective.

\subsection{Reducible target functions} \label{Sec:periodic}
%Now we turn our attention to reducible target functions.
In Theorem~\ref{Th:constantDirAd},
we prove a converse to Theorem~\ref{Th:linearCodingCapacity} %whenever the source alphabet $\alphabet$ is a finite field
by showing that if a target function is reducible, then there exists a network in which the
linear computing capacity is larger than the routing computing capacity.
Theorem~\ref{Th:3ineq} shows that,
even if the target function is reducible,
linear codes may not achieve the full (nonlinear) computing capacity of a network.
\begin{theorem} \label{Th:constantDirAd}
Let $\alphabet$ be a ring.
If a target function $f :  \alphabet^s \longrightarrow \decodeAlphabet$ 
is reducible,
then there exists a network $\Network$
such that 
$$
\linCodCap{\Network,f} > \routCap{\Network,f}.
$$
\end{theorem}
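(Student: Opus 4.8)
The plan is to exploit reducibility directly: the factorization $f(x) = g(xT)$ lets a single relay forward the $\lambda$-dimensional vector $xT$ rather than the full $s$-dimensional input, and since $\lambda < s$ this beats routing on the network $\Network_{2,s}$ of Figure~\ref{Fig:counterExample}. First I would invoke Definition~\ref{Def:linearReducible} to fix an integer $\lambda < s$, a matrix $T \in \alphabet^{s \times \lambda}$, and a map $g : \alphabet^{\lambda} \longrightarrow \decodeAlphabet$ with $g(xT) = f(x)$ for all $x \in \alphabet^s$. Since target functions are assumed non-constant (they depend on every source), $g$ cannot be constant, which forces $\lambda \ge 1$; thus $1 \le \lambda < s$, and in particular $s \ge 2$.

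Next I would record the routing side. Taking $\Network = \Network_{2,s}$, the single edge from the relay $\node$ to the receiver $\receiver$ is a cut separating all $s$ sources, so that cut contributes the ratio $1/s$ in \eqref{Eq:routingCap}, while every other cut gives a ratio at least $1$. Hence by \eqref{Eq:routingCap} and Theorem~\ref{Th:routingCapacity} we have $\routCap{\Network_{2,s},f} = \routCap{\Network_{2,s}} = 1/s$.

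For the linear side I would exhibit a concrete $(1,\lambda)$ linear code over $\alphabet$ of rate $1/\lambda$. Each source transmits its message symbol $\sourceVec{\source_i} \in \alphabet$ to $\node$; the relay, having received $x = (\sourceVec{\source_1},\dots,\sourceVec{\source_s}) \in \alphabet^s$, computes the $\alphabet$-linear map $x \mapsto xT \in \alphabet^{\lambda}$ and sends it on the edge to $\receiver$ (which carries $n = \lambda$ symbols). All encoding functions are $\alphabet$-linear, so this is a linear network code in the sense of Definition~\ref{def:linear-code}; the (nonlinear) decoder simply applies $g$ to recover $f(x) = g(xT)$, which is permitted since linear codes may use nonlinear decoders. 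Therefore $\linCodCap{\Network_{2,s},f} \ge 1/\lambda$.

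Combining the two estimates yields $\linCodCap{\Network_{2,s},f} \ge 1/\lambda > 1/s = \routCap{\Network_{2,s},f}$, the desired strict inequality. The obstacles here are bookkeeping rather than conceptual: one must confirm that $\lambda \ge 1$ (so the rate $1/\lambda$ is finite and strictly exceeds $1/s$), and that the relay's map is genuinely linear over the ring $\alphabet$ with the nonlinearity confined to the decoder $g$. If a cleaner statement is wanted, the same construction scales to a $(k,k\lambda)$ code computing $f$ componentwise across the $k$ instances, but the rate-$1/\lambda$ code already establishes the theorem.
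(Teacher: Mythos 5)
Your proposal is correct and follows essentially the same route as the paper: both use the network $\Network_{2,s}$ of Figure~\ref{Fig:counterExample}, have the relay compute the $\alphabet$-linear map $x \mapsto xT$ as a $(1,\lambda)$ linear code with the nonlinear decoder $g$, and conclude $\linCodCap{\Network,f} \ge 1/\lambda > 1/s = \routCap{\Network,f}$. Your explicit check that $\lambda \ge 1$ (since $f$ is non-constant) is a small detail the paper leaves implicit, but otherwise the arguments coincide.
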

\begin{proof}
Since $f$ is reducible, 
there exist $\lambda < s$, 
a matrix $T \in \alphabet^{s \times \lambda}$,
and a map $g : \alphabet^{\lambda} \longrightarrow f(\alphabet^s)$ such that
\begin{align} \label{Eq:linearConstruct}
g(xT) = f(x) \ \mbox{ for every }  x \in \alphabet^s. & \Comment{Definition~\ref{Def:linearReducible}}
\end{align}
Let $\Network$ denote the network $\Network_{2,s}$ with alphabet $\alphabet$
and target function $f$. 
%Let $\card{M}$ denote the rank (over $\alphabet$) of the matrix $M$
Let $k = 1$, $n = \lambda$ and let the decoding function
be $\decodFunct = g$.
Since $n \ge 1$, we assume that all the source nodes transmit their messages to node $\node$.
For each source vector 
$$
x = (\sourceVec{\source_1}, \sourceVec{\source_2}, \ldots, \sourceVec{\source_s})
$$
node $\node$ computes  $xT$ and sends it to the receiver.
Having received the $n$-dimensional vector $xT$, the receiver computes
\begin{align*}
\decodFunct(xT) & = g(xT) & \Comment{$\decodFunct = g$} \\
						    & = f(x).  & \Comment{\eqref{Eq:linearConstruct}}
\end{align*}
Thus there exists a linear code that computes $f$ in $\Network$ with an 
achievable computing rate of 
\begin{align*}
\frac{k}{n} & = \frac{1}{\lambda} \\
		        & > 1/s & \Comment{ $\lambda \le s-1$} \\
	         	& = \routCap{\Network} & \Comment{\eqref{Eq:routingCap}}
\end{align*}
which is sufficient to establish the claim.
\end{proof}
For target functions that are not reducible, 
any improvement on achievable rate of computing using coding must be provided by nonlinear codes (by 
Theorem~\ref{Th:linearCodingCapacity}).
However, within the class of reducible target
functions, 
it turns out that there are target functions for which linear 
codes are optimal (i.e., capacity achieving) as shown in Theorem~\ref{Th:optimalityOfLinearCodes}, 
while for certain other
reducible target functions, nonlinear codes might provide a strictly larger
achievable computing rate compared to linear codes.
\begin{remark}
It is possible for a network $\Network$ to have a reducible target function $f$ but 
satisfy $\linCodCap{\Network,f} = \routCap{\Network,f}$ since the network topology  may not
allow coding to exploit the structure of the target function to obtain a capacity gain.
For example, the 3-node network in Figure~\ref{Fig:2pointNW1} with $f(x_1,x_2) = x_1 + x_2$ 
and finite field alphabet $\alphabet$ has
$$
\linCodCap{\Network,f} = \routCap{\Network,f} = 1.
$$
\begin{figure}[ht]
\begin{center}
\psfrag{X}{$\source_1$}
\psfrag{Z}{$\source_2$}
\psfrag{Y}{$\receiver$}
\scalebox{.9}{\includegraphics{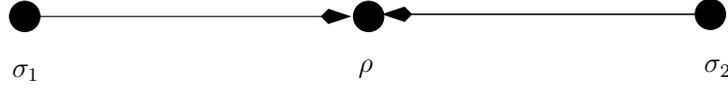}}
\end{center}
\caption{ A network where there is no benefit to using linear coding over routing for computing $f$.} \label{Fig:2pointNW1}
\end{figure}

\end{remark}

Theorem~\ref{Th:linearCodsWeak} shows that for every non-injective, non-reducible target function,
some network has a nonlinear computing gain over linear coding,
and 
Theorem~\ref{Th:constantDirAd} shows that for every reducible
(hence non-injective)
target function,
some network has a linear computing gain over routing.
The following theorem shows that for some reducible target function,
some network has both of these linear and nonlinear computing gains.

\begin{theorem} \label{Th:3ineq}
There exists a network $\Network$ and a reducible target function $f$ such that:
$$
\codCap{\Network,f} > \linCodCap{\Network,f} > \routCap{\Network,f}.
$$
\end{theorem}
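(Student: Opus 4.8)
The plan is to construct an explicit network exhibiting both strict inequalities simultaneously by combining the two mechanisms already isolated in Theorems~\ref{Th:constantDirAd} and \ref{Th:linearCodsWeak}. Those theorems each used the network $\Network_{2,s}$ of Figure~\ref{Fig:counterExample} (a single relay node $\node$ feeding the receiver), so the natural first step is to search for a single target function $f$ that is simultaneously reducible (to force $\linCodCap{\Network,f} > \routCap{\Network,f}$ via the relay coding argument) and non-injective with a proper reduction dimension (to open a gap between the nonlinear and linear computing capacities). A reducible $f$ is automatically non-injective, so the real requirement is that the reduced map $g$ itself still be non-injective and ``wasteful'' enough that nonlinear coding beats linear coding on the reduced data.

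First I would fix a small finite field, say $\alphabet = \field{2}$ or a slightly larger field, and pick a reducible $f : \alphabet^s \longrightarrow \decodeAlphabet$ whose defining matrix $T$ has rank $\lambda < s$ but whose post-reduction map $g : \alphabet^\lambda \longrightarrow \decodeAlphabet$ has image strictly smaller than $\abs{\alphabet}^\lambda$. On the network $\Network_{2,s}$, the relay can apply $T$ and achieve rate $1/\lambda$ by linear coding (exactly the computation in the proof of Theorem~\ref{Th:constantDirAd}), giving $\linCodCap{\Network,f} \ge 1/\lambda > 1/s = \routCap{\Network,f}$. For the nonlinear gain I would invoke Lemma~\ref{Lemma:counterExample}, which gives $\codCap{\Network_{2,s},f} \ge \min\{1, 1/\log_{\abs{\alphabet}} \abs{f(\alphabet^s)}\}$; choosing $f$ so that $\abs{f(\alphabet^s)}$ is small (for instance using an arithmetic-sum-like target whose range grows only linearly in $s$) forces this lower bound strictly above $1/\lambda$. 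The remaining obligation is the matching \emph{upper} bound $\linCodCap{\Network,f} \le$ something strictly below $\codCap{\Network,f}$, which is where the argument must work hardest.

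The main obstacle is precisely this upper bound on $\linCodCap{\Network,f}$: Lemma~\ref{Lemma:counterExample} only bounds capacity from below, and Theorem~\ref{Th:linearCodingCapacity} does not apply because $f$ is reducible. I would therefore need an argument analogous to Lemma~\ref{lemma:zerosOfLinearCode_lnw} and Theorem~\ref{line}, counting the kernel of the linear encoding matrix $M$ at the single edge $(\node,\receiver)$: for a $(k,n)$ linear code the relay transmits $x M$ for a $(sk) \times n$ matrix $M$ over $\alphabet$, and computability of $f$ forces two source-message collections that produce the same relay output to yield the same $f$-value componentwise. Bounding the number of collisions allowed by $f$'s fiber structure yields an inequality of the form $\abs{\alphabet}^n \ge \abs{\alphabet}^{sk}/\text{(collision count)}$, hence a ceiling $k/n \le c$ for some constant $c$ determined by $f$. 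The crux is to choose $f$ so that this linear ceiling $c$ lies strictly between the routing rate $1/s$ and the nonlinear rate guaranteed by Lemma~\ref{Lemma:counterExample}, i.e. $1/s < \linCodCap{\Network,f} \le c < \codCap{\Network,f}$.

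Once a concrete $f$ is exhibited realizing all three values explicitly, the proof reduces to three separate verifications: the routing value from \eqref{Eq:routingCap}, the linear value via the achievability of Theorem~\ref{Th:constantDirAd} together with the collision-counting upper bound, and the nonlinear value via Lemma~\ref{Lemma:counterExample} together with the cut bound $\codCap{\Network,f} \le (\log_2\abs{\alphabet}) \min_{C} \abs{C}$ of Lemma~\ref{lem:1}. I expect the cleanest route is to reuse a single explicit small example (likely the arithmetic sum target function on $\field{2}^s$ or a closely related reducible variant over a small field), compute its three capacities by hand, and simply check the strict inequalities numerically; the delicacy lies entirely in tuning the alphabet size, $s$, and the reduction so that the linear ceiling separates strictly from both neighbors rather than collapsing into equality with one of them.
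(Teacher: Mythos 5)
Your plan has the right skeleton---the paper also works on the relay network of Figure~\ref{Fig:counterExample} and pins down the three values separately---but it has two genuine gaps. First, you never commit to a concrete reducible $f$, and your leading candidate, the arithmetic sum over $\field{2}^s$, cannot work: the arithmetic sum is semi-injective (Example~\ref{Ex:semiInj}) and hence \emph{not} reducible by Lemma~\ref{Lemma:semiInj}, so it fails the hypothesis of the theorem outright; worse, Theorem~\ref{Th:linearCodingCapacity} then forces $\linCodCap{\Network,f} = \routCap{\Network,f}$, collapsing exactly the inequality you need. The paper instead takes $f(x) = (x_1+x_2)x_3$ over $\field{2}$ on $\Network_{2,3}$ (Example~\ref{Ex:non-Lin1}): routing gives $1/3$ by \eqref{Eq:routingCap}, letting the relay compute $f$ outright gives $\codCap{\Network,f} \ge 1$, and the reduction matrix $T$ gives a $(1,2)$ linear code of rate $1/2$.

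Second---and this is the crux you yourself flag---the upper bound $\linCodCap{\Network,f} \le 1/2$ is never actually proved; you only describe the shape of a collision-counting inequality with an unspecified ``ceiling $c$ determined by $f$,'' and without completing it for a specific $f$ the middle quantity could collapse into either neighbor. The paper closes this with a different and cleaner device: given any $(k,n)$ linear code computing $f$ in $\Network_{2,3}$ with relay matrix $M$, feed it the message matrix $\begin{pmatrix} \sourceVec{\source_1} & 0 & \sourceVec{\source_2}\end{pmatrix}$; this induces a $(k,n)$ linear code on $\Network_{2,2}$ computing $g(x_1,x_2) = x_1 x_2 = f(x_1,0,x_2)$, which is semi-injective and hence non-reducible, so Theorem~\ref{Th:linearCodingCapacity} and \eqref{Eq:routingCap} give $k/n \le \linCodCap{\Network_{2,2},g} = \routCap{\Network_{2,2},g} = 1/2$. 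For what it is worth, your counting idea can be made to work for this particular $f$: computability forces every kernel element of $M$ to have each of its $k$ component triples in the invariance set $\{(0,0,0),(1,1,0)\}$, so the kernel has size at most $2^k$, the image of $M$ has size at least $2^{3k}/2^k = 2^{2k}$, and $2^n \ge 2^{2k}$ again yields $k/n \le 1/2$---but this verification, and the choice of a function for which it succeeds, is precisely the work your proposal leaves undone.
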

\begin{proof}
Let $\Network$ denote the network $\Network_{2,3}$ shown in Figure~\ref{Fig:counterExample} with $s = 3$, alphabet $\alphabet=\field{2}$,
and let $f$ be the target function in Example~\ref{Ex:non-Lin1}.
%
%$$
%f : \field{2}^3 \longrightarrow \{0,1\}
%$$
%
%be defined by
%
%$$
%f(x_1,x_2,x_3) = (x_1+x_2) x_3.
%$$
%
The routing capacity is given by
\begin{align} \label{Eq:routCap1}
\routCap{\Network,f} & = 1/3. & \Comment{\eqref{Eq:routingCap}}
\end{align}
Let $k = n =1$. 
Assume that the sources send their respective  messages to node $\node$.
The target function $f$ can then be computed at $\node$ and sent to the receiver.
Hence, 
$k/n = 1$ is an achievable computing rate and thus
\begin{align} \label{Eq:codingCap1}
\codCap{\Network,f} \ge 1.
\end{align}
Now consider any $(k,n)$ linear code that computes
$f$ in $\Network$.
Such a linear code immediately implies a $(k,n)$ linear code that computes the
target function
$g(x_1,x_2) = x_1 x_2$ in  network $\Network_{2,2}$ as follows.
From the $(k,n)$ linear code that computes
$f$ in $\Network$, we get a $3k \times n$ matrix $M$ such that the node $\node$ in network $\Network$
computes
$$
 \begin{pmatrix} 
			\sourceVec{\source_1}  &
			\sourceVec{\source_2} &
			\sourceVec{\source_3} 
		\end{pmatrix} \ M
$$
and the decoding function computes $f$ from the resulting vector.
Now, in $\Network_{2,2}$, we let the node $\node$ compute
$$
\begin{pmatrix} 
			\sourceVec{\source_1}  &
			0 &
			\sourceVec{\source_2}
		\end{pmatrix} \ M
$$
and send it to the receiver.
The receiver can compute the function $g$ from the received $n$-dimensional vector using the relation $g(x_1,x_2) = f(x_1,0,x_2)$.
Using the fact that the function $g$ is not reducible (in fact, it is semi-injective),  
\begin{align*}
\frac{k}{n} & \le \linCodCap{\Network_{2,2},g} \\
					  & = \routCap{\Network_{2,2},g} & \Comment{Theorem~\ref{Th:linearCodingCapacity}} \\
					  & = 1/2. & \Comment{\eqref{Eq:routingCap}}
\end{align*}					  
Consequently, 
\begin{align} \label{Eq:linCodBnd1}
\linCodCap{\Network,f} \le 1/2.
\end{align}
Now we will construct a $(1,2)$ linear code that computes $f$ in $\Network$.
Let $k = 1$, $n=2$ and 
$$
M = \begin{pmatrix}
			1 & 0 \\
			1 & 0 \\
			0 & 1  
		\end{pmatrix}.
$$
Let the sources send their respective messages to $\node$ while $\node$ computes 
$$
 \begin{pmatrix} 
			\sourceVec{\source_1}  &
			\sourceVec{\source_2} &
			\sourceVec{\source_3} 
		\end{pmatrix} \ M
$$
and transmits the result to the receiver from which $f$ is computable.
Since the above code achieves a computing rate of $1/2$, 
combined with \eqref{Eq:linCodBnd1}, we get
\begin{align} \label{Eq:linCodCap1}
\linCodCap{\Network,f} = 1/2.
\end{align}
The claim of the theorem now follows from 
\eqref{Eq:routCap1}, 
\eqref{Eq:codingCap1}, and 
\eqref{Eq:linCodCap1}.
\end{proof}
\clearpage
\section{Computing linear target functions} \label{Sec:LinearTargetFunctions}

We have previously shown that for reducible target functions there may be a computing capacity gain
for using linear codes over routing.
In this section, we show that for a special subclass of reducible target functions,
namely linear target functions%
\footnote{The definition of ``linear target function'' was given in Table~\ref{Tab:exampleFunctions}.
}
over finite fields,
linear network codes achieve the full (nonlinear) computing capacity.
We now describe a special class of linear codes over finite fields
that suffice for computing linear target functions over finite fields at the maximum possible rate.

Throughout this section, 
let $\Network$ be a network and
let $k$, $n$, and $c$ be positive integers such that $k/n =c$.
Each $k$ symbol message vector generated by a source $\source \in \sources$
can be viewed as a $c$-dimensional vector
% Linear decoding, Not matrix linear
%
$$
\sourceVec{\source} = 
(\sourceVec{\source}_1, \sourceVec{\source}_2, \ldots, \sourceVec{\source}_c)
\in \field{q^k}
$$
where $\sourceVec{\source}_i \in \field{q^n}$ for each $i$.
Likewise, the decoder $\decodFunct$ generates a vector of $k$ symbols from $\field{q}$,
which can be viewed as 
a $c$-dimensional vector of symbols from $\field{q^n}$.
For each $\edge \in \edges$,
the edge vector $z_{\edge}$ is viewed as an 
element of $\field{q^n}$.

%The decoding function $\decodFunct$ of a
%$(k,n)$ network code that computes $f$ generates a $k$-dimensional vector over $\field{q}$
%to satisfy \eqref{Eq:decodingFunction}.

For every node $u \in \nodes - \receiver$, 
and every out-edge $\edge \in \outEdges{u}$, 
we choose an encoding function $h^{(e)}$
whose output is:
\begin{align}
%\edgeFunct{e}(\cdot) = 
& \begin{cases}
\displaystyle \sum_{\hat{e} \in \inEdges{u}} \edgeCoeff{\hat{e}}{e} \edgeVar{\hat{e}} 
+ 
\displaystyle \sum_{j=1}^{c} \encodCoeff{j}{e} \sourceVecList{u}{j} 
& \; \text{if} \; u \in \sources \\
\displaystyle \sum_{\hat{e} \in \inEdges{u}} \edgeCoeff{\hat{e}}{e} \edgeVar{\hat{e}} 
& \; \text{otherwise}
\end{cases} \label{Eq:encoders}
\end{align}
for some $\edgeCoeff{\hat{e}}{e}, \encodCoeff{j}{e} \in \field{q^n}$
%%%%%%
%If $z_{e_1}, z_{e_2}, \ldots, z_{e_{\card{\inEdges{\receiver}}}}$ denote the vectors carried by the in-edges of the receiver,
and we use a decoding function $\decodFunct$ 
whose $j$-th component output $\decodFunct_j$ is:
\begin{align}
& \displaystyle \sum_{e \in \inEdges{\receiver}} \decodCoeff{e}{j} \edgeVar{e} 
\quad \mbox{for all $j \in \{1,2,\ldots,c\}$}
\label{eq:DecodingFunction}
\end{align}
for certain $\decodCoeff{e}{j} \in \field{q^n}$. 
Here we 
view each $h^{(e)}$ as a function of the in-edges to $e$ 
and the source messages generated by $u$ and we
view $\psi$ as a function of the inputs to the receiver.
The chosen encoder and decoder are seen to be linear.

Let us denote the edges in $\edges$ by
$e_1, e_2, \dots, e_{\card{\edges}}$.
For each source $\source$ and each edge $e_j \in \outEdges{\source}$,
let $\encodCoeffVar{1}{e_j}, \dots, \encodCoeffVar{c}{e_j}$ be variables,
and 
for each $e_j \in \inEdges{\receiver}$,
let $\decodCoeffVar{e_j}{1} , \dots, \decodCoeffVar{e_j}{c}$ be variables.
For every $e_i, e_j \in \edges$ such that $\head{e_i}=\tail{e_j}$,
let $\edgeCoeffVar{e_i}{e_j}$ be a variable.
Let $x,y,w$ be vectors containing all the variables
$\encodCoeffVar{i}{e_j}$, 
$\edgeCoeffVar{e_i}{e_j}$,
and
$\decodCoeffVar{e_j}{i}$,
respectively. We will use the short hand notation $\field{}[y]$ to mean the ring of polynomials 
$\field{}[\cdots,y_{e_i}^{(e_j)},\cdots]$
and similarly for $\field{}[x,y,w]$.

Next, we define matrices $A_\tau(x)$, $F(y)$, and $B(w)$.
\begin{itemize}
\item [(i)] For each $\tau \in \{1,2,\cdots,s\}$,
let $A_\tau(x)$ be a $c \times \card{\edges}$ matrix $A_{\tau}(x)$, 
given by
\begin{align} \label{Eq:MatrixA}
\left(A_{\tau}(x)\right)_{i,j} = 
\begin{cases}
\encodCoeffVar{i}{e_j}  & \; \text{if} \; e_j \in \outEdges{\source_{\tau}} \\
0 									& \; \text{otherwise}
\end{cases}
\end{align}
\item [(ii)] Let $F(y)$ be a $\card{\edges} \times \card{\edges}$ matrix,
given by
\begin{align} \label{Eq:MatrixF}
(F(y))_{i,j} = 
\begin{cases}
\edgeCoeffVar{e_i}{e_j}   & \; \text{if} \; e_i, e_j \in \edges \mbox{ and } \head{e_i}=\tail{e_j} \\
0 								  	& \; \text{otherwise}
\end{cases}
\end{align}
\item [(iii)] Let $B(w)$ be a $c \times \card{\edges}$ matrix,
given by
\begin{align} \label{Eq:MatrixB}
(B(w))_{i,j} = 
\begin{cases}
\decodCoeffVar{e_j}{i}  & \; \text{if} \; e_j \in \inEdges{\receiver} \\
0 							    & \; \text{otherwise}.
\end{cases}
\end{align}
\end{itemize}
Consider an $(nc,n)$ linear code of the form in \eqref{Eq:encoders}--\eqref{eq:DecodingFunction}.

Since the graph $G$ associated with the network is acyclic, 
we can assume that the edges $e_1,e_2,\ldots$ 
are ordered such that the matrix 
$F$ is strictly upper-triangular,
and thus we can apply Lemma~\ref{Lemma:inverse}.
Let $I$ denote the identity matrix of suitable dimension.

\begin{lemma}(Koetter-M\'{e}dard~\cite[Lemma~2]{Koetter-Medard-IT-Oct03})
\label{Lemma:inverse}
The matrix $I-F(y)$ is invertible over the ring
$\polyRing{\field{q}}{y}$.
\end{lemma}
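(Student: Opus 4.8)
The plan is to use the fact that the edge ordering makes $F(y)$ strictly upper-triangular, and that a strictly upper-triangular matrix is nilpotent, so that $I - F(y)$ admits an explicit inverse given by a \emph{finite} geometric-type sum whose entries automatically lie in $\polyRing{\field{q}}{y}$.

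First I would record the structural fact: because $G$ is acyclic, the edges are ordered so that $(F(y))_{i,j} = 0$ whenever $i \ge j$, i.e.\ $F(y)$ is a strictly upper-triangular $\card{\edges} \times \card{\edges}$ matrix over $\polyRing{\field{q}}{y}$. Next I would establish nilpotency by a short induction on $\ell$, showing that $(F(y)^{\ell})_{i,j} = 0$ whenever $j - i < \ell$; since the column index cannot exceed the row index by more than $\card{\edges} - 1$, this yields $F(y)^{\ell} = 0$ for every $\ell \ge \card{\edges}$.

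Granting nilpotency, I would exhibit the inverse explicitly. Setting
$$
S = I + F(y) + F(y)^2 + \cdots + F(y)^{\card{\edges}-1},
$$
the telescoping products $(I - F(y))\,S = I - F(y)^{\card{\edges}} = I$ and $S\,(I - F(y)) = I$ show that $S$ is a two-sided inverse of $I - F(y)$. Finally I would verify that $S$ lives over the right ring: each entry of $F(y)$ is either $0$ or a single variable $\edgeCoeffVar{e_i}{e_j}$, hence lies in $\polyRing{\field{q}}{y}$, and since that ring is closed under addition and multiplication, every power $F(y)^{\ell}$ and the finite sum $S$ have all entries in $\polyRing{\field{q}}{y}$.

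This argument is essentially routine; there is no genuine obstacle. The only points requiring any care are the nilpotency induction and the observation that the inverse must be written as a \emph{finite} sum rather than a formal geometric series, since the latter would only place the inverse in a power-series ring, whereas nilpotency guarantees it already belongs to the polynomial ring $\polyRing{\field{q}}{y}$.
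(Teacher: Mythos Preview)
Your argument is correct. Note, however, that the paper does not actually supply its own proof of this lemma: it is quoted verbatim as \cite[Lemma~2]{Koetter-Medard-IT-Oct03} and used as a black box. Your nilpotency-plus-finite-geometric-series argument is precisely the standard proof (and is the one given in Koetter--M\'{e}dard), so there is nothing to compare beyond observing that you have filled in what the paper deliberately omitted.
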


\begin{lemma}(Koetter-M\'{e}dard\cite[Theorem~3]{Koetter-Medard-IT-Oct03})
\label{Lemma:receivedSum}
For $s=1$ and for all $\tau\in\{1, \dots, s\}$, 
the decoder in \eqref{eq:DecodingFunction} satisfies
$$
\receivedVec = \sourceVec{\source_{1}} A_{\tau}(\beta) (I - F(\gamma))^{-1} B(\delta)^{t}.
$$
%whenever $(I-F)^{-1}$ exists.
\end{lemma}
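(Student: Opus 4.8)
The plan is to assemble the symbols carried on \emph{all} edges into a single row vector and then to recognize the three matrices $A_\tau(\beta)$, $F(\gamma)$, and $B(\delta)^t$ as, respectively, the source-injection, edge-propagation, and decoding stages of the code. Since $s=1$ the only source is $\source_1$ and $\tau=1$, so the source sum in \eqref{Eq:encoders} collapses to a single term. First I would introduce the row vector $z = (\edgeVar{e_1}, \ldots, \edgeVar{e_{\card{\edges}}}) \in \field{q^n}^{\card{\edges}}$ of all edge symbols and rewrite the encoding rule \eqref{Eq:encoders} one coordinate at a time.

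Examining the symbol carried on an edge $e_j$, its value is the sum of a propagation contribution $\sum_{\hat e \in \inEdges{\tail{e_j}}} \edgeCoeff{\hat e}{e_j} \edgeVar{\hat e}$, gathered from the edges entering the node $\tail{e_j}$ out of which $e_j$ emanates, together with an injection contribution $\sum_{i=1}^{c} \encodCoeff{i}{e_j} \sourceVecList{\source_1}{i}$ that is present exactly when $e_j \in \outEdges{\source_1}$. Because the incidence condition $\head{e_i} = \tail{e_j}$ in \eqref{Eq:MatrixF} states precisely that $e_i$ enters the tail node of $e_j$, matching coefficients shows the propagation part of $z$ equals $z\,F(\gamma)$; likewise \eqref{Eq:MatrixA} identifies the injection part as $\sourceVec{\source_1} A_{\tau}(\beta)$. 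The entire linear code therefore collapses to the single fixed-point equation
\begin{align*}
z \;=\; \sourceVec{\source_1} A_{\tau}(\beta) + z\,F(\gamma).
\end{align*}

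Solving this equation completes the argument. Rearranging gives $z\,(I - F(\gamma)) = \sourceVec{\source_1} A_{\tau}(\beta)$; by Lemma~\ref{Lemma:inverse} the matrix $I - F(\gamma)$ is invertible over $\polyRing{\field{q}}{y}$, and in particular remains invertible after substituting the actual coefficient values $\gamma$, so
\begin{align*}
z \;=\; \sourceVec{\source_1} A_{\tau}(\beta)\,(I - F(\gamma))^{-1}.
\end{align*}
Finally, the decoder \eqref{eq:DecodingFunction} reads each output component as $\sum_{e \in \inEdges{\receiver}} \decodCoeff{e}{i} \edgeVar{e}$, which by \eqref{Eq:MatrixB} is the $i$-th coordinate of $z\,B(\delta)^t$; substituting the expression for $z$ yields $\receivedVec = \sourceVec{\source_1} A_{\tau}(\beta)(I - F(\gamma))^{-1} B(\delta)^t$, as claimed.

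The only real work here is bookkeeping rather than anything conceptual. One must pin down the edge-direction convention so that the incidence relation $\head{e_i} = \tail{e_j}$ of \eqref{Eq:MatrixF} matches the dependence of $\edgeVar{e_j}$ on $\edgeVar{e_i}$ dictated by \eqref{Eq:encoders}, thereby ruling out an accidental transpose in the factor $F(\gamma)$. One must also invoke the acyclicity of $G$ — which lets us order the edges so that $F(\gamma)$ is strictly upper triangular — both to legitimize Lemma~\ref{Lemma:inverse} and to guarantee that the edge symbols produced by evaluating the encoders in topological order are genuinely the \emph{unique} solution of the fixed-point equation. Once the indexing is fixed, the three-matrix factorization is immediate.
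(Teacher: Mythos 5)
The paper contains no proof of this lemma at all---it is imported as a citation to Koetter--M\'{e}dard~\cite[Theorem~3]{Koetter-Medard-IT-Oct03}---so your reconstruction can only be measured against that source, and it matches it exactly: the fixed-point equation $z = \sourceVec{\source_1}A_{\tau}(\beta) + z\,F(\gamma)$, its unique solvability via the strict upper-triangularity of $F(\gamma)$ under a topological edge ordering (which also forces $\det(I-F(\gamma))=1$, so invertibility genuinely survives substituting field values for the variables $y$), and the final multiplication by $B(\delta)^{t}$ are precisely the transfer-matrix argument of that reference. Your care over the orientation convention is warranted rather than pedantic: the paper's stated convention ($\head{e}=u$, $\tail{e}=v$ for $e=(u,v)$) is inconsistent with reading the incidence condition $\head{e_i}=\tail{e_j}$ in \eqref{Eq:MatrixF} as ``$e_i$ feeds into $e_j$,'' and one must interpret that condition in the Koetter--M\'{e}dard orientation, as you do, for the factorization to come out without a transpose.
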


%The following result is known as the ``sparse zeros lemma''  
%(e.g.,~\cite{Koetter-Medard-IT-Oct03}, \cite{Schwartz}).
%
\begin{lemma}(Alon\cite[Theorem~1.2]{Alon})
\label{Lemma:Alon}
Let $\field{}$ be an arbitrary field, and let $g = g(x_1,\ldots,x_m)$ be a polynomial in 
$\polyRing{\field{}}{x_1,\ldots,x_m}$.
Suppose the degree $deg(g)$ of $g$ is \ $\sum_{i=1}^{m} t_i$, where  each $t_i$ is a nonnegative integer, 
and suppose the coefficient  of \  $\prod_{i=1}^{m} x_i^{t_i}$ in $g$ is nonzero.
Then, if \ $S_1,\ldots,S_m$ are subsets of \ $\field{}$ with $\card{S_i} > t_i$, there are 
$s_1 \in S_1$, $s_2 \in S_2, \ldots, s_m \in S_m$ so that 
$$
g(s_1,\ldots,s_m) \neq 0.
$$
\end{lemma}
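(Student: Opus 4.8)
The plan is to prove this by contradiction, reducing it to a multivariate generalization of the elementary fact that a nonzero univariate polynomial of degree $d$ cannot have more than $d$ roots. Suppose, for contradiction, that $g(s_1,\ldots,s_m)=0$ for every choice of $s_i\in S_i$. For each $i$ pick a subset $S_i'\subseteq S_i$ with $\card{S_i'}=t_i+1$ and set $\phi_i(x_i)=\prod_{s\in S_i'}(x_i-s)$, a monic polynomial of degree $t_i+1$. Writing $\phi_i(x_i)=x_i^{t_i+1}-\sum_{j=0}^{t_i}c_{ij}x_i^{j}$, the relation $\phi_i(s)=0$ for $s\in S_i'$ says that $x_i^{t_i+1}$ may be replaced by the lower-degree expression $\sum_{j\le t_i}c_{ij}x_i^{j}$ without changing any value on the grid $S_1'\times\cdots\times S_m'$.

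First I would carry out this reduction systematically: repeatedly substitute $x_i^{t_i+1}\mapsto\sum_{j\le t_i}c_{ij}x_i^{j}$ (and hence every higher power of $x_i$) inside $g$ until no variable appears to a power exceeding $t_i$. Call the result $\bar g$. By construction $\bar g$ and $g$ agree at every point of $S_1'\times\cdots\times S_m'$, so $\bar g$ also vanishes on the whole grid, and $\deg_{x_i}\bar g\le t_i$ for each $i$. The crucial bookkeeping step is to verify that the reduction leaves the coefficient of the monomial $\prod_i x_i^{t_i}$ undisturbed: each substitution strictly lowers total degree, so any monomial of $g$ that actually requires reduction yields only monomials of total degree strictly below $\sum_i t_i$; since $g$ has total degree exactly $\sum_i t_i$ and $\prod_i x_i^{t_i}$ is already in reduced form (every exponent equals $t_i$), its coefficient in $\bar g$ equals its coefficient in $g$, which is nonzero. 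In particular $\bar g$ is not the zero polynomial.

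Next I would invoke a multivariate vanishing lemma: if $P\in\polyRing{\field{}}{x_1,\ldots,x_m}$ satisfies $\deg_{x_i}P\le t_i$ for all $i$ and $P$ vanishes on $S_1'\times\cdots\times S_m'$ with $\card{S_i'}>t_i$, then $P$ is identically zero. I would prove this by induction on $m$, the base case being the univariate root-counting fact. For the inductive step, expand $P=\sum_{j=0}^{t_m}P_j(x_1,\ldots,x_{m-1})\,x_m^{j}$; fixing any $(s_1,\ldots,s_{m-1})$ in the smaller grid makes $\sum_j P_j(s_1,\ldots,s_{m-1})\,x_m^{j}$ a univariate polynomial of degree $\le t_m$ with more than $t_m$ roots in $S_m'$, forcing every $P_j(s_1,\ldots,s_{m-1})=0$; the induction hypothesis then gives $P_j\equiv0$ for each $j$, so $P\equiv0$. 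Applying this to $\bar g$ contradicts the nonvanishing of its $\prod_i x_i^{t_i}$ coefficient, and this contradiction establishes the lemma.

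I expect the main obstacle to be the degree-preservation bookkeeping in the reduction step, namely making precise that reducing every higher power of each $x_i$ leaves the top monomial $\prod_i x_i^{t_i}$ untouched. This hinges on two facts used together: total degree strictly decreases under each substitution, and $g$ is assumed to have total degree exactly $\sum_i t_i$ rather than merely at least that much. Everything else, the construction of the $\phi_i$ and the inductive vanishing lemma, is routine once this point is in place.
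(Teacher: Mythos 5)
Your proof is correct, but note that the paper itself gives no proof of this lemma at all: it is imported verbatim from Alon's Combinatorial Nullstellensatz (cited as \cite[Theorem~1.2]{Alon}), so the only meaningful comparison is with Alon's original argument. Your argument is essentially a streamlined version of that argument. Alon proves the statement in two stages: first an ideal-membership result (his Theorem~1.1, stating that a polynomial vanishing on the grid lies in the ideal generated by the $\phi_i(x_i)=\prod_{s\in S_i}(x_i-s)$ with degree-controlled cofactors), whose proof is exactly your reduction step, and second a coefficient comparison using a grid-vanishing lemma (his Lemma~2.1), which is precisely your inductive lemma that a polynomial with $\deg_{x_i}P\le t_i$ vanishing on a product set with $\card{S_i'}>t_i$ is identically zero. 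You merge the two stages into a single direct reduction-plus-contradiction, avoiding any explicit mention of the ideal; what this buys is a shorter, self-contained proof, at the cost of not isolating the reusable ideal-membership statement. Your handling of the one genuinely delicate point is sound: each substitution $x_i^{t_i+1}\mapsto\sum_{j\le t_i}c_{ij}x_i^j$ strictly lowers total degree, and since $\deg(g)$ equals $\sum_{i=1}^m t_i$ exactly (not merely at least), no monomial created during reduction can land on $\prod_{i=1}^m x_i^{t_i}$, so that coefficient survives into $\bar g$ and the contradiction with the vanishing lemma goes through.
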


For each $\tau \in \left\{1,2,\ldots,s\right\}$,
define the $c\times c$ matrix
\begin{equation} \label{Eq:defnOfMtau}
M_{\tau}(x,y,w)=A_{\tau}(x) (I - F(y))^{-1} B(w)^{t}
\end{equation}
where the components of $M_{\tau}(x,y,w)$
are viewed as lying in $\field{q}[x,y,w]$.

\begin{lemma} \label{Lemma:coeffChoice}
If \ for all  $\tau \in \{1,2,\ldots,s\}$,
$$\det \left( M_{\tau}(x,y,w) \right) \ne 0$$
in the ring $\field{q}[x,y,w]$,
then there exists an integer $n > 0$ and vectors
$\encodSymbol, \edgeCodSymbol, \decodSymbol$ over $\field{q^n}$ such that
for all  $\tau \in \{1,2,\ldots,s\}$ 
the matrix
$M_{\tau}(\encodSymbol,\edgeCodSymbol,\decodSymbol)$
is invertible in the ring of $c\times c$ matrices with components in $\field{q^n}$.
\end{lemma}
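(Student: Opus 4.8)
The plan is to reduce the simultaneous-invertibility requirement to finding a single common evaluation point, and then to produce that point via the Combinatorial Nullstellensatz (Lemma~\ref{Lemma:Alon}). First I would form the product polynomial
$$
P(x,y,w) = \prod_{\tau=1}^{s} \det\!\left( M_{\tau}(x,y,w) \right).
$$
Before using it I must check that each $\det(M_\tau)$ genuinely lives in $\field{q}[x,y,w]$ rather than in a field of fractions: because $F(y)$ is strictly upper-triangular it is nilpotent, so by Lemma~\ref{Lemma:inverse} the matrix $(I-F(y))^{-1}$ equals the finite sum $I + F(y) + F(y)^2 + \cdots$ and has polynomial entries; hence every entry of $M_\tau(x,y,w)=A_\tau(x)(I-F(y))^{-1}B(w)^t$, and therefore its determinant, is a polynomial. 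Since $\field{q}[x,y,w]$ is an integral domain and each factor $\det(M_\tau)$ is nonzero by hypothesis, the product $P$ is a nonzero polynomial.

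Next I would locate a point at which $P$ remains nonzero, working over a sufficiently large extension field. Choose any monomial of maximal total degree in $P$ that has a nonzero coefficient, and let $t_1,\dots,t_m$ be its exponents in the $m$ variables comprising $x,y,w$; then $\deg(P)=\sum_{i=1}^m t_i$ and the coefficient of $\prod_i (\cdot)^{t_i}$ is nonzero, exactly the hypothesis of Lemma~\ref{Lemma:Alon}. Now pick an integer $n\ge 1$ with $q^n > \deg(P)$, so that $q^n > t_i$ for every $i$. Viewing $P$ as a polynomial over $\field{q^n}$ (its distinguished coefficient lies in $\field{q}$ and so stays nonzero in $\field{q^n}$), I apply Lemma~\ref{Lemma:Alon} with each $S_i=\field{q^n}$, which satisfies $\card{S_i}=q^n>t_i$. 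This yields values $\encodSymbol,\edgeCodSymbol,\decodSymbol$ over $\field{q^n}$ for the variables $x,y,w$ with $P(\encodSymbol,\edgeCodSymbol,\decodSymbol)\ne 0$.

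Finally I would read off the conclusion: since $P$ is the product of the $s$ determinants, $P(\encodSymbol,\edgeCodSymbol,\decodSymbol)\ne 0$ forces $\det(M_\tau(\encodSymbol,\edgeCodSymbol,\decodSymbol))\ne 0$ for every $\tau$. Because forming a determinant commutes with evaluating the variables, this is precisely the determinant of the matrix $M_{\tau}(\encodSymbol,\edgeCodSymbol,\decodSymbol)$ with entries in $\field{q^n}$, so each such matrix is invertible, which is the claim.

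I expect the main obstacle to be bookkeeping rather than a deep idea: the two things that must be handled carefully are (i) confirming that $M_\tau$ has polynomial entries so that the determinants can be multiplied into a single nonzero element of the integral domain $\field{q}[x,y,w]$ (this relies on nilpotence of $F$ via Lemma~\ref{Lemma:inverse}), and (ii) choosing the extension degree $n$ large enough, $q^n>\deg(P)$, so that the ground set $\field{q^n}$ exceeds every exponent $t_i$ required by the Combinatorial Nullstellensatz. Neither is hard, but both are essential for the application of Lemma~\ref{Lemma:Alon} to be valid.
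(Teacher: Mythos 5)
Your proposal is correct and follows essentially the same route as the paper's proof: form one nonzero polynomial from all $s$ determinants (the paper uses $\det\bigl(\prod_{\tau}M_{\tau}\bigr)$, you use $\prod_{\tau}\det(M_{\tau})$, identical by multiplicativity of the determinant), choose $n$ so that $q^{n}$ exceeds its degree, and invoke Lemma~\ref{Lemma:Alon} to obtain an evaluation point over $\field{q^n}$ where every factor is nonzero. Your write-up is in fact somewhat more careful than the paper's, spelling out that the entries of $M_{\tau}$ are polynomials and verifying the degree/coefficient hypotheses of the Combinatorial Nullstellensatz explicitly.
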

\begin{proof}
The quantity 
\begin{align*}
\det \left( \prod_{\tau =1}^{s} M_{\tau}(x,y,w) \right)
\end{align*}
is a nonzero polynomial in   $\field{q}[x,y,w]$ and therefore also in  
$\field{q^n}[x,y,w]$ for any $n \ge 1$. 
Therefore, we can choose $n$ large enough such that the degree of this polynomial is less than $q^n$.
For such an $n$, Lemma~\ref{Lemma:Alon} 
implies there exist
vectors $\encodSymbol, \edgeCodSymbol, \decodSymbol$ 
(whose components correspond to the components of the vector variables $x, y, w$)
over $\field{q^n}$ 
such that
\begin{align} \label{Eq:alon}
\det \left( \prod_{\tau =1}^{s} M_{\tau}(\encodSymbol, \edgeCodSymbol, \decodSymbol) \right) \neq 0.
\end{align}
and therefore, for all  $\tau \in \{1,2,\ldots,s\}$ 
\begin{align*}
\det \left(M_{\tau}(\encodSymbol,\edgeCodSymbol,\decodSymbol) \right) \neq 0.
\end{align*}
Thus, each $M_{\tau}(\encodSymbol,\edgeCodSymbol,\decodSymbol)$ is invertible.

\end{proof}
%
%In Lemma~\ref{Lemma:modSum}, it was shown the $q$-input $q$-output mod sum target
%function can be computed at a rate equal to the min-cut in the reverse-butterfly network.
%The following theorem is a generalization of that result to arbitrary
%networks for computing a linear sum over a finite field.
%

The following lemma improves upon the upper bound of Lemma~\ref{lem:1}
in the special case where the target function is linear over a
finite field.

\begin{lemma}\label{lem:2}
If $\Network$ is network 
with a linear target function $f$
over a finite field,
then
\begin{align*}
\codCap{\Network, f} & \le \underset{ C \in \cuts{\Network} } \min\ \card{C}.
\end{align*}
\end{lemma}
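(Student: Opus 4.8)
The plan is to prove a cut-set bound separately for each cut $C \in \cuts{\Network}$ and then take the minimum. Fix an arbitrary $(k,n)$ code that computes $f$ in $\Network$ (not necessarily linear, since we are bounding $\codCap{\Network,f}$) and fix a cut $C$. The key structural fact is the same one invoked in the proof of Theorem~\ref{Th:linearCodingCapacity}: every vector carried on an in-edge of $\receiver$ is a deterministic function of the message vectors $\sourceVec{\source_j}$ for $j \notin I_C$ together with the vectors carried on the edges of $C$. This holds for arbitrary (possibly nonlinear) encoding functions, because deleting the edges of $C$ disconnects precisely the sources indexed by $I_C$ from $\receiver$, so everything computed on the receiver side of the cut is determined by the cut values and the messages of the remaining sources. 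I would simply reuse this topological determinism observation rather than re-derive it.

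Next I would exploit the linearity of $f$. Write $f(x) = a_1 x_1 + \cdots + a_s x_s$; since $f$ depends on every source, each $a_i \neq 0$. Because $C$ is a cut, $I_C$ is nonempty, so fix some index $i_0 \in I_C$. I would hold $\sourceVec{\source_j} = 0$ for every $j \neq i_0$ and let $\sourceVec{\source_{i_0}}$ range over all of $\field{q}^k$. For each such choice, the $j$-th component of the value the receiver must output equals $a_{i_0}\,\sourceVec{\source_{i_0}}_j$; since multiplication by the nonzero field element $a_{i_0}$ is a bijection of $\field{q}$ and the components are independent, the required output vector ranges over all of $\field{q}^k$ as $\sourceVec{\source_{i_0}}$ does. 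Hence the receiver must produce $q^k$ distinct output vectors across these message choices.

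The counting step then finishes the argument. With the messages of the sources $\source_j$, $j \notin I_C$, held at zero, the receiver's output is by the structural fact a function of the $C$-vectors alone; therefore any two message choices yielding different outputs must yield different vectors on the edges of $C$. Since there are $q^k$ distinct required outputs and the edges of $C$ can jointly carry at most $(\card{\field{q}}^n)^{\card{C}} = q^{n\card{C}}$ distinct vectors, we obtain $q^k \le q^{n\card{C}}$, that is, $k/n \le \card{C}$. As this holds for every code and every cut, taking the supremum over codes and the minimum over $C \in \cuts{\Network}$ yields $\codCap{\Network, f} \le \min_{C \in \cuts{\Network}} \card{C}$.

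The main obstacle is conceptual rather than computational: one must be careful that the structural reduction to the $C$-vectors is valid for nonlinear codes, and that the single-source restriction genuinely forces $q^k$ distinct outputs. Both are handled cleanly above — the first by the purely topological determinism already used for Theorem~\ref{Th:linearCodingCapacity}, and the second by surjectivity of $a_{i_0}$-multiplication. Linearity of $f$ enters only in this second place, and it is exactly this that lets the factor $\log_2\card{\alphabet}$ appearing in Lemma~\ref{lem:1} be removed.
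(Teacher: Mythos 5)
Your proof is correct, but it takes a genuinely different route from the paper's. The paper disposes of this lemma in one line: it invokes the general cut-set bound of Appuswamy et al.\ (cited as \cite[Theorem~II.1]{computing1}, exactly as in the proof of Lemma~\ref{lem:1}), and simply observes that the footprint quantity $R_{I_C,f}$ appearing there equals $\card{\alphabet}$ when $f$ is linear over $\field{q}$, which turns the bound $\card{C}/\log_{\card{\alphabet}}R_{I_C,f}$ into $\card{C}$. You instead re-derive the needed special case of that cut-set bound from scratch: the topological fact that the receiver's in-edge vectors are determined by the cut vectors together with the messages of sources not in $I_C$ (valid for arbitrary, possibly nonlinear codes, as you correctly note), plus the observation that pinning the sources outside $\{i_0\}$ to zero and using bijectivity of multiplication by the nonzero coefficient $a_{i_0}$ forces $q^k$ distinct receiver outputs, hence $q^k \le q^{n\card{C}}$. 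Your appeal to $a_{i_0}\neq 0$ is legitimate under the paper's standing assumption that target functions depend on all arguments, and your counting step is sound. What the paper's approach buys is brevity and a transparent parallel with Lemma~\ref{lem:1}; what yours buys is a self-contained argument that does not require the reader to unpack the external definitions of min-cut$(\Network,f)$ and $R_{I_C,f}$, and it makes explicit precisely where linearity enters to eliminate the $\log_2\card{\alphabet}$ factor. In effect you have reproved the relevant instance of \cite[Theorem~II.1]{computing1} (specifically, that $R_{I_C,f}\ge q$ for linear $f$, combined with the cut-set counting argument), which is a perfectly acceptable substitute.
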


\begin{proof}
The same argument is used as in the proof of Lemma~\ref{lem:1},
except instead of using 
$R_{I_C,f} \ge 2$, we use the fact that $R_{I_C,f} = \card{\alphabet}$
for linear target functions.
\end{proof}

\begin{theorem}\label{Th:ModuloSumCodCap}
If $\Network$ is a network 
%with alphabet $\alphabet = \field{q}$ and 
with a linear target function $f$ over finite field $\field{q}$,
%$f : \alphabet^{\cardsources} \longrightarrow \alphabet$,
then
$$
\linCodCap{\Network, f} = \underset{ C \in \cuts{\Network} }{\min} \card{C}.
$$
\end{theorem}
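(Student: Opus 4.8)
The plan is to prove the two inequalities $\linCodCap{\Network,f} \le \min_{C\in\cuts{\Network}}\card{C}$ and $\linCodCap{\Network,f} \ge \min_{C\in\cuts{\Network}}\card{C}$ separately. The upper bound is immediate: every linear code is in particular a code, so $\linCodCap{\Network,f} \le \codCap{\Network,f}$, and Lemma~\ref{lem:2} gives $\codCap{\Network,f} \le \min_{C\in\cuts{\Network}}\card{C}$ since $f$ is linear over a finite field. All of the work is therefore in the achievability direction, for which I will exhibit a linear code of rate exactly $c := \min_{C\in\cuts{\Network}}\card{C}$.

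For achievability I would use the algebraic code family of \eqref{Eq:encoders}--\eqref{eq:DecodingFunction}. Fix $c$ as above, take $k=nc$, and write $f(x_1,\dots,x_s)=\sum_{\tau=1}^{s} a_\tau x_\tau$ with every $a_\tau\neq 0$ (the coefficients are nonzero because $f$ depends on all sources). Summing the single-source identity of Lemma~\ref{Lemma:receivedSum} over the $s$ sources by superposition, the vector produced by the decoder equals $\sum_{\tau=1}^{s}\sourceVec{\source_\tau}\,M_\tau(\encodSymbol,\edgeCodSymbol,\decodSymbol)$, where $M_\tau=A_\tau(I-F)^{-1}B^t$ is the $c\times c$ transfer matrix from $\source_\tau$. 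Since this must equal the desired output $\sum_{\tau}a_\tau\sourceVec{\source_\tau}$ for every choice of message vectors, the code computes $f$ if and only if $M_\tau = a_\tau I$ for all $\tau$. My goal is thus to choose the coefficients so that each $M_\tau$ equals $a_\tau$ times the identity.

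The key observation is that $A_\tau$ is supported only on the out-edges of $\source_\tau$, so $M_\tau = A_\tau^{\mathrm{out}} G_\tau$, where $A_\tau^{\mathrm{out}}$ collects the (per-source, hence freely chosen) encoding coefficients at $\source_\tau$ and $G_\tau$ is the submatrix of $(I-F)^{-1}B^t$ indexed by the out-edges of $\source_\tau$, depending only on the shared internal and decoder coefficients. Hence it suffices to fix the internal coefficients $\edgeCodSymbol$ and decoder coefficients $\decodSymbol$ so that every $G_\tau$ has full column rank $c$; then choosing $A_\tau^{\mathrm{out}}$ to be $a_\tau$ times a left inverse of $G_\tau$ yields $M_\tau=a_\tau I$ for every $\tau$ simultaneously. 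To obtain full column rank for all $G_\tau$ at once I would first argue that $\det M_\tau$ is a nonzero polynomial in $\field{q}[x,y,w]$ for each $\tau$: restricting to source $\source_\tau$ alone reduces the network to a single-source, single-receiver instance whose min-cut is at least $c$ (because $c=\min_{C\in\cuts{\Network}}\card{C}$ and any cut isolating $\source_\tau$ lies in $\cuts{\Network}$), so by the max-flow/min-cut bound there are $c$ edge-disjoint $\source_\tau$-to-$\receiver$ paths, and the Koetter--M\'edard framework (Lemmas~\ref{Lemma:inverse} and \ref{Lemma:receivedSum}) makes the corresponding transfer matrix generically invertible. With each $\det M_\tau\neq 0$ established, Lemma~\ref{Lemma:coeffChoice} supplies an extension field $\field{q^n}$ and coefficients making all $M_\tau$ invertible at once; invertibility of $M_\tau=A_\tau^{\mathrm{out}}G_\tau$ forces $G_\tau$ to have full column rank $c$, after which the re-selection of $A_\tau^{\mathrm{out}}$ described above produces an $(nc,n)$ linear code of rate $c$.

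I expect the single genuine obstacle to be the polynomial non-vanishing $\det M_\tau\neq 0$, i.e.\ the statement that one source alone can linearly deliver $c$ independent symbols to $\receiver$ whenever its min-cut is at least $c$; this is the heart of the achievability, and is exactly where the identity $c=\min_{C\in\cuts{\Network}}\card{C}=\min_\tau\mathrm{mincut}(\source_\tau,\receiver)$ and the transfer-matrix machinery are used. Once individual invertibility is in hand, the simultaneous realization via Lemma~\ref{Lemma:coeffChoice} and the passage from invertible $M_\tau$ to exactly $a_\tau I$ are routine linear algebra.
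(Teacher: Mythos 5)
Your proposal is correct and follows essentially the same route as the paper's proof: the upper bound via Lemma~\ref{lem:2}, generic invertibility of each transfer matrix $M_\tau$ from the min-cut condition and Koetter--M\'edard, simultaneous invertibility over an extension field via Lemma~\ref{Lemma:coeffChoice}, and then adjusting the source encoding matrices to realize $M_\tau = a_\tau I$. Your final step (choosing $a_\tau$ times a left inverse of the factor $G_\tau$) is just a rephrasing of the paper's substitution $\hat{A}_{\tau}(\beta) = a_{\tau}\left(M_{\tau}(\encodSymbol,\edgeCodSymbol,\decodSymbol)\right)^{-1} A_{\tau}(\beta)$, so the two arguments coincide in substance.
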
%
\begin{proof}
%Any linear target function $f:\alphabet^s\to\alphabet$ can be represented by
%
%\begin{align} \label{Eq:linearFunction}
%f(x) = xa^t %\displaystyle \sum_{i=1}^{s} a_i x_i 
%\end{align}
%
%where $a \in \alphabet^s$. 
% for every $i \in \left\{1,2,\ldots,s\right\}$.
We have
\begin{align*}
\linCodCap{\Network, f} 
&\le \codCap{\Network, f}\\
&\le \underset{ C \in \cuts{\Network} }{\min} \card{C}.
  & \Comment{Lemma~\ref{lem:2}}
\end{align*}
For a lower bound,
we will show that there exists an integer $n$ and an $(nc,n)$ linear code 
that computes $f$
with a computing rate
of $c= \displaystyle \underset{C \in \cuts{\Network} }{\min} \card{C}$.

From Lemma \ref{Lemma:inverse}, 
the matrix $I-F(y)$ in invertible over the ring $\polyRing{\field{q}}{x,y,w}$ 
and therefore also over $\polyRing{\field{q^n}}{x,y,w}$.
Since any minimum cut between the source $\source_{\tau}$
and the receiver $\receiver$ has at least $c$ edges,
it follows from~\cite[Theorem~2]{Koetter-Medard-IT-Oct03}%
\footnote{Using the implication $(1) \Longrightarrow (3)$ in~\cite[Theorem~2]{Koetter-Medard-IT-Oct03}.} 
that $\det\left( M_{\tau}(x,y,w) \right) \ne 0$
for every $\tau \in \left\{1,2,\ldots,s\right\}$.
From Lemma~\ref{Lemma:coeffChoice}, 
there exists an integer $n > 0$ and  vectors 
$\encodSymbol, \edgeCodSymbol, \decodSymbol$ over $\field{q^n}$ 
such that 
$M_{\tau}(\encodSymbol,\edgeCodSymbol,\decodSymbol)$ is invertible
for every $\tau \in \left\{1,2,\ldots,s\right\}$.
Since $f$ is linear, 
we can write $$f(u_1, \dots, u_s) = a_1 u_1 + \dots + a_s u_s.$$
For each $\tau \in \left\{1,2,\ldots,s\right\}$, let
\begin{align} \label{Eq:matrixChoise}
\hat{A}_{\tau}(\beta) =  a_{\tau} \left(M_{\tau}(\encodSymbol,\edgeCodSymbol,\decodSymbol)\right)^{-1} A_{\tau}(\beta).
\end{align}
%there exists a $\encodSymbol$ such that
%\begin{align} \label{Eq:matrixChoise}
%M_{\tau}(\encodSymbol,\edgeCodSymbol,\decodSymbol) = a_{\tau} \, I.
%\end{align}
%
%Lemma~\ref{Lemma:receivedSum} implies that   
If a linear code corresponding to the matrices $\hat{A}_{\tau}(\beta), B(\delta)$,
and $F(\gamma)$ is used in network $\Network$,
then  
the $c$-dimensional vector over $\field{q^n}$ computed by the receiver $\receiver$ is 
\begin{align*}
\psi & = \sum_{\tau=1}^{s} 
\sourceVec{\source_{\tau}} \hat{A}_{\tau}(\beta) 
(I - F(\gamma))^{-1} B(\delta)^{t} 
& \Comment{Lemma~\ref{Lemma:receivedSum} and linearity} \\
& = \sum_{\tau=1}^{s} \sourceVec{\source_{\tau}}  
a_{\tau} \left(M_{\tau}(\encodSymbol,\edgeCodSymbol,\decodSymbol)\right)^{-1} 
A_{\tau}(\beta) (I - F(\gamma))^{-1} B(\delta)^{t} & \Comment{\eqref{Eq:matrixChoise}} \\
%\displaystyle \sum_{\tau=1}^{s}  \sourceVec{\source_{\tau}} 
%M_{\tau}(\encodSymbol,\edgeCodSymbol,\decodSymbol)
%
& = \displaystyle \sum_{\tau=1}^{s} a_{\tau} \, \sourceVec{\source_{\tau}} &\Comment{\eqref{Eq:defnOfMtau}} \\
																											  & = \left( f\!\left(\sourceVec{\source_1}_1,\ldots,\sourceVec{\source_s}_1\right),\ldots,f\!\left(\sourceVec{\source_1}_c,\ldots,\sourceVec{\source_s}_c\right) \right)
\end{align*}
 which proves that the linear code achieves a computing rate of $c$.
\remove{
In other words, the receiver computes the linear sum 
(viewing $\sourceVec{\source_{\tau}}$ as a $cn$ length vector)
$$
\displaystyle \sum_{\tau=1}^{s} a_{\tau} \, \left(\sourceVec{\source_{\tau}}\right)_i
$$
which is equivalent to computing the target function 
$f$ for the $i$-th set of source messages for $i=1,\ldots,cn$.
}
\end{proof}
Theorem~\ref{Th:optimalityOfLinearCodes} 
below
proves the optimality of linear codes for computing linear target functions
in a single-receiver network. 
It also shows that the computing capacity of a network for a given target function 
cannot be larger than the number of network sources times the routing computing capacity for the same target function.
This bound tightens the general bound given in Theorem~\ref{Th:boundOnCodingGain} 
for the special case of linear target functions over finite fields.
Theorem~\ref{Th:linearTargetRoute} 
shows that this upper bound can be tight.

\begin{theorem}\label{Th:optimalityOfLinearCodes}
If $\Network$ is network 
with $s$ sources and
linear target function $f$ over finite field $\field{q}$,
then
$$
%s \ \routCap{\Network,f} \ge \codCap{\Network, f} = \linCodCap{\Network,f}.
\linCodCap{\Network,f} = \codCap{\Network, f} \le s \ \routCap{\Network,f}.
$$
\end{theorem}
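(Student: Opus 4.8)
The plan is to prove the chain $\linCodCap{\Network,f} = \codCap{\Network,f} \le s\,\routCap{\Network,f}$ by combining the characterization of linear computing capacity from Theorem~\ref{Th:ModuloSumCodCap} with the general cut-set bound of Lemma~\ref{lem:2} and the routing formula~\eqref{Eq:routingCap}. The first equality is the substantive content: I want to show that for a linear target function over a finite field, nonlinear coding gives no advantage, so the nonlinear computing capacity collapses to the linear one. First I would record the trivial inequality $\linCodCap{\Network,f} \le \codCap{\Network,f}$, which holds because every linear code is a code. Then I would invoke Lemma~\ref{lem:2}, which gives $\codCap{\Network,f} \le \min_{C \in \cuts{\Network}} \card{C}$. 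Finally, Theorem~\ref{Th:ModuloSumCodCap} identifies $\linCodCap{\Network,f}$ with exactly $\min_{C \in \cuts{\Network}} \card{C}$. Chaining these three facts squeezes $\codCap{\Network,f}$ between $\linCodCap{\Network,f}$ on the left and $\min_C \card{C} = \linCodCap{\Network,f}$ on the right, forcing equality throughout.

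For the remaining inequality $\codCap{\Network,f} \le s\,\routCap{\Network,f}$, I would work from the already-established value $\codCap{\Network,f} = \min_{C \in \cuts{\Network}} \card{C}$ and compare it termwise against the routing capacity formula $\routCap{\Network,f} = \routCap{\Network} = \min_{C \in \cuts{\Network}} \card{C}/\card{I_C}$ (using Theorem~\ref{Th:routingCapacity} and~\eqref{Eq:routingCap}). The key observation is that for every cut $C$ we have the bound $\card{I_C} \le s$ from~\eqref{eq:IC_ub}, hence $\card{C} \le s\,(\card{C}/\card{I_C})$ for each individual cut. The only care needed is that the minimum defining $\codCap{\Network,f}$ and the minimum defining $\routCap{\Network,f}$ may be attained at different cuts; I would handle this by letting $C^*$ be a cut achieving $\routCap{\Network,f}$ and then writing
\begin{align*}
\codCap{\Network,f} = \underset{C \in \cuts{\Network}}{\min}\ \card{C} \le \card{C^*} \le s\,\frac{\card{C^*}}{\card{I_{C^*}}} = s\,\routCap{\Network,f},
\end{align*}
where the middle inequality is~\eqref{eq:IC_ub} applied at $C^*$.

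I do not anticipate a genuine obstacle here, since all the heavy machinery is already established: the achievability direction of Theorem~\ref{Th:ModuloSumCodCap} (built on the Koetter--M\'edard matrix framework, Lemma~\ref{Lemma:coeffChoice}, and the Combinatorial Nullstellensatz via Lemma~\ref{Lemma:Alon}) supplies the matching lower bound on $\linCodCap{\Network,f}$, and Lemma~\ref{lem:2} supplies the sharpened upper bound on $\codCap{\Network,f}$ that removes the $\log_2\card{\alphabet}$ factor present in Lemma~\ref{lem:1}. The subtlest conceptual point, which I would make explicit, is \emph{why} linear codes lose nothing for linear target functions: Theorem~\ref{Th:ModuloSumCodCap} already achieves the cut-set value $\min_C \card{C}$ with a linear code, and Lemma~\ref{lem:2} shows no code can beat that value, so there is simply no room for a nonlinear gain. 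The whole theorem is therefore essentially a bookkeeping assembly of three previously proved statements, with the cut-mismatch in the final inequality being the one place a reader might stumble if the distinct minimizers are not flagged.
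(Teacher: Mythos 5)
Your proposal is correct and follows essentially the same route as the paper: both arguments assemble the trivial inequality $\linCodCap{\Network,f} \le \codCap{\Network,f}$, the cut-set bound of Lemma~\ref{lem:2}, the characterization $\linCodCap{\Network,f} = \min_{C \in \cuts{\Network}} \card{C}$ from Theorem~\ref{Th:ModuloSumCodCap}, and the comparison $s\,\routCap{\Network,f} \ge \min_{C \in \cuts{\Network}} \card{C}$ via \eqref{Eq:routingCap}, Theorem~\ref{Th:routingCapacity}, and \eqref{eq:IC_ub} into one sandwich. The paper writes this as a single descending chain starting from $s\,\routCap{\Network,f}$, while you split it into an equality step and an inequality step with the minimizer mismatch made explicit, but the content is identical.
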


\begin{proof}
\begin{align*}
s\ \routCap{\Network,f} 
&\ge 
\underset{ C \in \cuts{\Network} }{\min} \card{C}     & \Comment{\eqref{Eq:routingCap} and Theorem~\ref{Th:routingCapacity}} \\
&\ge \codCap{\Network, f} & \Comment{Lemma~\ref{lem:2}} \\
 & \ge \linCodCap{\Network, f} \\
 & = \underset{ C \in \cuts{\Network} }{\min} \card{C}. \nonumber & \Comment{Theorem~\ref{Th:ModuloSumCodCap}}
\end{align*}
\end{proof}

We note that the inequality in Theorem~\ref{Th:optimalityOfLinearCodes}
can be shown to apply to certain target functions other than linear functions over finite fields,
such as the minimum, maximum, and arithmetic sum target functions.

\begin{theorem} \label{Th:linearTargetRoute}
For every $s$,
%Let $\alphabet = \field{q}$.
if a target function $f : \alphabet^s \longrightarrow \alphabet$ 
is linear over finite field $\field{q}$,
then there exists a network $\Network$ with $s$ sources,
such that
$$\linCodCap{\Network,f} = s \ \routCap{\Network,f}.$$
\end{theorem}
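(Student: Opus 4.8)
The plan is to exhibit a single network that simultaneously saturates the upper bound $\linCodCap{\Network,f} \le s\,\routCap{\Network,f}$ established in Theorem~\ref{Th:optimalityOfLinearCodes}. The natural candidate is the network $\Network_{2,s}$ of Figure~\ref{Fig:counterExample}, in which all $s$ sources feed a single relay $\node$ that is joined to the receiver by one edge. I would set $\Network = \Network_{2,s}$ with alphabet $\field{q}$ and the given linear target function $f$, and then compute its routing and linear computing capacities separately, aiming to show the former is $1/s$ and the latter is $1$.

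First I would pin down the routing capacity. By Theorem~\ref{Th:routingCapacity} we have $\routCap{\Network,f} = \routCap{\Network}$, which by \eqref{Eq:routingCap} equals $\min_{C \in \cuts{\Network}} \card{C}/\card{I_C}$. The cuts of $\Network_{2,s}$ are easy to enumerate: the lone edge $(\node,\receiver)$ separates all $s$ sources and contributes the ratio $1/s$, while any set of source-to-relay edges separates exactly as many sources as it contains and hence contributes ratio $1$ (adjoining the relay edge to any such set only increases $\card{C}$ while keeping $\card{I_C} = s$). Thus the minimum ratio is $1/s$, giving $\routCap{\Network,f} = 1/s$. Next I would compute the linear computing capacity using Theorem~\ref{Th:ModuloSumCodCap}, which gives $\linCodCap{\Network,f} = \min_{C \in \cuts{\Network}} \card{C}$. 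Since the single edge $(\node,\receiver)$ is already a cut, the minimum cut size is $1$, so $\linCodCap{\Network,f} = 1$. Combining the two yields $\linCodCap{\Network,f} = 1 = s \cdot (1/s) = s\,\routCap{\Network,f}$, as required, uniformly in the coefficients of $f$ and for every $s$.

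The step that deserves the most care — and the point that makes the statement nonvacuous — is recognizing that Theorem~\ref{Th:linearCodingCapacity} does \emph{not} apply here: a linear target function with $s \ge 2$ is reducible (take $\lambda = 1$, let $T$ be the column of coefficients $(a_1,\dots,a_s)^t$ and let $g$ be the identity map), so that earlier equality result cannot be invoked to force $\linCodCap{\Network,f} = \routCap{\Network,f}$. It is precisely this reducibility that allows Theorem~\ref{Th:ModuloSumCodCap} to deliver the strictly larger value $\min_C \card{C} = 1$ rather than the routing value $1/s$, thereby realizing the full factor-$s$ gap. The only genuine verification is the cut enumeration, and in this very small network that is routine; the real content is the correct pairing of Theorem~\ref{Th:ModuloSumCodCap} with the cut structure of $\Network_{2,s}$.
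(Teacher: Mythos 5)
Your proposal is correct and follows essentially the same route as the paper's own proof: both take $\Network = \Network_{2,s}$, obtain $\linCodCap{\Network,f} = \min_{C \in \cuts{\Network}} \card{C} = 1$ from Theorem~\ref{Th:ModuloSumCodCap}, and obtain $\routCap{\Network,f} = \routCap{\Network} = 1/s$ from Theorem~\ref{Th:routingCapacity} together with \eqref{Eq:routingCap}. Your additional remarks (the explicit cut enumeration and the observation that $f$ is reducible, so Theorem~\ref{Th:linearCodingCapacity} does not force $\linCodCap{\Network,f} = \routCap{\Network,f}$) are accurate but not needed beyond what the paper records.
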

\begin{proof}
Let $\Network$ denote the network $\Network_{2,s}$ shown in
Figure~\ref{Fig:counterExample}.
Then
\begin{align*}
\linCodCap{\Network,f} & = 1 & \Comment{Theorem~\ref{Th:ModuloSumCodCap}} \\
\routCap{\Network,f}   &= \routCap{\Network} & \Comment{Theorem~\ref{Th:routingCapacity}}\\
 &= 1/s.  & \Comment{\eqref{Eq:routingCap}} 
\end{align*}
\end{proof}
\clearpage
\section{The reverse butterfly network} \label{Sec:butterfly}

In this section we study an example network which 
illustrates various concepts discussed previously in this paper and also
provides some interesting additional results for network computing.

\begin{figure}[ht]
\begin{center}
\psfrag{S}{$\hspace{-0.5cm}$ Source}
\psfrag{t1}{$\hspace{-0.8cm} \mbox{{\large Receiver $1$}}$}
\psfrag{t2}{$\hspace{-0.8cm} \mbox{{\large Receiver $2$}}$}
\psfrag{b}{{\large $(a)$ The multicast butterfly network}}
\psfrag{s1}{{\large $\source_1$}}
\psfrag{s2}{{\large $\source_2$}}
\psfrag{T}{{\large $\receiver$}}
\psfrag{a}{{\large $\hspace{1cm} (a)$ The butterfly network}}
\psfrag{b}{{\large $(b)$ The reverse-butterfly network}}
\scalebox{.6}{\includegraphics{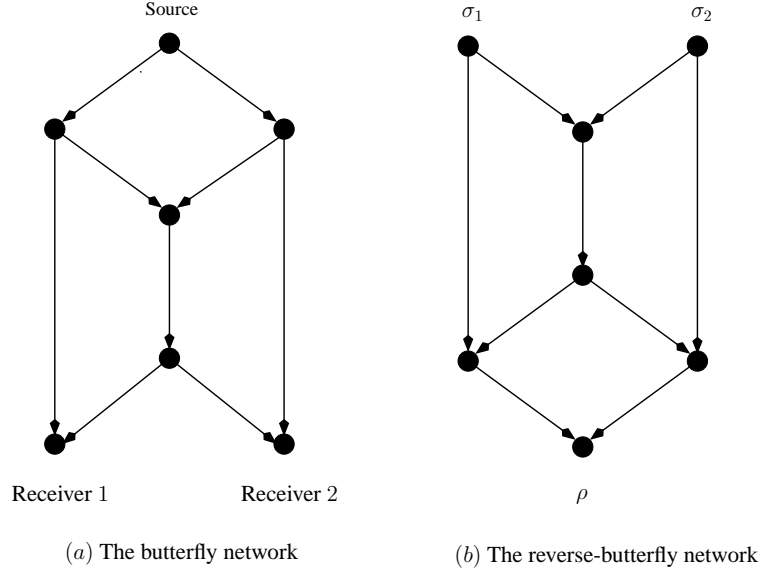}}
\end{center}
\caption{ The butterfly network and its reverse $\Network_3$.} \label{Fig:butterfly}
\end{figure}
The network $\Network_3$
shown in Figure~\ref{Fig:butterfly}(b)
is called the \textit{reverse butterfly network}.
It has $\sources = \{\source_1,\source_2\}$,
receiver node $\receiver$,
and is obtained by reversing the direction of all the edges of
the multicast butterfly network shown in Figure~\ref{Fig:butterfly}(a).

%Let $q \ge 2$ be an integer.

%
\begin{theorem} \label{Th:linearCoding}
The routing and linear computing capacities
of the reverse butterfly network $\Network_3$ 
with alphabet $\alphabet = \{0,1,\ldots,q-1\}$
and 
arithmetic sum target function
$f : \alphabet^{2} \longrightarrow \{0,1,\ldots,2(q-1)\}$ 
are
$$
\routCap{\Network_3,f} = \linCodCap{\Network_3,f} = 1.
$$
\end{theorem}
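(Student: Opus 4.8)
The plan is to prove the two equalities separately, reducing everything to a single cut computation on $\Network_3$. I would first obtain $\linCodCap{\Network_3,f}=\routCap{\Network_3,f}$ from Theorem~\ref{Th:linearCodingCapacity}: the arithmetic sum is semi-injective by Example~\ref{Ex:semiInj} (indeed $f(x)=0$ forces $x=0$), and regarding the alphabet $\alphabet=\{0,1,\ldots,q-1\}$ as the ring $\integer_q$ places us in the ``ring with identity and $f$ semi-injective'' case of that theorem. This yields $\linCodCap{\Network_3,f}=\routCap{\Network_3,f}$ immediately --- and with no dependence on the particular ring-with-identity structure chosen --- so the whole problem collapses to computing the routing computing capacity.

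To evaluate that capacity I would use Theorem~\ref{Th:routingCapacity} to write $\routCap{\Network_3,f}=\routCap{\Network_3}$ and then invoke the cut formula \eqref{Eq:routingCap}, so that it remains to show $\min_{C\in\cuts{\Network_3}}\card{C}/\card{I_C}=1$. For the upper bound I would take $C$ to consist of the two in-edges of the receiver $\receiver$: every directed path from either source to $\receiver$ terminates on one of these edges, so this $C$ separates both sources, giving $\card{C}=2$, $\card{I_C}=2$, and ratio $1$; hence $\routCap{\Network_3}\le 1$.

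For the matching lower bound I would show $\card{C}\ge\card{I_C}$ for every cut. The crucial structural fact, obtained by reversing the butterfly's edges, is that each source has two edge-disjoint directed paths to $\receiver$: for $\source_1$, the short path through its own relay into $\receiver$, and a second path routed through the central bottleneck node and the other relay (symmetrically for $\source_2$). By Menger's theorem, any edge set separating even one source from $\receiver$ must then contain at least two edges, so $\card{C}\ge 2$ for every cut; since $s=2$ gives $\card{I_C}\le 2$ by \eqref{eq:IC_ub}, we obtain $\card{C}/\card{I_C}\ge 1$ (the ratio being $2$ when only a single source is separated). Combining the two bounds gives $\routCap{\Network_3}=1$, and together with the first step this proves $\routCap{\Network_3,f}=\linCodCap{\Network_3,f}=1$.

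The step needing the most care is the edge-disjoint-path claim behind the lower bound: one must correctly extract the reversed edge set and confirm that each source genuinely has two edge-disjoint routes to $\receiver$, since this is exactly what forces $\card{C}\ge 2$ on every cut and prevents the ratio from dropping below $1$. Everything else is a direct application of the cited theorems.
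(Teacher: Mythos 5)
Your proof is correct and takes essentially the same route as the paper: the paper's entire proof is the application of Theorem~\ref{Th:linearCodingCapacity} (arithmetic sum being semi-injective over a ring) followed by the cut formula \eqref{Eq:routingCap}. The only difference is that you explicitly verify the min-cut value of $1$ (receiver's two in-edges for the upper bound, two edge-disjoint paths per source for the lower bound), a computation the paper leaves implicit, and your structural claims about $\Network_3$ are indeed correct.
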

\begin{proof}
We have
\begin{align*}
\linCodCap{\Network_3,f} & = \routCap{\Network_3} & \Comment{Theorem~\ref{Th:linearCodingCapacity}} \\
		& = 1. & \Comment{\eqref{Eq:routingCap}}
\end{align*}
\end{proof}

\begin{remark}
The  arithmetic sum target function can be computed
in the reverse butterfly network at
a computing rate of $1$ using only routing
(by sending $\source_1$ down the left side and $\source_2$ down the right side of the graph).
Combined with Theorem~\ref{Th:linearCoding}, it follows that
the routing computing capacity is equal to $1$ for all $q \ge 2$.
\end{remark}

\begin{theorem} \label{Th:arithmeticSum}
The computing capacity 
of the reverse butterfly network $\Network_3$ 
with alphabet $\alphabet = \{0,1,\ldots,q-1\}$
and 
arithmetic sum target function
$f : \alphabet^{2} \longrightarrow \{0,1,\ldots,2(q-1)\}$ 
is
$$
\codCap{\Network_3,f} = \frac{2}{\log_q \left( 2q -1 \right)}.
$$
\end{theorem}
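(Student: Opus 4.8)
The plan is to prove the two inequalities $\codCap{\Network_3,f}\le 2/\log_q(2q-1)$ and $\codCap{\Network_3,f}\ge 2/\log_q(2q-1)$ separately: the first by a counting (footprint) bound at the receiver, and the second by exhibiting a family of nonlinear codes whose rates approach the bound.

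For the upper bound, I would exploit that in $\Network_3$ the receiver $\receiver$ has exactly two in-edges (one from each relay node adjacent to it), so in any $(k,n)$ code the decoder is a function of a pair $(\edgeVar{e_1},\edgeVar{e_2})\in\alphabet^n\times\alphabet^n$, taking at most $q^{2n}$ distinct values. Since $f$ is the arithmetic sum, $\card{f(\alphabet^2)}=\card{\{0,1,\ldots,2(q-1)\}}=2q-1$, and every target vector $v\in\{0,1,\ldots,2(q-1)\}^k$ is realized by some message pair (choose $\sourceVec{\source_1}_j,\sourceVec{\source_2}_j$ to sum to $v_j$). Because the decoder must output the correct value for each such pair, two distinct targets $v\neq v'$ must arise from distinct receiver inputs; hence $(2q-1)^k\le q^{2n}$, i.e. $k/n\le 2/\log_q(2q-1)$, and taking the supremum over codes gives the bound. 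Equivalently, this is the min-cut computing bound of \cite{computing1} evaluated at the cut $C=\inEdges{\receiver}$ (for which $\card{I_C}=2$ and the relevant footprint is $2q-1$), refining Lemma~\ref{lem:1}.

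For the lower bound I would construct, for every pair $(k,n)$ with $(2q-1)^{k}\le q^{2n}$ and $k/n$ increasing to $2/\log_q(2q-1)$, a code computing $f$. The guiding idea is that the arithmetic sum of $k$ symbol pairs ranges over only $(2q-1)^k$ values and so carries only $\approx k\log_q(2q-1)=2n$ symbols of information, exactly the joint capacity of the receiver's two in-edges; the goal is to split this information so that one part is delivered through the left relay and the other through the right relay. The reverse-butterfly topology is what enables this: the central (reversed-bottleneck) edge lets the two relays share a common coded function of both sources, and combined with the source information arriving at each relay the two relays can jointly present to $\receiver$ a decodable encoding of the sum vector $v$. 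Concretely, one must choose the encoders on the source out-edges, on the central edge, and on the two relay-to-receiver edges so that the resulting pair $(\edgeVar{e_1},\edgeVar{e_2})$ determines $v$ for every message pair.

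The main obstacle is precisely this construction. Theorem~\ref{Th:linearCoding} shows routing and even linear coding are stuck at rate $1$, so the gain to $2/\log_q(2q-1)>1$ must come from a genuinely nonlinear, block (vector) code; moreover, simple coordinate-by-coordinate schemes (route some sums directly, and for the remaining coordinates use the central edge to carry a coded combination so that a relay can form the sum) can be checked to fall strictly short of the bound, which forces the coordinates to be encoded jointly. A feasibility check that I expect to be the key technical point is the central-edge cut $C$ consisting of the two source-to-relay edges together with the central edge: it has $\card{C}=3$ and footprint $2q-1$, hence imposes only $k/n\le 3/\log_q(2q-1)$ and so has slack at the target rate, confirming that enough cross-source information can reach the receiver side. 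Converting this slack into an explicit (or probabilistic) nonlinear code that attains the counting bound in the limit is the crux of the proof.
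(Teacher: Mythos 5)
Your upper bound is sound and essentially matches the paper's: your counting argument at the receiver's two in-edges (every one of the $(2q-1)^k$ possible sum vectors must produce a distinct pair of edge vectors, so $(2q-1)^k\le q^{2n}$) is just a direct derivation of the cut-set bound that the paper invokes by citing~\cite[Theorem~II.1]{computing1}. The problem is the lower bound: you state the goal, identify structural features of the network, and then explicitly defer the construction (``Converting this slack into an explicit \dots code \dots is the crux of the proof''). That construction \emph{is} the theorem; without it you have proved only $\codCap{\Network_3,f}\le 2/\log_q(2q-1)$, so there is a genuine gap.

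The missing idea in the paper is a two-step reduction, not a direct combinatorial design of encoders. First, since two summands from $\{0,1,\ldots,q-1\}$ sum to at most $2q-2$, the arithmetic sum coincides with the mod $2q-1$ sum; so it suffices to compute the mod $2q-1$ sum over the enlarged alphabet $\hat{\alphabet}=\{0,1,\ldots,2q-2\}$. Second, over $\hat{\alphabet}$ the mod-sum is computed at rate exactly $2$ by the obvious ``reversed butterfly'' code (Lemma~\ref{Lemma:modSum}): with $k=2$, $n=1$, the receiver obtains $x_1\oplus y_1$ and $x_1\oplus y_1\oplus x_2\oplus y_2$ on its two in-edges and recovers both coordinate sums. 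Finally, this code over the size-$(2q-1)$ alphabet is \emph{simulated} over the size-$q$ alphabet: fix an injection $g:\hat{\alphabet}^n\to\alphabet^{n'}$ with $n'=\ceil{n\log_q(2q-1)}$, have every node apply $g^{-1}$ to its incoming blocks, perform the mod $2q-1$ code, and re-encode with $g$; this yields a $(2n,n')$ code for the arithmetic sum with rate $2n/\ceil{n\log_q(2q-1)}\to 2/\log_q(2q-1)$. Note this also corrects your intuition about where the nonlinearity lives: the code is linear over $\integer_{2q-1}$, coordinate-wise in the $\hat{\alphabet}$ domain; the only ``joint'' block operation and the only nonlinearity over $\alphabet$ come from the alphabet-conversion maps $g,g^{-1}$, not from an intricate splitting of information between the two relays.
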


\begin{remark}
The computing capacity 
$\codCap{\Network_3,f}$ obtained in 
Theorem~\ref{Th:arithmeticSum} is a function of 
the coding alphabet $\alphabet$
(i.e. the domain of the target function $f$).
In contrast,
for ordinary network coding
(i.e. when the target function is the identity map),
the coding capacity and routing capacity are known to be independent of the coding alphabet used
~\cite{Cannons-Dougherty-Freiling-Zeger05}.
For the reverse butterfly network, 
if, for example, $q=2$,
then $\codCap{\Network_3,f}$ is approximately 
equal to $1.26$ and
increases asymptotically to $2$ as $q \rightarrow \infty$.
\end{remark}

\begin{remark}
The ratio of the coding capacity to the routing capacity for 
the multicast butterfly network with two messages was computed in
~\cite{Cannons-Dougherty-Freiling-Zeger05} to be $4/3$ 
(i.e. coding provides a gain of about $33\%$).
The corresponding ratio for the reverse butterfly network increases as a
function of $q$ from approximately $1.26$ (i.e. $26\%$) when $q=2$
to $2$ (i.e. $100\%$) when $q=\infty$.
Furthermore, in contrast to the multicast butterfly network, 
where the coding capacity is equal to the linear coding capacity,
in the reverse butterfly network
the computing capacity is strictly greater than the linear computing capacity. 
\end{remark}

%\remove{
\begin{remark}
Recall that capacity is defined as the supremum of a set of
rational numbers $k/n$ such that
a $(k,n)$ code that computes a target function exists.
It was pointed out in~\cite{Cannons-Dougherty-Freiling-Zeger05}
that it remains an open question whether the coding capacity of a network
can be irrational.
Our Theorem~\ref{Th:arithmeticSum}
demonstrates that the computing capacity
of a network 
(e.g. the reverse butterfly network) 
with unit capacity links can be 
irrational when the target function to be 
computed is the arithmetic  sum target function of the source messages. 
\end{remark}

%}

%
\begin{figure}[ht]
\begin{center}
%\psfrag{s1}{$\source_1$}
%\psfrag{s2}{$\source_2$}
\psfrag{s1}{\mbox{\Large $\source_1$}}
\psfrag{s2}{\mbox{\Large $\source_2$}}
\psfrag{T}{\mbox{\Large $\receiver$}}
\psfrag{e1}{\mbox{\Large $x_1 \oplus x_2$}}
\psfrag{e2}{\mbox{\Large $y_2$}}
\psfrag{e4}{\mbox{\Large $x_1 \oplus x_2 \oplus y_2$}}
\psfrag{e3}{\mbox{\Large $x_1$}}
\psfrag{e5}{\mbox{\Large $y_1$}}
\psfrag{e7}{\mbox{\Large $x_1 \oplus x_2 \oplus y_1 \oplus y_2$}}
\psfrag{e6}{\mbox{\Large $x_2 \oplus y_2$}}
%\psfrag{b1}{}%The reverse butterfly network with a solution }
%\psfrag{b2}{}%for computing $\bmod$ $q$ sum}
\scalebox{.8}{\includegraphics{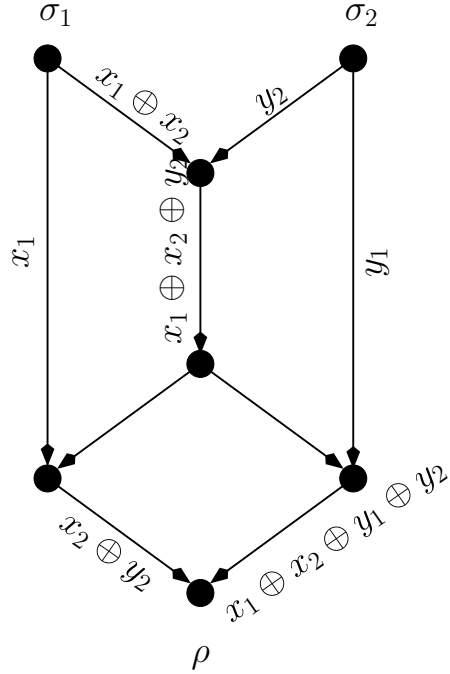}}
\end{center}
% Get rid of (b1) and (b2) in eps file.
\caption{ The reverse butterfly network with a code that computes the mod $q$ sum target function.}
\label{Fig:reverse-butterfly-modulo-sum}
\end{figure}
%---------

The following lemma is used to prove Theorem~\ref{Th:arithmeticSum}.

\begin{lemma} \label{Lemma:modSum}
The computing capacity of the reverse butterfly network $\Network_3$
with $\alphabet = \{0,1,\ldots,q-1\}$
and the mod $q$ sum target function $f$ is 
$$
\codCap{\Network_3,f} = 2.
$$
\end{lemma}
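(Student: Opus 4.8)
The plan is to prove the two bounds $\codCap{\Network_3,f}\le 2$ and $\codCap{\Network_3,f}\ge 2$ separately, the first by a counting argument at the receiver's cut and the second by verifying the explicit code drawn in Figure~\ref{Fig:reverse-butterfly-modulo-sum}. For the upper bound I would use that, since $\Network_3$ is the edge-reversal of the single-source butterfly, the receiver $\receiver$ has exactly two in-edges (the reversed images of the source's two out-edges). Hence under any $(k,n)$ code the decoder sees only $2n$ symbols from $\alphabet=\integer_q$, i.e.\ at most $q^{2n}$ distinct inputs. On the other hand, writing the two source message vectors as $x=(x_1,\ldots,x_k)$ and $y=(y_1,\ldots,y_k)$, the demanded vector $(x_1\oplus y_1,\ldots,x_k\oplus y_k)$ ranges over all of $\integer_q^{\,k}$, so it takes $q^k$ distinct values. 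Because the decoder output is a function of the received symbols, two message configurations with different demanded vectors must yield different received symbols; therefore $q^{2n}\ge q^k$, giving $k/n\le 2$ and thus $\codCap{\Network_3,f}\le 2$. (This is exactly the min-cut bound of~\cite{computing1} applied to the cut $C=\inEdges{\receiver}$, where $\card{C}=2$, $\card{I_C}=2$, and the footprint quantity $R_{I_C,f}$ equals $q=\card{\alphabet}$ since the mod $q$ sum is surjective in each argument; it mirrors Lemma~\ref{lem:2}, the only change being that $\integer_q$ need not be a field.)

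For the lower bound I would check that the $(2,1)$ code of Figure~\ref{Fig:reverse-butterfly-modulo-sum} computes $f$ at rate $2$. Source $\source_1$ transmits $x_1$ toward node $v_1$ and $x_1\oplus x_2$ toward the central node, while $\source_2$ transmits $y_1$ toward $v_2$ and $y_2$ toward the central node; the central bottleneck then carries $x_1\oplus x_2\oplus y_2$ and delivers it to both $v_1$ and $v_2$. Node $v_1$ forms $(x_1\oplus x_2\oplus y_2)\ominus x_1=x_2\oplus y_2$ and node $v_2$ forms $(x_1\oplus x_2\oplus y_2)\oplus y_1=x_1\oplus x_2\oplus y_1\oplus y_2$, and these are precisely the two symbols arriving at $\receiver$. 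The receiver outputs $x_2\oplus y_2$ directly and recovers $x_1\oplus y_1=(x_1\oplus x_2\oplus y_1\oplus y_2)\ominus(x_2\oplus y_2)$, so both mod $q$ sums are produced from a single use of every edge, an achievable computing rate of $k/n=2$. Every node operation is an addition or subtraction in the group $\integer_q$, so $\ominus$ is well defined and the code is valid for all $q\ge 2$; combined with the upper bound this gives $\codCap{\Network_3,f}=2$.

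The cut counting and the arithmetic verification are routine; the only point needing care is confirming that the footprint/counting bound collapses to $\min_{C\in\cuts{\Network_3}}\card{C}=2$ rather than the weaker $(\log_2 q)\,\min_C\card{C}$ of Lemma~\ref{lem:1}. This hinges on the mod $q$ sum being ``field-like'' enough that $R_{I_C,f}=\card{\alphabet}$ even though $\integer_q$ is only a ring, and the direct counting argument above is the cleanest way to sidestep that subtlety.
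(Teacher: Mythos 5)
Your proposal is correct and follows essentially the same route as the paper: the achievability part is exactly the paper's $(2,1)$ code from Figure~\ref{Fig:reverse-butterfly-modulo-sum} (identical edge labels, including the node that subtracts $x_1$ to form $x_2\oplus y_2$ and the receiver recovering the other sum by $\ominus$), and the converse is the cut-set bound at the receiver's two in-edges. The one difference is that the paper obtains the upper bound by citing \cite[Theorem~II.1]{computing1}, whereas you prove it directly by counting: the decoder sees at most $q^{2n}$ received tuples while the demanded vector ranges over all $q^{k}$ elements of $\integer_q^{\,k}$, so $k/n \le 2$. This self-contained counting is a reasonable substitute — it sidesteps the footprint quantity $R_{I_C,f}$ entirely and, as you note, works transparently over the ring $\integer_q$ whether or not $q$ is prime, which is exactly the setting here.
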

\begin{proof}
The upper bound of $2$ on $\codCap{\Network_3,f}$ follows from 
~\cite[Theorem~II.1]{computing1}.
To establish the achievability part, 
let $k=2$ and $n=1$.
Consider the code shown in Figure~\ref{Fig:reverse-butterfly-modulo-sum}, 
where `$\oplus$' indicates the $\bmod$ $q$ sum.
The receiver node $\receiver$ gets $\sourceVec{\source_1}_1 \oplus \sourceVec{\source_2}_1$
and 
$\sourceVec{\source_1}_1 \oplus \sourceVec{\source_2}_1 \oplus \sourceVec{\source_1}_2 \oplus \sourceVec{\source_2}_2 $ 
on its in-edges, 
from which it can compute 
$\sourceVec{\source_1}_2 \oplus \sourceVec{\source_2}_2$.
This code achieves a rate of 2.
\end{proof}
\begin{proof}[Proof of Theorem~\ref{Th:arithmeticSum}:]
We have 
\begin{align*}
\codCap{\Network,f} \le 2/\log_q(2q-1). & \Comment{\cite[Theorem~II.1]{computing1}}
\end{align*} 
To establish the lower bound,
we use the fact the that arithmetic sum of two elements from 
$\alphabet = \{0,1,\ldots,q-1\}$ is equal to their $\bmod$ $2q-1$ sum.
Let the reverse butterfly network have alphabet 
$\hat{\alphabet} = \{0,1,\ldots,2(q-1)\}$.
From Lemma~\ref{Lemma:modSum} (with alphabet $\hat{\alphabet}$),
the $\bmod$ $2q-1$ sum target function can be computed in $\Network$
at rate $2$.
Indeed for every $n \ge 1$, 
there exists a $(2n,n)$ network code
that computes the mod $2q-1$ sum target function at rate $2$.
So for the remainder of this proof, let $k=2n$.
Furthermore, every such code using $\hat{\alphabet}$ can be ``simulated''
using $\alphabet$
by a corresponding $(2n,\ceil{n \log_q \left( 2q-1 \right)})$ code 
for computing the mod  $2q-1$ sum target function, as follows.
Let $n'$ be the smallest integer such that $q^{n'} \ge (2q-1)^{n}$, 
i.e., $n' = \ceil{n \log_q \left( 2q-1 \right)}$.
Let $g: \hat{\alphabet}^n \rightarrow \alphabet^{n'}$ be an injection
(which exists since $q^{n'} \ge (2q-1)^{n}$) and let the function 
$g^{-1}$ denote the inverse of $g$ on it's image $g(\hat{\alphabet})$.
%$g^{-1}: \{0,1,\ldots,q-1\}^{n'} \rightarrow \{0,1,\ldots,2(q-1)\}^n$ such that
%
%$$
%g^{-1}(g(x)) = x \quad \forall \, x \in \{0,1,\ldots,2(q-1)\}^n.
%$$
%
%Further let the function $i_{emb}$ denote the natural 
%embedding
%\footnote{More formally, let $i_{emb} : \{0,\ldots,q-1\}^n \rightarrow \{0,\ldots,2q-1\}^n$ be the natural injective function that maps $x \in \{0,\ldots,q-1\}^n$ to $x \in \{0,\ldots,2q-1\}^n$.} 
%of $\{0,\ldots,q-1\}^n$ into $\{0,\ldots,2q-1\}^n$
%For a source vector $\sourceVec{\source_1}$ over %$\{0,1,\ldots,q-1\}^{2n}$,
%let
% 
%$$
%x^{(i)}=i_{emb}(\sourceVec{\source_1}_{(i-1)n+1},\cdots,\sourceVec{\source_1}_{i(n)}), \; i=1,2.
%$$
%
Let $x^{(1)}, x^{(2)}$ denote the 
first and last, respectively, halves of the message vector $\sourceVec{\source_1}  \in \alphabet^{2n}$,
where we view $x^{(1)}$ and $x^{(2)}$ as lying in $\hat{\alphabet}^n$ 
(since $\alphabet \subseteq \hat{\alphabet}$).
The corresponding vectors $y^{(1)}, y^{(2)}$ for the source $\source_2$ are similarly defined.

Figure~\ref{Fig:arithmeticSum}  illustrates a $(2n,n')$ code for network $\Network$ using alphabet $\alphabet$
where `$\oplus$' denotes the $\bmod$ $2q-1$ sum.
Each of the nodes in $\Network$ 
converts each of the received vectors over $\alphabet$
into a vector over $\hat{\alphabet}$ using the function $g^{-1}$,
then performs coding in
Figure~\ref{Fig:reverse-butterfly-modulo-sum} 
over $\hat{\alphabet}$,
and finally converts the result back to $\alphabet$.
Similarly, the receiver node $T$ computes the component-wise arithmetic sum  of the source message vectors $\sourceVec{\source_1}$
and $\sourceVec{\source_2}$ using
\begin{align*}
&\sourceVec{\source_1} + \sourceVec{\source_2}\\
& = \big(g^{-1}(g(x^{(1)} \oplus x^{(2)} \oplus y^{(1)} \oplus y^{(2)})) \ominus g^{-1}(g(x^{(2)} \oplus y^{(2)})),\\
&\ \ \ \ \ \ \ \ g^{-1}(g(x^{(2)} \oplus y^{(2)}))\big)\\
& =(x^{(1)} \oplus y^{(1)}, x^{(2)} \oplus y^{(2)}).
\end{align*}
\begin{figure}[ht]
    \psfrag{s1}{\mbox{\Large $\source_1$}}
    \psfrag{s2}{\mbox{\Large$\source_2$}}
    \psfrag{T}{\mbox{\Large$\receiver$}}
    %\psfrag{N}{\mbox{\Large$\Network_3$}}
    \psfrag{e1}{\large\mbox{$g(x^{(1)} \oplus x^{(2)})$}}
    \psfrag{e2}{\large\mbox{$g(y^{(2)})$}}
    \psfrag{e3}{\large\mbox{$g(x^{(1)})$}}
    \psfrag{e4}{\large\mbox{$g(x^{(1)} \oplus x^{(2)} \oplus y^{(2)})$}}
    \psfrag{e5}{\large\mbox{$g(y^{(1)})$}}
    \psfrag{e6}{\large\mbox{$g(x^{(2)} \oplus y^{(2)})$}}
    \psfrag{e7}{\large\mbox{$g(x^{(1)} \oplus x^{(2)} \oplus y^{(1)} \oplus y^{(2)})$}}
\begin{center}
\scalebox{.8}{\includegraphics{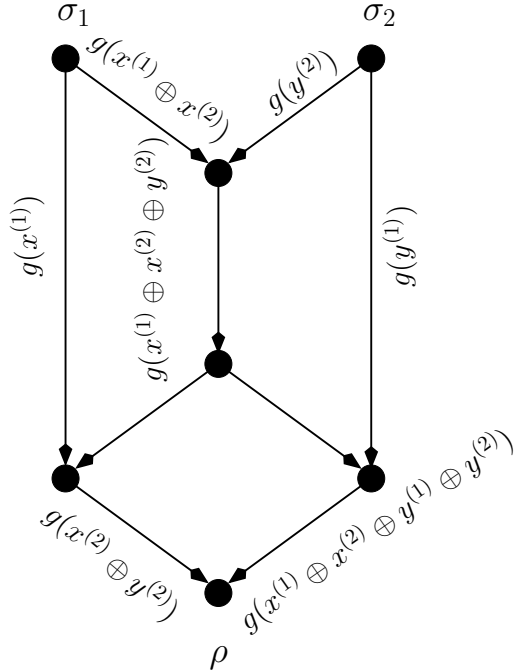}}
\end{center}
\caption{ The reverse butterfly network with a code that computes 
the arithmetic sum target function.
`$\oplus$' denotes $\bmod$ $2q-1$ addition.}
\label{Fig:arithmeticSum}
\end{figure}
%---------

For any $n \ge 1$, 
the above code computes the 
 arithmetic sum target function in $\Network$ at a rate of 
$$
\frac{k}{n'} = \frac{2n}{\ceil{n \log_q \left( 2q-1 \right)}}.
$$
Thus for any $\epsilon > 0$, 
by choosing $n$ large enough we obtain a code 
that computes the arithmetic sum target function, 
and which achieves a computing rate of at least 
$$
\frac{2}{\log_q \left( 2q-1 \right)} - \epsilon . 
$$
\end{proof}

\clearpage

%\subsection{Proof of Theorem~\ref{Th:gbasisCondition}} \label{App:gBasis}

%
\end{document}